\tikzstyle{every picture}=[baseline=-0.25em]
\tikzstyle{none}=[inner sep=0mm]
\tikzstyle{black dot}=[inner sep=0.7mm,minimum width=0pt,minimum height=0pt,fill=black,draw=black,shape=circle]
\tikzstyle{dot}=[black dot]
\tikzstyle{white dot}=[dot,fill=white]
\tikzstyle{box}=[rectangle,fill=white,draw=black, font=\scriptsize, inner sep=2pt]
\tikzstyle{box-no-outline}=[rectangle, draw=white, fill=white, inner sep=2pt]
\tikzstyle{comonoid}=[draw=black, trapezium, trapezium left angle=70, trapezium right angle=70, fill=black!80, inner sep=2.2pt]
\tikzstyle{every loop}=[]
\newcommand{\tikzfig}[1]{
\input{./figures/#1.tikz}
}
\def\fig{}
	\let
	\renewcommand{\tikzfig}[1]{
		\tikzsetnextfilename{#1}
		
\input{./figures/#1.tikz}
}
\newcommand{\eq}[2][~~]{
#1
\underset{\substack{#2}}{=}
#1
}
\newcommand{\fit}[1]{\resizebox{\columnwidth}{!}{#1}}
\newcommand{\interp}[1]{\left\llbracket #1 \right\rrbracket}
\newcommand{\annoted}[3]{{\scriptstyle #1}\left\lbrace\mathrlap{\phantom{#3}}\right.\overbrace{#3}^{#2}}
\newcommand{\bra}[1]{\ensuremath{\left\langle #1 \right|}}
\newcommand{\ket}[1]{\ensuremath{\left|  #1 \right\rangle}}
\newcommand{\cat}[1]{\mathbf{#1}}
\title{Quantum Multiple-Valued Decision Diagrams in Graphical Calculi}
\begin{document}

\author{Renaud Vilmart}

\institute{Université Paris-Saclay, ENS Paris-Saclay, Inria, CNRS, LMF, 91190, Gif-sur-Yvette, France}

\maketitle

\begin{abstract}
Graphical calculi such as the ZH-calculus are powerful tools in the study and analysis of quantum processes, with links to other models of quantum computation such as quantum circuits, measurement-based computing, etc.

A somewhat compact but systematic way to describe a quantum process is through the use of quantum multiple-valued decision diagrams (QMDDs), which have already been used for the synthesis of quantum circuits as well as for verification.

We show in this paper how to turn a QMDD into an equivalent ZH-diagram, and vice-versa, and show how reducing a QMDD translates in the ZH-Calculus, hence allowing tools from one formalism to be used into the other.
\end{abstract}

\section{Introduction}

Graphical calculi for quantum computation such as the ZX-Calculus \cite{interacting}, the ZW-Calculus \cite{ghz-w} and the ZH-Calculus \cite{ZH} are powerful yet intuitive tools for the design and analysis of quantum processes. They have already been succesfully applied to the study of measurement-based quantum computing \cite{mbqc}, error correction through the operations of lattice surgery on surface codes \cite{dBDHP19,de2017zx}, as well as for the optimisation of quantum circuits \cite{Backens2020ThereAB,beaudrap2020fast,Kissinger2020reducing}.
Their strong links with ``sums-over-paths'' \cite{SOP,LvdWK,SOP-Clifford}, as well as their respective complete equational theories \cite{Backens2020ThereAB,HNW,JPV20,euler-zx}, make them good candidates for automated verification \cite{chancellor2016coherent,duncan2014verifying,MSc.Hillebrand}.

An important question, whose answer benefits a lot of these different aspects, is the one of synthesis. Given a description of a quantum process, how do we turn it into a ZX-diagram? This all depends on the provided description. It was already shown how to efficiently get a diagram from quantum circuits \cite{Backens2020ThereAB}, from a measurement-based process \cite{mbqc}, from a sequence of lattice surgery operations \cite{de2017zx}, from ``sums-over-paths'' \cite{LvdWK}, or even from the whole matrix representation of the process \cite{ZXNormalForm}.
Although this last translation is efficient in the size of the matrix, the size of the matrix itself grows exponentially in the number of qubits, so few processes will actually be given in terms of their whole matrix.

The matrix representation however has an advantage: it is (essentially) unique. Two quantum operators are operationally the same if and only if their matrix representations are colinear. This is to be contrasted with all the different previous examples, where for instance two different quantum circuits may implement the same operator.

The form of the ZX-diagram obtained from a quantum state by \cite{ZXNormalForm} is that of a binary tree: a branching in the tree corresponds to a cut in half of the represented vector, while the leaves of the tree exactly correspond to the entries in the vector.
It is however possible to exploit redundancies in the entries of the vector, by merging similar subtrees. Doing so alters the notion of normal form by compacting it, whilst retaining its uniqueness property.
This can be done at the level of the ZX-diagram using its equational theory, and in particular some equality that is reminiscent to that of a bialgebra rule.
Doing this from a proper tree on the other hand gives rise to a quantum version of a decision diagram, which has already been introduced in \cite{MT06}. The so-called quantum multiple-valued decision diagrams (QMDDs) \cite{NWMTD16} have since then been used to synthesise quantum circuits \cite{Niemann2020synthesis} or to perform verification of quantum programs \cite{BW20,burgholzer2020verifying}.

We hence aim in this paper at showing the links between the aforementioned graphical calculi and QMDDs. We in particular show how to translate from one formalism to the other, and how the reduction of a QMDD translates in the graphical languages.
As a consequence, tools developped in one formalism may be transported and used in the other. Additionally, this result together with the aforementionned results in the graphical languages, relates the QMDDs to measurement-based computation, lattice surgery operations, ``sums-over-paths'', etc.

In Section \ref{sec:ZH}, we present the ZH-calculus, the graphical language we will use in this paper for convenience. We then present in Section \ref{sec:SQMDD} the quantum multiple-valued decision diagrams. In Section \ref{sec:SQMDD-to-ZH}, we show how to turn a QMDD into a ZH-diagram that represents the same quantum operator. In Section \ref{sec:SQMDD-form}, we show an algorithm to turn a ZH-diagram into QMDD form, which can be used to get the QMDD description of any ZH-diagram.

\section{The ZH-Calculus}
\label{sec:ZH}

We aim in this paper at showing links between quantum multiple-valued decision diagrams and graphical languages for quantum computing: ZX, ZW and ZH. Since any of the three languages can be translated in the other two \cite{ZH,HNW,JPV}, we may simply choose one. It so happens that the closest to QMDDs we have is the ZH-Calculus. We hence present here this language.

\subsection{ZH-Diagrams}

A ZH-diagram $D:k\to \ell$ with $k$ inputs and $\ell$ outputs is generated by:\\
\begin{minipage}{0.55\columnwidth}
\begin{itemize}
\item $Z_m^n:n\to m::~~\tikzfig{Z-spider}~~$ called Z-spiders
\item $H_m^n(r):n\to m::~~\tikzfig{H-spider}~~$ called H-spiders
\end{itemize}
\end{minipage}\hfill
\begin{minipage}{0.43\columnwidth}
\begin{itemize}
\item $id:1\to1::~~\tikzfig{id}$
\item $\sigma:2\to 2::~~\tikzfig{swap}~~$
\item $\eta:0\to 2::~~\tikzfig{cap}~~$
\item $\epsilon:2\to 0::~~\tikzfig{cup}~~$
\end{itemize}
\end{minipage}\\
where $n,m\in\mathbb N$ and $r\in\mathbb C$. In the following, we may write H-spiders with no parameter, in which case, the implied parameter is $-1$ by convention.\\
Diagrams can then be composed either sequentially: $\tikzfig{sequence}$ (if the number of output of the top diagram matches the number of inputs of the bottom one), or in parallel: $\tikzfig{tensor}$.\\
It is customary to define the additional two ``X-spiders'':
$$\tikzfig{X-spider}\qquad\text{ and }\qquad\tikzfig{X-spider-neg}$$

ZH-diagrams can be understood as quantum operators thanks to the \emph{standard interpretation} $\interp{.}$ which maps any ZH-diagram $D:n\to m$ to a complex matrix $\interp{D}\in\mathbb C^{2^m}\times \mathbb C^{2^n}$, and which is inductively defined as:\\
\phantom{.}\hfill$\interp{\tikzfig{sequence}}=\interp{D_2}\circ\interp{D_1}$\hfill$\interp{\tikzfig{tensor}}=\interp{D_1}\otimes\interp{D_2}$\hfill~\\
{\renewcommand*{\arraystretch}{0.8}
\setlength{\arraycolsep}{2pt}
\phantom{.}\hfill$\interp{\tikzfig{Z-spider}}=\annoted{2^m}{2^n}{\begin{pmatrix}1&0&\cdots&\cdots&0\\[-0.5em]0&0&&&\vdots\\[-0.5em]\vdots&&\ddots&&\vdots\\[-0.5em]\vdots&&&0&0\\0&\cdots&\cdots&0&1\end{pmatrix}}$\hfill$\interp{\tikzfig{H-spider}}=\annoted{2^m}{2^n}{\begin{pmatrix}1&\cdots&\cdots&1\\[-0.5em]\vdots&\ddots&&\vdots\\[-0.5em]\vdots&&1&1\\1&\cdots&1&r\end{pmatrix}}$\hfill~}\\
\phantom{.}\hfill$\interp{~\tikzfig{id}~}=\begin{pmatrix}1&0\\0&1\end{pmatrix}$\hfill$\interp{\tikzfig{swap}}=\begin{pmatrix}1&0&0&0\\0&0&1&0\\0&1&0&0\\0&0&0&1\end{pmatrix}$\hfill$\interp{\tikzfig{cap}}=\interp{\tikzfig{cup}}^\dagger=\begin{pmatrix}1\\0\\0\\1\end{pmatrix}$\hfill~

The standard interpretation of the X-spiders can then be obtained by composition. We underline that 
$\interp{\tikzfig{ket-0}}=\ket0:=\begin{pmatrix}1\\0\end{pmatrix}$ and that
$\interp{\tikzfig{ket-1}}=\ket1:=\begin{pmatrix}0\\1\end{pmatrix}$.\\
We have used here the Dirac notion, were a quantum state i.e. a vector is denoted $\ket\psi$. We recall that $\bra\psi$ is defined as $\ket\psi^\dagger$ where $(.)^\dagger$ yields the transconjugate of a matrix.

The language is universal, i.e.~any quantum operator can be represented as a ZH-diagram~\cite{ZH}:
$$\forall f\in \mathbb C^{2^m}\times \mathbb C^{2^n},~\exists D\in\cat{ZH}(n,m),~~\interp{D}=f$$

An important result that we will use in the following is the fact that there is an isomorphism between $\cat{ZH}(n,m)$ and $\cat{ZH}(0,m+n)$, i.e.~any ZH-diagram $D:n\to m$ can be turned into a \emph{state} $D':0\to m+n$ in a reversible manner. This is called the \emph{map/state} duality \cite{interacting,Choi1975completely,Jamiolkowski1972linear}.\\
Graphically, this isomorphism is obtained by $\psi_{n,m}(D)=\tikzfig{map-state-duality-1-bis}$ and $\psi^{-1}_{n,m}(D')=\tikzfig{map-state-duality-2-bis}$, and the fact that this is indeed an isomorphism comes from the fact that:\\
\centerline{$\tikzfig{snake}\qquad\qquad\tikzfig{swap-involution}$}\\
Notice that this definition of the map/state duality differs from more usual ones by a rearranging of the wires. This is useful in the following to better relate state-QMDDs to proper QMDDs.

\subsection{Equational Theory}

The previous equalities constitute the first of a series of axioms that makes up an equational theory for the language. The axioms are summed up in figure \ref{fig:ZH-rules}.

\begin{figure}[!htb]
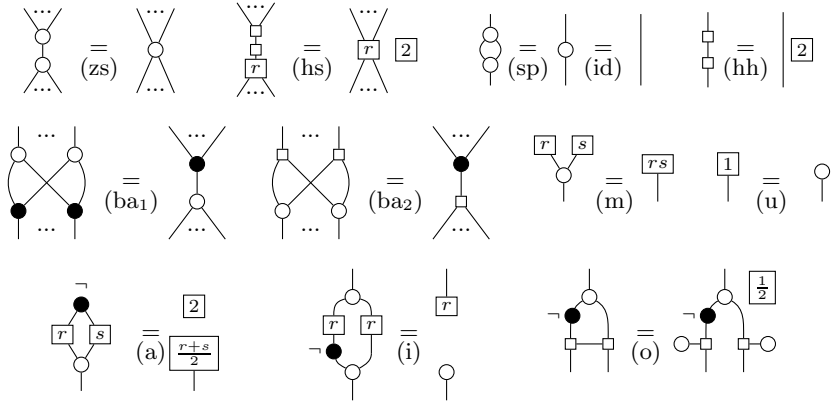

\phantom{.}\hfill$\tikzfig{rule-Z-spider}$\hfill$\tikzfig{rule-H-spider}$\hfill$\tikzfig{rule-identity}$\hfill$\tikzfig{rule-H-involution}$\hfill~\\[1em]
\phantom{.}\hfill$\tikzfig{rule-bialgebra}$\hfill$\tikzfig{rule-bialgebra-2}$\hfill$\tikzfig{rule-product}$\hfill$\tikzfig{rule-ket-plus}$\hfill~\\[1em]
\phantom{.}\hfill$\tikzfig{rule-average}$\hfill$\tikzfig{rule-inverse}$\hfill$\tikzfig{rule-O}$\hfill~
\caption{Rules of the ZH-Calculus}
\label{fig:ZH-rules}
\end{figure}
The previous equality is actually part of an implicit set of axioms of the language, aggregated under the paradigm ``only connectivity matters'', which states that all deformations of diagrams are allowed.

When we can turn a diagram $D_1$ into another diagram $D_2$ by a succession of the transformations in Figure \ref{fig:ZH-rules}, we usually write $\operatorname{ZH}\vdash D_1=D_2$, however, to keep things simple, we will abbreviate it as $D_1=D_2$ in this paper. 
This set of rules was proven to be sound and complete \cite{ZH}, that is:
$$\forall D_1,D_2\in \cat{ZH}(n,m),~~ \interp{D_1}=\interp{D_2}\iff \operatorname{ZH}\vdash D_1=D_2$$

A list of useful lemmas, proven in \cite{backens2021ZHcompleteness}, and that will be used in the proofs of the paper is given in Section \ref{sec:useful-lemmas} in the appendix.

In the following, since $\interp{\tikzfig{scalar-r}}=r$ and since it represents a global scalar, we may ``forget the box'', and write $\tikzfig{scalar-r}~\tikzfig{D}$ simply as $r~\tikzfig{D}$. Notice that thanks to Lemma \ref{lem:product-gen}, we have $\tikzfig{rule-product-0}$ , which means that following the convention, different global scalars will get multiplied. For instance, we have $r_1\cdot D_1\otimes r_2\cdot D_2 = (r_1r_2)\cdot (D_1\otimes D_2)$.

\subsection{New Constructions}

To make the link between decision diagrams and ZH-diagrams, we feel it is simpler to introduce two new constructions:\\
\def\fig{gadget-def}
$\begin{tikzpicture}
	\begin{pgfonlayer}{nodelayer}
		\node [style=none] (13)  at (0.0, -0.438) {};
		\node [style=comonoid, shape border rotate=180] (14)  at (0.0, 0.062) {};
		\node [style=box] (15)  at (-0.25, 0.438) {$a$};
		\node [style=box] (16)  at (0.25, 0.438) {$b$};
	\end{pgfonlayer}
	\begin{pgfonlayer}{edgelayer}
		\draw (14) to (13.center);
		\draw (14) to (16);
		\draw (15) to (14);
	\end{pgfonlayer}
\end{tikzpicture}~:=~\frac12\begin{tikzpicture}
	\begin{pgfonlayer}{nodelayer}
		\node [style=white dot] (42)  at (0.05, -0.337) {};
		\node [style=none] (43)  at (0.05, -0.863) {};
		\node [style=box] (44)  at (-0.275, 0.863) {$a$};
		\node [style=box] (45)  at (0.375, 0.863) {$b$};
		\node [style=box] (46)  at (0.375, 0.163) {};
		\node [style=box] (47)  at (0.375, 0.512) {};
		\node [style=box] (48)  at (-0.275, 0.163) {};
		\node [style=box] (49)  at (-0.275, 0.512) {};
		\node [style=dot] (50)  at (0.05, 0.163) {};
		\node [style=none, font={\scriptsize}] (51)  at (0.05, 0.413) {$\neg$};
		\node [style=box] (53)  at (-0.375, -0.462) {$2$};
	\end{pgfonlayer}
	\begin{pgfonlayer}{edgelayer}
		\draw (42) to (48);
		\draw (42) to (46);
		\draw (42) to (43.center);
		\draw (44) to (48);
		\draw (45) to (46);
		\draw (46) to (47);
		\draw (48) to (46);
		\draw (53) to (42);
	\end{pgfonlayer}
\end{tikzpicture}$ and 
\def\fig{monoid-def}
$~:=~\eq{\ref{lem:0-box}}\frac12~\begin{tikzpicture}
	\begin{pgfonlayer}{nodelayer}
		\node [style=white dot] (54)  at (0.05, -0.362) {};
		\node [style=none] (55)  at (0.05, -0.887) {};
		\node [style=box] (58)  at (0.375, 0.137) {$b$};
		\node [style=box] (60)  at (-0.275, 0.137) {$a$};
		\node [style=dot] (62)  at (0.05, 0.637) {};
		\node [style=none, font={\scriptsize}] (63)  at (0.05, 0.887) {$\neg$};
		\node [style=box] (65)  at (-0.375, -0.487) {$2$};
	\end{pgfonlayer}
	\begin{pgfonlayer}{edgelayer}
		\draw (54) to (60);
		\draw (54) to (58);
		\draw (54) to (55.center);
		\draw (62) to (60);
		\draw (62) to (58);
		\draw (65) to (54);
	\end{pgfonlayer}
\end{tikzpicture}$\\
of which we may compute the standard interpretation:\\
$\interp{} = \begin{pmatrix}1&0&0&0\\0&1&1&0\end{pmatrix}$ and $\interp{\tikzfig{gadget-upward}} = \begin{pmatrix}1&0&1&0\\0&0&0&1\\0&1&0&0\\0&0&0&0\end{pmatrix}$.

$$ corresponds to the diagram $\tikzfig{ZW-monoid}$ of the ZW-Calculus \cite{ghz-w}, it is very close to the so-called W-state. It can also by found in \cite{backens2021ZHcompleteness} in the context of the ZH-calculus. The interaction of this building block with classical states is given by:
\begin{multicols}{2}
\begin{lemma}
\label{lem:ket-0-monoid}
\def\fig{monoid-ket0}
\begin{align*}

\eq{}\begin{tikzpicture}
	\begin{pgfonlayer}{nodelayer}
		\node [style=none] (75)  at (0.0, -0.263) {};
		\node [style=box] (76)  at (0.0, 0.263) {$a{+}b$};
	\end{pgfonlayer}
	\begin{pgfonlayer}{edgelayer}
		\draw (76) to (75.center);
	\end{pgfonlayer}
\end{tikzpicture}
\end{align*}
\end{lemma}
\begin{lemma}
\label{lem:ket-1-monoid}
\def\fig{monoid-ket1}
\begin{align*}

\eq{}
\end{align*}
\end{lemma}
\end{multicols}
\begin{proof}
In appendix at page \pageref{prf:monoid-ket-0-1}.
\end{proof}

The pair $\left(,\tikzfig{ket-0}\right)$ can be seen as a commutative monoid, which means that, on top of Lemma \ref{lem:ket-0-monoid}, the following is true:
\begin{multicols}{2}
\begin{lemma}
\label{lem:monoid-associativity}
\def\fig{monoid-gen}
\begin{align*}
\tikzfig{monoid-associativity}
\end{align*}
\end{lemma}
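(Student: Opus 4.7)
The plan is to prove the associativity equation by unfolding the definition of the gadget on both sides and reducing the two resulting diagrams to a common symmetric form. Both the left-associated and right-associated three-fold compositions yield, after substitution of the defining equation $\tfrac12$ times the base block, a diagram carrying an overall scalar of $\tfrac14$ and built entirely from Z- and H-spiders.

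The reduction proceeds in two stages. First I would apply spider fusion: the Z-spider rule merges the adjacent Z-spiders that become connected when two gadgets are composed, and the H-spider rule collapses the resulting chain of parameter-$(-1)$ H-spiders. Second, I would use the two bialgebra rules of Figure~\ref{fig:ZH-rules} to push Z-spiders through H-spiders, reorganizing the wiring into a manifestly symmetric three-input pattern. Both sides should converge to the same fully symmetric three-ary generalization of the gadget, from which the equality is then immediate. Commutativity, which is already implicit in the ``only connectivity matters'' paradigm applied to the defining diagram, is used freely to swap inputs when needed.

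The main obstacle is bookkeeping: carefully tracking the two $\tfrac12$ scalar factors on each side through the rewrites and matching the bialgebra rewrites to the specific configurations that arise after unfolding, since the H-spider with parameter $-1$ interacts with the bialgebra rule in a way that can produce extra Hadamard boxes which must then be cancelled via the H-involution rule. A shorter route, which exploits completeness of the ZH-calculus, is to compute the standard interpretation directly: using the already-given matrix for the gadget, both sides yield the same $2\times 8$ matrix sending $\ket{abc}$ to $\ket{a+b+c}$ when $a+b+c\leq 1$ and to $0$ otherwise. Invoking completeness then closes the proof without any diagrammatic manipulation, at the cost of losing the explicit derivation that would be useful elsewhere in the paper.
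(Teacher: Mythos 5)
Your proposal is correct in substance but does not follow the paper's route: the paper does not prove associativity at all, it simply cites \cite{backens2021ZHcompleteness} (where the monoid appears as part of the W-algebra structure) and only argues commutativity from the symmetry of the defining diagram. Of your two suggested routes, the one that actually constitutes a proof is the semantic one: the interpretation of the two-input gadget is $\left(\begin{smallmatrix}1&0&0&0\\0&1&1&0\end{smallmatrix}\right)$, and your claim that both associations yield the $2\times 8$ matrix sending $\ket{abc}$ to $\ket{a+b+c}$ when $a+b+c\leq 1$ and to the zero vector otherwise checks out (the potentially problematic inputs such as $\ket{110}$ and $\ket{011}$ give $0$ under both bracketings), so completeness of ZH closes the argument. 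Your first, purely diagrammatic route is only a plan — ``both sides should converge to the same fully symmetric three-ary pattern'' is exactly the part that requires the nontrivial bialgebra bookkeeping, and as written it would not stand alone as a derivation; the actual rewrite sequence is precisely what \cite{backens2021ZHcompleteness} supplies. One caveat worth flagging on the completeness shortcut: it is legitimate here only because completeness is an external, previously established result, but it is mildly circular in spirit, since the completeness proof in \cite{backens2021ZHcompleteness} itself builds up the algebraic properties of this very monoid; if one wanted a self-contained equational derivation (as the paper's other lemmas provide), the citation or an explicit rewrite sequence is the honest answer.
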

\begin{lemma}
\label{lem:monoid-commutativity}
\begin{align*}
\tikzfig{monoid-commutative}
\end{align*}
\end{lemma}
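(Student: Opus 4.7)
The plan is to unfold the monoid vertex into its defining ZH-primitives and exhibit the left--right symmetry of the resulting diagram. By definition, $\tikzfig{ZW-monoid}$ is obtained (up to the scalar $\tfrac12$) as a gadget built from the symmetric primitives $Z_m^n$ and $H_m^n(r)$, together with the cap $\eta$. Because both Z-spiders and H-spiders are genuine spiders, i.e.\ their legs are interchangeable thanks to the ``only connectivity matters'' paradigm, the two input wires of the monoid attach to the same underlying generator in positions that differ only by a topological deformation.

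Concretely, I would first replace the monoid vertex on each side of the desired equality by its expansion given after the definition, so that the two input wires become explicit legs of (the same) Z-spider, routed through an H-spider and a cap. I would then show that precomposing this expansion with the swap $\sigma$ yields a diagram whose underlying graph is isotopic to the original: the two input legs can be slid along the Z-spider (which absorbs any permutation of its legs) until they return to their initial order. No nontrivial rule of Figure~\ref{fig:ZH-rules} is required beyond the spider-leg symmetry already implicit in the calculus.

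As a safety net, should the diagrammatic deformation turn out to be more delicate than expected, one can alternatively invoke completeness of the ZH-calculus: a direct inspection of the standard interpretation
$\interp{\tikzfig{ZW-monoid}} = \begin{pmatrix}1&0&0&0\\0&1&1&0\end{pmatrix}$
shows that the middle two columns coincide, so the operator is symmetric under exchange of its two inputs, which by completeness gives the equality in the equational theory.

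The main obstacle is not mathematical but notational: checking that the conventions used in the expansion of the gadget really do place the two input wires in symmetric positions with respect to the internal Z-spider. Once the expansion is drawn out, the result is essentially immediate from spider-leg symmetry, and the proof reduces to a deformation argument.
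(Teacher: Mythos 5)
Your argument is correct and matches the paper's own justification, which simply states that commutativity is ``obvious from the definition'': the defining diagram of the monoid vertex is symmetric under exchange of its two input legs, so precomposition with $\sigma$ is a mere deformation under the ``only connectivity matters'' paradigm. Your completeness-based fallback via the standard interpretation is also valid but unnecessary.
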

\end{multicols}
\begin{proof}
Proof for associativity can be found in \cite{backens2021ZHcompleteness}, and commutativity is obvious from the definition.
\end{proof}

This allows us to use a generalised version of this diagram, with arbitrary number of inputs, defined inductively as follows:
\def\fig{monoid-gen-def}
$

~:=~$ and $

~:=~\begin{tikzpicture}
	\begin{pgfonlayer}{nodelayer}
		\node [style=white dot] (66)  at (0.125, -0.112) {};
		\node [style=none] (67)  at (0.125, -0.637) {};
		\node [style=box] (69)  at (0.3, 0.637) {$\frac{a+b}2$};
		\node [style=box] (73)  at (-0.3, -0.237) {$2$};
	\end{pgfonlayer}
	\begin{pgfonlayer}{edgelayer}
		\draw (66) to (69);
		\draw (66) to (67.center);
		\draw (73) to (66);
	\end{pgfonlayer}
\end{tikzpicture}
$. 
In practice, in the following, we will not use the case $0$ inputs. We will however use the case $1$ extensively, which we will assume simplified:
\def\fig{monoid-gen-def}
$

~=~\input{./figures/\fig/\fig_05.tikz}
$

The second diagram is the ZH version of the gadget used in the normal forms of \cite{ZXNormalForm}, and it can be understood as follows:
$$\interp{\tikzfig{gadget-upward}} = \bra0\otimes\interp{~\def\fig{gadget-0-ctrld}~}+\bra1\otimes\interp{~\def\fig{gadget-1-ctrld}~}$$
so it is a diagram that builds $\def\fig{gadget-0-ctrld}$ and whose leftmost wire controls whether or not the outputs are swapped. This can be checked diagrammatically:
\begin{multicols}{2}
\begin{lemma}
\label{lem:ket-0-gadget-ctrl}
\def\fig{gadget-0-ctrld}
\begin{align*}

\eq{}
\end{align*}
\end{lemma}
\begin{lemma}
\label{lem:swapped-gadget-legs}
\def\fig{gadget-swap}
\begin{align*}

\eq{}\input{./figures/\fig/\fig_05.tikz}
\end{align*}
\end{lemma}
\begin{lemma}
\label{lem:ket-1-gadget-ctrl}
\def\fig{gadget-1-ctrld}
\begin{align*}

\eq{}
\end{align*}
\end{lemma}
\begin{lemma}
\label{lem:ket-0-top-gadget}
\def\fig{gadget-ket0-top}
\begin{align*}

\eq{}
\end{align*}
\end{lemma}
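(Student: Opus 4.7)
The plan is to unfold the gadget according to its definition and then leverage the established interactions between $\ket 0$ and the monoid vertex, together with basic ZH rewrites. First, I would replace the gadget by the defining expression $\frac12\,\tikzfigc{01}$ (from the \texttt{gadget-def} figure), exposing the underlying monoid vertex, the H-spider, and the connecting wires that implement the ``control-swap'' behaviour described in the interpretation decomposition just before the lemma.

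Next, I would push the $\ket 0$ on the top wire into the exposed structure. Since the top wire terminates at the monoid vertex (rather than at the control H-spider handled by Lemma~\ref{lem:ket-0-gadget-ctrl}), I expect to apply Lemma~\ref{lem:ket-0-monoid} directly. This absorbs the $\ket 0$, collapses the monoid, and leaves a subdiagram in which the surviving wires pass straight through—possibly after using the commutativity of the monoid (Lemma~\ref{lem:monoid-commutativity}) to line up the inputs in the expected order.

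After this collapse I would apply routine ZH rewrites—Z-spider fusion, the H-spider rule, the identity rule, and H-involution as needed (all from Figure~\ref{fig:ZH-rules})—to reshape what remains into the right-hand side $\tikzfigc{04}$ of the statement. The last bookkeeping task is scalar tracking: the $\frac12$ from the gadget definition must combine with any scalar produced by Lemma~\ref{lem:ket-0-monoid} and by the $0$-box manipulation underlying the monoid definition, and I expect these factors to cancel out to give the plain diagram on the right.

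The main obstacle will be scalar bookkeeping and choosing the right order of rewrites so that the top wire interacts with the monoid vertex before it gets tangled with the control H-spider; once the $\ket 0$ reaches the monoid and Lemma~\ref{lem:ket-0-monoid} fires, the remaining rewrites are expected to be mechanical, mirroring the pattern of proof used for Lemmas~\ref{lem:ket-0-gadget-ctrl} and~\ref{lem:ket-1-gadget-ctrl}.
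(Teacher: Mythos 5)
There is a genuine gap, and it lies in the structural assumption on which your whole plan rests. The top wire of the gadget does \emph{not} terminate at a monoid vertex: the monoid with interpretation $\begin{pmatrix}1&0&0&0\\0&1&1&0\end{pmatrix}$ is a separate construction, used in the layer encoding to merge several parent edges \emph{above} a gadget, and it does not occur inside the gadget's definition (which unfolds into Z-spiders and H-boxes only). So Lemma~\ref{lem:ket-0-monoid} has nothing to fire on. The mismatch also shows semantically: Lemma~\ref{lem:ket-0-monoid} is the unit law of the monoid, so if it applied it would indeed leave a wire ``passing straight through'', as you predict; but the right-hand side of the present lemma destroys the gadget outright --- both lower outputs become $\ket0$ and the control leg is disconnected. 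This is precisely why rewrite~(\ref{eq:ket0-gadget}) in the proof of Proposition~\ref{prop:preserved-semantics} is described as destroying all nodes that are not descendants of the left child: nothing survives to pass through.

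The paper's actual derivation is short and relies on the fact that the top wire enters a Z-spider inside the unfolded gadget: after unfolding the definition (producing the $\frac12$), the $\ket0$ is copied through that Z-spider by the bialgebra rule $(ba_1)$, the resulting copies of $\ket0$ are absorbed by the internal H-boxes via Lemma~\ref{lem:classical-on-H-box}, and a final application of $(ba_1)$ merges the leftover pieces and absorbs the scalar. Your instinct to mirror the proofs of Lemmas~\ref{lem:ket-0-gadget-ctrl} and~\ref{lem:ket-1-gadget-ctrl} is right in spirit, but the correct toolset here is $(ba_1)$ plus Lemma~\ref{lem:classical-on-H-box}; Lemmas~\ref{lem:ket-0-monoid} and~\ref{lem:ket-1-monoid} are the right tools only later, for rewrite~(\ref{eq:ket0-monoid}), when the propagated $\ket0$'s reach a genuine merging node.
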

\end{multicols}

\begin{proof}
In appendix at page \pageref{prf:gadget-ket-0-1}.
\end{proof}

\section{Quantum Multiple-valued Decision Diagrams}
\label{sec:SQMDD}

Quantum multiple-valued decision diagrams (QMDD) were introduced to store quantum unitaries in a way that is efficient in certain cases, similarly to binary decision diagrams for representing decision problems. In the following, we use the map/state duality to turn every map into a state. As a consequence, QMDDs are adapted to ``state-QMDDs'' as follows:

\begin{definition}
A \emph{state-QMDD} (SQMDD) is a tuple $(s, V, u_0, t, H, h, E, \omega)$ where:
\begin{itemize}
\item $s\in \mathbb C$ is called the overall scalar
\item $V\neq\emptyset$ is a set of vertices
\item $u_0,t\in V$ are two distinguished vertices, called respectively \emph{root} and \emph{terminal node}
\item $u_0=t\iff V=\{t\}$, i.e.~$u_0$ and $t$ coincide only if $V$ only contains one vertex
\item $H\in\mathbb N$ is the \emph{height} of the SQMDD
\item $h: V\mapsto \{0,...,H\}$ maps each vertex to their height in the SQMDD
\item $h(u)=0\iff u=t$
\item $E:V\setminus\{t\}\to V^2$ maps any non-terminal vertex to a pair of vertices. These are the edges of the SQMDD
\item If $E(u)=(v_0,v_1)$ then $h(v_i)<h(u)$ for $i\in\{0,1\}$
\item $\forall v\in V\setminus\{u_0\}, \exists u\in V, v\in E(u)$, i.e. all vertices have at least one parent
\item $\omega:V\setminus\{t\}\to \mathbb C^2$ maps edges to complex weights
\end{itemize}
\end{definition}
Notice that the requirement on the heights of two endpoints of an edge also enforces the fact that an SQMDD is acyclical.

When drawing a SQMDD, it is relevant to set all the same-height nodes at the same height in the representation. It is also customary to omit writing weights of $1$, as well as the vertices' names (we write instead their height). We highlight the root by an incoming wire, and distinguish the terminal node by drawing it as a square, instead of a circle for the other nodes (following \cite{MT06}'s convention). Finally, we display the overall scalar as a weight on the incoming edge of the root, and if $h(u_0)=H$, we omit $H$, which we otherwise specify at the top of the SQMDD.

\begin{example}
\label{ex:QMDD}
The diagram:
\[\def\fig{example-QMDD}\]
is a graphical representation of an SQMDD.
\end{example}

\begin{remark}
SQMDDs can be seen as a more fine-grained version of proper QMDDs. Indeed, using the map/state duality $\psi_{n,m}$ defined above amounts to decomposing a QMDD vertex \tikzfig{QMDD-vertex} into \tikzfig{QMDD2SQMDD-vertex}. This for instance allows us to apply a swap $\tikzfig{swap}$ only on one side of a circuit/diagram, while the associated QMDD notion of variable reordering requires swapping the whole qubit i.e. apply a swap on both sides of the circuit. A similar presentation of QMDDs for states can be found in \cite{ZHMW20}.
\end{remark}

We can define two natural constructions from any SQMDD of height $\geq1$:
\begin{definition}
Let $\mathfrak D=(s, V, u_0, t, H, h, E, \omega)$ be a SQMDD with $H\geq1$. We denote $\ell(\mathfrak D)$ the diagram obtained from $\mathfrak D$ by:
\begin{itemize}
\item If $h(u_0)= H$:
\begin{itemize}
\item replacing the overall scalar $s$ by $s\cdot\pi_0\omega(u_0)$
\item removing all nodes (and subsequent edges) that cannot be reached by $\pi_0E(u_0)$
\item replacing the root $u_0$ by $\pi_0E(u_0)$
\end{itemize}
\item decreasing the height $H$ by $1$
\end{itemize}
Where $\pi_0$ and $\pi_1$ are the usual left and right projectors of a pair.\\
We can similarly build $r(\mathfrak D)$ by replacing $\pi_0$ by $\pi_1$ in the definition.
\end{definition}



\begin{example}
With $\mathfrak D$ the SQMDD defined in the previous example:\\
\def\fig{example-left-right-QMDD}$\ell(\mathfrak D)=$ and $r(\mathfrak D)=\input{./figures/\fig/\fig_05.tikz}$
\end{example}


We can now use these constructions to understand SQMDDs as representations of quantum states.
\begin{definition}
For any SQMDD $\mathfrak D$, $\interp{\mathfrak D}$ is the (unnormalised) quantum state inductively defined as:
\begin{itemize}
\item $\interp{\raisebox{1ex}{$s$}\!\tikzfig{terminal-node}}=s=s\ket{}$
\item $\interp{\mathfrak D}=\ket0\otimes\interp{\ell(\mathfrak D)}+\ket1\otimes\interp{r(\mathfrak D)}$
\end{itemize}
(here $\ket{}$ is used to represent the vector $\begin{pmatrix}1\end{pmatrix}$ i.e. the canonical 0-qubit state).
\end{definition}

\begin{example}
\setcounter{MaxMatrixCols}{20}
With $\mathfrak D$ the diagram of example \ref{ex:QMDD}:
$$\interp{\mathfrak D}=\frac3{\sqrt2}\begin{pmatrix}1&0&0&0&\frac1{\sqrt2}&\frac1{\sqrt2}&\frac1{\sqrt2}&\frac1{\sqrt2}&-\frac1{\sqrt2}&0&0&0&-i&0&-i&0\end{pmatrix}^T$$
\end{example}

The definition of SQMDDs given here does not only differ from that of \cite{MT06} by the fact that we only consider states, but also by the fact that our definition is laxer. As a consequence, different SQMDDs can have the same interpretation. To address this problem, we can give a set of rewrite rules that will reduce the "size" of the SQMDD while preserving its interpretation. We give in Figure \ref{fig:QMDD-simplification} this set of rewrite rules, expressed graphically. Notice that the rules use the graphical notation to encompass transformations on the root and the overall scalar.

\begin{figure}[!htb]
\def\fig{QMDD-rule-scalar-distrib}$\input{./figures/\fig/\fig_00.tikz}\underset{\substack{a\neq0\\a\neq1}}{\to}\input{./figures/\fig/\fig_01.tikz}$\hfill\def\fig{QMDD-rule-scalar-distrib-2}$\input{./figures/\fig/\fig_00.tikz}\underset{\substack{b\neq1}}{\to}\input{./figures/\fig/\fig_01.tikz}$\hfill\tikzfig{QMDD-rule-0-to-terminal}
\\[1em]
\tikzfig{QMDD-rule-remove-coleaves}\hfill\tikzfig{QMDD-rule-variable-skipping}\hfill$\underset{\substack{\text{here the nodes with height $h'$}\\\text{and $h''$ can be the same node}}}{\tikzfig{QMDD-rule-regrouping-general}}$
\caption{Simplification rules for QMDDs}
\label{fig:QMDD-simplification}
\end{figure}

It is fairly easy to see that this rewriting terminates. Let us denote $\deg(t)$ the arity of $t$, i.e. the number of occurrences of $t$ in $E(V\setminus\{t\})$. Notice that $\deg(t)<2|V|$. For $u\in V\setminus\{t\}$, define:
\[\delta(u):=\begin{cases}0&\text{ if } \pi_0(\omega(u))=1\lor (\pi_0(\omega(u))=0\land \pi_1(\omega(u))\in\{0,1\})\\1&\text{ otherwise }\end{cases}\]
Consider now for any SQMDD the quantity:
$$\left(|V|, 2|V|-\deg(t), \sum\limits_{u\mid h(u)=1}\hspace*{-1em}\delta(u), ..., \sum\limits_{u\mid h(u)=H}\hspace*{-1em}\delta(u)\right)$$
and the lexicographical order over these. We can see that this quantity is reduced by any rewrite.

When none of these rewrite rules can be applied on an SQMDD, it is called irreducible. Notice that in an irreducible SQMDD, using the notions of \cite{MT06}, no non-terminal vertex is redundant, and all non-terminal vertices are normalised and unique. Hence, an irreducible SQMDD is what \cite{MT06} properly calls a QMDD, from which we get:

\begin{theorem}[\cite{MT06}]
\label{thm:QMDD-unique}
For any quantum state $\ket{\psi}\in\mathbb C^{2^n}$, there exists a unique irreducible SQMDD $\mathfrak D$ (of height $n$) such that $\interp{\mathfrak D}=\ket\psi$.
\end{theorem}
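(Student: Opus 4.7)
The plan is to prove existence and uniqueness separately by induction on the height $n$, essentially recasting the argument of Miller--Thornton in the present setting with the distinction between raw SQMDDs and their irreducible forms.

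For existence, I would proceed by induction on $n$. For the base case $n=0$, the state $\ket\psi\in\mathbb C$ is a scalar $s$, represented by the SQMDD with a single terminal node and overall scalar $s$. For $n>0$, I would decompose $\ket\psi = \ket0\otimes\ket{\psi_0}+\ket1\otimes\ket{\psi_1}$, obtain by induction irreducible SQMDDs $\mathfrak D_0,\mathfrak D_1$ of height $n-1$ for $\ket{\psi_0}$ and $\ket{\psi_1}$, and then build a height-$n$ SQMDD whose root points to the roots of $\mathfrak D_0$ and $\mathfrak D_1$ with appropriate weights. This may fail to be irreducible, so I would then apply the rewrite rules of Figure~\ref{fig:QMDD-simplification} exhaustively; termination of this process is guaranteed by the well-founded measure already exhibited in the excerpt, and each rule preserves $\interp{\cdot}$, so the resulting irreducible SQMDD still represents $\ket\psi$.

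For uniqueness, I would again proceed by induction on $n$. The key structural fact to establish is that in an irreducible SQMDD $\mathfrak D$ representing $\ket\psi$: (i) the overall scalar and the outgoing weights of the root are uniquely determined by $\ket\psi$, via the normalisation rules (first and second rules of Figure~\ref{fig:QMDD-simplification}) which force a canonical choice such as ``the leftmost nonzero outgoing weight of every non-terminal node equals $1$''; (ii) the sub-SQMDDs $\ell(\mathfrak D)$ and $r(\mathfrak D)$ are themselves irreducible, representing $\ket{\psi_0}$ and $\ket{\psi_1}$ respectively; and (iii) any two nodes at the same height representing the same (normalised) sub-state are identified, by the regrouping rule. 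Combining (i)--(iii), the root is uniquely determined by $\ket\psi$, the two subtrees are uniquely determined by $\ket{\psi_0}$ and $\ket{\psi_1}$ by induction hypothesis, and the regrouping rule forces the sharing pattern of common subtrees, so the whole SQMDD is determined up to isomorphism.

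The main obstacle will be uniqueness, specifically establishing (i) and (iii) simultaneously. The difficulty is that the weight-normalisation rules and the node-sharing rule interact: one cannot declare weights canonical without first knowing which nodes are distinct, but one identifies nodes by comparing normalised weights. The way around this is a simultaneous induction on height, treating ``irreducible sub-SQMDD'' as a canonical form functor: at height $k$, the induction hypothesis gives unique irreducible representatives of all height-$k$ states, and these then serve as the basic building blocks whose identification and weight normalisation at height $k+1$ are well-defined. Once this is set up, the variable-skipping rule handles the edge case where $\ket{\psi_0}$ and $\ket{\psi_1}$ are proportional, and the ``remove co-leaves'' rule handles the zero case, so the case analysis closes.
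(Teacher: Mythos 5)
Your plan is a legitimate proof, but it takes a different route from the paper, which offers no proof at all of this statement: it establishes only that the rewriting terminates (via the lexicographic measure $\bigl(|V|,\,2|V|-\deg(t),\,\sum_{h(u)=1}\delta(u),\ldots\bigr)$), observes that an irreducible SQMDD has no redundant vertices and has all non-terminal vertices normalised and unique in the sense of Miller--Thornton, and then imports existence and uniqueness wholesale from \cite{MT06}. You instead reconstruct that argument: a bottom-up recursive construction for existence, and a simultaneous induction on height for uniqueness in which the canonical form at height $k$ (one node per proportionality class of nonzero sub-states, leftmost nonzero weight normalised to $1$) serves as the building block for height $k+1$. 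You correctly identify the crux --- that the regrouping rule is syntactic (same children, same weights) while the uniqueness claim is semantic, and that the induction hypothesis is exactly what bridges the two. What your route buys is a self-contained proof adapted to this paper's laxer SQMDD definition (explicit height function, level-skipping edges, the state-only setting), rather than an appeal to the correspondence with \cite{MT06}'s objects. Two details you should not leave implicit if you write this up: soundness of the Figure~\ref{fig:QMDD-simplification} rules with respect to $\interp{\cdot}$, which you assert but which the paper itself only obtains indirectly (and later) through the ZH translation; and your claim (ii) that $\ell(\mathfrak D)$ and $r(\mathfrak D)$ of an irreducible SQMDD are irreducible, which holds because all the rules are local to a node, its weights and its children, and because the parent condition together with strictly decreasing heights guarantees that the surviving nodes form exactly the descendants of the new root --- but this deserves a sentence rather than silence.
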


\section{From SQMDDs to ZH}
\label{sec:SQMDD-to-ZH}

From any SQMDD, it is possible to build a ZH-diagram whose interpretation will be the same, in a fairly straightforward manner, using the two syntactic sugars defined above. We denote $[.]^{\operatorname{ZH}}$ this map. We define it on every ``layer'' of an SQMDD, that is, on all the nodes of the same height $\leq H$. Such a layer is mapped to a ZH-diagram as follows:
\[\tikzfig{QMDD2ZH-layer}\]
This construction adds a wire to the left. It is the ``effective'' wire of the layer, the one that will constitute the output qubit in the quantum state. If there is no node of height $h$, the construction still add a wire on the left, disconnected from everything: $\tikzfig{ket-plus}$. Omitted in the previous definition is the mapping of the weights on wires, which are simply:
$\tikzfig{QMDD2ZH-weight}$, as well that of the terminal node, for which we have: $\tikzfig{QMDD2ZH-terminal}$.
Finally, the particular state $\tikzfig{ket-1}$ is plugged on top of the root: $\tikzfig{QMDD2ZH-root}$ (if the root has height $<H$ we technically have empty layers on top of the root, hence the $\tikzfig{ket-plus}$'s).\\[1ex]
The full SQMDD is then mapped as follows, where $L$ is the layer of interest, and $A$ and $B$ will themselves be inductively decomposed layer by layer:
\[\tikzfig{QMDD2ZH-full}\]
\begin{example}~\\
$
\left[\def\fig{example-QMDD}\right]^{\operatorname{ZH}}\!\!=\frac3{\sqrt2}~\def\fig{QMDD2ZH-example}\scalebox{0.8}{}
$
\end{example}

This interpretation $[.]^{\operatorname{ZH}}$ was not chosen at random: it builds a quantum state with the intended semantics.

\begin{proposition}
\label{prop:preserved-semantics}
For any SQMMD $\mathfrak D$, $\interp{[\mathfrak D]^{\operatorname{ZH}}}=\interp{\mathfrak D}$.
\end{proposition}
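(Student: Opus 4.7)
I would prove this by induction on the height $H$ of $\mathfrak D$. The base case $H=0$ forces $V=\{t\}$, so $\mathfrak D$ is just the terminal node carrying the overall scalar $s$; the ZH-image then consists of the scalar $s$ pre-multiplying the terminal's ZH mapping, whose standard interpretation is precisely $s=s\ket{}$, matching $\interp{\mathfrak D}$.

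For the inductive step, rather than compare matrices directly, I would prove the diagrammatic sub-claim that for $H\geq 1$,
\[\operatorname{ZH}\vdash (\bra 0\otimes\mathrm{id})\,[\mathfrak D]^{\operatorname{ZH}}=[\ell(\mathfrak D)]^{\operatorname{ZH}},\qquad \operatorname{ZH}\vdash(\bra 1\otimes\mathrm{id})\,[\mathfrak D]^{\operatorname{ZH}}=[r(\mathfrak D)]^{\operatorname{ZH}},\]
where the projection is applied to the topmost effective wire. Once this is established, the proposition follows from the standard basis decomposition $|v\rangle=\ket 0\otimes(\bra 0\otimes I)|v\rangle+\ket 1\otimes(\bra 1\otimes I)|v\rangle$, together with the inductive hypothesis applied to $\ell(\mathfrak D)$ and $r(\mathfrak D)$ (both of height $H-1$) and the recursive definition of $\interp{\mathfrak D}$.

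The sub-claim splits into two cases. When $h(u_0)<H$, the topmost layer is empty, so by construction $[\mathfrak D]^{\operatorname{ZH}}$ factors as a disconnected $\ket{+}$ on the effective wire tensored with $[\mathfrak D']^{\operatorname{ZH}}$, where $\mathfrak D'$ is $\mathfrak D$ with its height decreased by $1$. Since $\ell(\mathfrak D)=r(\mathfrak D)=\mathfrak D'$ and $\braket{0}{+}=\braket{1}{+}=1$, both identities hold at once. When $h(u_0)=H$, the parent-condition forces the top layer to consist of the root alone, which is encoded as a single gadget with $\ket 1$ plugged at its incoming wire and weights $a=\pi_0\omega(u_0)$, $b=\pi_1\omega(u_0)$ on its two outgoing edges. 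For the $\bra 0$-projection, I would apply Lemma~\ref{lem:ket-0-gadget-ctrl} together with Lemma~\ref{lem:ket-0-top-gadget} to the top-layer gadget, showing that the effective wire selects the left outgoing edge (absorbing $a$ into the overall scalar) while the right edge is cut loose. Lemmas~\ref{lem:ket-0-monoid}, \ref{lem:ket-1-monoid} and \ref{lem:monoid-associativity} then propagate this disconnection through the lower layers' monoid structure: any node unreachable from $\pi_0 E(u_0)$ receives the monoid unit $\ket 0$ on its incoming wire and collapses to a trivial scalar, leaving exactly $[\ell(\mathfrak D)]^{\operatorname{ZH}}$. The $\bra 1$-case is symmetric, using Lemma~\ref{lem:ket-1-gadget-ctrl} and Lemma~\ref{lem:swapped-gadget-legs} to undo the swap it introduces.

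The main obstacle is this last propagation step: verifying that once the top-layer gadget has selected a child, the now-dangling branches detach cleanly and the entire unreachable sub-SQMDD collapses to the scalar $1$ rather than to some residual factor. This essentially requires an inner induction on the layers, showing that feeding the monoid unit $\ket 0$ into all incoming wires of an encoded sub-SQMDD reduces to the empty diagram in ZH — a property for which the $\ket 0$-based monoid and the lemmas of the previous section were specifically designed.
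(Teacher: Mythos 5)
Your proposal is correct and follows essentially the same route as the paper: induction on the height, the same two-case split on whether $h(u_0)=H$, the use of Lemmas~\ref{lem:ket-0-gadget-ctrl} and \ref{lem:ket-1-gadget-ctrl} to plug $\ket0$/$\ket1$ into the root gadget, and then a layer-by-layer rewrite propagation (via Lemmas~\ref{lem:ket-0-top-gadget} and \ref{lem:ket-0-monoid}) that destroys the unreachable sub-SQMDD and leaves $[\ell(\mathfrak D)]^{\operatorname{ZH}}$ (resp.\ $[r(\mathfrak D)]^{\operatorname{ZH}}$). The ``main obstacle'' you flag is exactly the rewrite strategy the paper makes explicit (its Rewrites \ref{eq:ket1-weight}--\ref{eq:Z-merge}), at a comparable level of detail.
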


\begin{proof}
Since $\tikzfig{ket-0}$ and $\tikzfig{ket-1}$ represent $\ket0$ and $\ket1$ respectively, we have, for any ZH-diagram $D:0\to n$:
\def\fig{D-decomp}
\begin{align*}
\interp{}
=\ket0\otimes\interp{}
+\ket1\otimes\interp{}
\end{align*}
Now, let $\mathfrak{D}$ be an SQMDD, and $D:=[\mathfrak{D}]^{\operatorname{ZH}}$. We proceed by induction on $H$ the height of $\mathfrak D$, where the base case is obvious. We then focus on the root $u_0$. 
 We need to distinguish two cases:\\
$\bullet$ $H>h(u_0)$: In this case, $\ell(\mathfrak D)=r(\mathfrak D)$ which entails $\interp{\mathfrak D}=(\ket0+\ket1)\otimes\interp{\ell(\mathfrak D)}$. It can be easily seen that $[\mathfrak D]^{\operatorname{ZH}}=\tikzfig{ket-plus}~[\ell(\mathfrak D)]^{\operatorname{ZH}}$, hence $\interp{[\mathfrak D]^{\operatorname{ZH}}}=\interp{\tikzfig{ket-plus}}\otimes\interp{[\ell(\mathfrak D)]^{\operatorname{ZH}}}=(\ket0+\ket1)\otimes\interp{[\ell(\mathfrak D)]^{\operatorname{ZH}}}$\\
$\bullet$ $H=h(u_0)$: 
Then, \def\fig{root-singled-out}$D =$. We hence have:
\begin{align*}
\interp{[\mathfrak D]^{\operatorname{ZH}}}&=\interp{}
=\ket0\otimes\interp{}+\ket1\otimes\interp{}\\
&=\def\fig{ZH-QMDD-induction}
\ket0\otimes\interp{~~}+\ket1\otimes\interp{~~}
\end{align*}
where the last equality is obtained thanks to Lemmas \ref{lem:ket-0-gadget-ctrl} and \ref{lem:ket-1-gadget-ctrl}.
\def\fig{ZH-QMDD-induction}
We now need to show that $\interp{~~}=\interp{[\ell(\mathfrak D)]^{\operatorname{ZH}}}$ and similarly with the right hand side. We can actually show that we can reduce $$ to $[\ell(\mathfrak D)]^{\operatorname{ZH}}$ using the following rewrites (provable in ZH):
\\
\noindent
\begin{minipage}{0.3\columnwidth}
\begin{equation}
\label{eq:ket1-weight}
\tikzfig{rewrite-ket-1-through-weight}
\end{equation}
\end{minipage}\hfill
\begin{minipage}{0.3\columnwidth}
\begin{equation}
\label{eq:ket0-gadget}
\def\fig{gadget-ket0-top}~\to~
\end{equation}
\end{minipage}\hfill
\begin{minipage}{0.3\columnwidth}
\begin{equation}
\label{eq:ket0-weight}
\tikzfig{rewrite-ket-0-through-weight}
\end{equation}
\end{minipage}\\
\begin{minipage}{0.3\columnwidth}
\begin{equation}
\label{eq:ket0-monoid}
\tikzfig{rewrite-monoid-unit}
\end{equation}
\end{minipage}\hfill
\begin{minipage}{0.3\columnwidth}
\begin{equation}
\label{eq:Z-merge}
\def\fig{rule-Z-spider}~\to~
\end{equation}
\end{minipage}\hfill
\begin{minipage}{0.3\columnwidth}
~
\end{minipage}\\
%
Rewrite \ref{eq:ket1-weight} ensures that if the left wire was weighted, the weight itself gets factored in the overall scalar.\\
Rewrites \ref{eq:ket0-gadget} and \ref{eq:ket0-weight} destroy all the nodes that are not descendent of the left child of the root.\\
Rewrite \ref{eq:ket0-monoid} dictates that if $\tikzfig{ket-0}$ arrives at a node with several parents, the behaviour depends on what happens to the others parents (if all parents are destroyed, $\def\fig{monoid-def}$ will get $\tikzfig{ket-0}$ to all its inputs, which will result in $\tikzfig{ket-0}$, hence pursuing the destruction of subsequent nodes). 
It also shows what happens to $\tikzfig{ket-0}$ when it arrives at the terminal node.\\
Finally, Rewite \ref{eq:Z-merge} simply normalises the connected Z-spiders one can obtain from Rewrite \ref{eq:ket0-gadget}.

This rewrite strategy goes on as long as some $\tikzfig{ket-0}$ exist in the diagram, until they all disappear from Rewrite \ref{eq:ket0-monoid}, in which situation we get the diagram one would have obtained from $\ell(\mathfrak D)$. Similarly, we see that $\def\fig{ZH-QMDD-induction}$ reduces to $[r(\mathfrak D)]^{\operatorname{ZH}}$. Hence, by soundness of the rewrite strategy, we do have:
\begin{align*}
\interp{[\mathfrak D]^{\operatorname{ZH}}} = \ket0\otimes\interp{[\ell(\mathfrak D)]^{\operatorname{ZH}}}+\ket1\otimes\interp{[r(\mathfrak D)]^{\operatorname{ZH}}}
\end{align*}
which by induction hypothesis means $\interp{[\mathfrak D]^{\operatorname{ZH}}}=\interp{\mathfrak D}$.
\end{proof}

This proof introduces a small rewrite strategy, that will be used in the following, in particular to simplify a ZH-diagram in SQMDD form.

\section{Setting a ZH-Diagram in SQMDD Form}
\label{sec:SQMDD-form}

If any SQMDD can be turned into a ZH-diagram, the reciprocal requires some work. In the following, we describe an algorithm that turns any ZH-diagram into SQMDD form, i.e.~into a ZH-diagram that is in the image of $[.]^{\operatorname{ZH}}$, using its equational theory.

\subsection{SQMDD Reduction}
\label{sec:reduction-strategy}

We start to show that all the simplification rules for SQMDDs can be derived directly into the ZH-Calculus. For this, we need the following lemmas:

\begin{multicols}{3}

\begin{lemma}
\label{lem:bialgebra-monoid-gn}
$$\tikzfig{gn-monoid-bialgebra-gen}$$
\end{lemma}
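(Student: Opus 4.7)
The plan is to prove the lemma by induction on the arity of the generalized monoid, reducing all cases to the binary bialgebra-type interaction that can be derived directly from the axioms of the ZH-calculus.

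First, I would establish the base case corresponding to the binary monoid. Since $\def\fig{monoid-def}\tikzfigc{00}$ is defined in terms of an H-spider and the standard ZH generators, the interaction between a Z-spider and a single binary monoid vertex should reduce to a computation using the bialgebra axiom of ZH (the one appearing among $\tikzfig{rule-bialgebra}$ / $\tikzfig{rule-bialgebra-2}$ in Figure \ref{fig:ZH-rules}), together with the Z-spider fusion rule and the definitional equation for the monoid. If the base case has arity $1$, then by the convention $\def\fig{monoid-gen-def}\tikzfigc{04}=\tikzfigc{05}$ the statement collapses to an application of the Z-spider axiom only.

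Next, for the inductive step, I would peel off one branch of the generalized monoid using its inductive definition $\def\fig{monoid-gen-def}\tikzfigc{02}:=\tikzfigc{03}$, which exhibits the $(n{+}1)$-ary monoid as a binary monoid on top of an $n$-ary one. Applying the base-case bialgebra interaction pushes the Z-spider past the topmost binary node, producing two copies of the induction situation. The induction hypothesis then handles the remaining $n$-ary monoid on each side. To recombine the outputs into the desired final form, I would invoke associativity (Lemma \ref{lem:monoid-associativity}) and commutativity (Lemma \ref{lem:monoid-commutativity}) of the monoid to rearrange the resulting branches into the canonical generalized-monoid shape on the right-hand side, along with Z-spider fusion wherever several Z-spiders end up sharing a node.

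The main obstacle will be bookkeeping rather than conceptual: after each inductive splitting one obtains a combinatorial explosion of parallel monoid and Z-spider pieces, and verifying that the wires can be reassociated into exactly the target diagram requires careful use of associativity and commutativity, as well as repeated applications of Z-spider fusion. A subtle point is handling the degenerate arities (a $0$- or $1$-input Z-spider, or a $1$-input monoid), where one must check that the definitional simplifications are compatible with the generic form of the rule, so that the induction starts cleanly and no spurious scalar factors appear.
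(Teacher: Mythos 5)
Your route is genuinely different from the paper's. The paper does not induct on arity at all: it unfolds the $n$-ary monoid via its definition into an X-spider carrying a single $0$-labelled H-box (cf.\ Lemma \ref{lem:0-box}), applies the (generalised) Z/X bialgebra rule $(ba_1)$ once, and then folds everything back into monoid form. The whole argument is four diagram rewrites. Your induction on arity via the binary case, reassembled with associativity (Lemma \ref{lem:monoid-associativity}) and commutativity (Lemma \ref{lem:monoid-commutativity}), is a legitimate alternative and would also work; it just trades one short computation for the bookkeeping you yourself anticipate. Incidentally, if you decompose the $(n{+}1)$-ary monoid exactly as in its inductive definition, the recombination at the end is by that same definition and you barely need associativity or commutativity at all.

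The genuine weak point is your base case, which is where all the content of the lemma actually lives and which you leave as an assertion (``should reduce to a computation using the bialgebra axiom''). The monoid is \emph{not} an X-spider: it is an X-spider decorated with a $0$-labelled H-box projecting out the $\ket{11}$ input. When you push the $m$-output Z-spider through it with $(ba_1)$, that H-box gets copied onto each of the $m$ resulting branches \emph{and} duplicated along the copied wire, and to fold the result back into $m$ monoid vertices you must show that two such parallel projectors on a wire collapse to one. That is exactly Lemma \ref{lem:parallel-0-box} in the paper, and it is the one step that does not follow from bialgebra, spider fusion, and the definitional equation alone. Your proposal never identifies this ingredient, so as written the base case does not close; once you add that idempotence argument the rest of your plan goes through.
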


\begin{lemma}
\label{lem:n-monoid-decomp}
\def\fig{monoid-gen}
\begin{align*}

\eq{}
\end{align*}
\end{lemma}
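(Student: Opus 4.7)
The plan is to prove this by induction on $n$, the arity of the generalised monoid appearing on the left-hand side. This is the natural strategy because the $n$-ary monoid is itself defined inductively from the binary monoid via the monoid-gen-def clauses, and Lemma \ref{lem:monoid-associativity} together with Lemma \ref{lem:monoid-commutativity} gives enough flexibility to reassociate and permute inputs freely during the induction.

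For the base case I would take the smallest non-trivial arity (the case where the $n$-ary monoid reduces by the simplification convention $\tikzfigc{04}=\tikzfigc{05}$, or directly the binary case from monoid-def). Here the target decomposition should collapse to the definition of the binary monoid as a $\frac12$-scaled Z-spider construction, possibly after one application of Lemma \ref{lem:0-box} to normalise stray $0$-labelled H-boxes. For the inductive step I would unfold the $(n{+}1)$-ary monoid using the inductive clause of monoid-gen-def, obtaining a binary monoid whose second input is fed by the $n$-ary monoid. Applying the induction hypothesis inside rewrites that $n$-ary subdiagram to the claimed decomposition with one fewer input. I would then use Lemma \ref{lem:bialgebra-monoid-gn} to push the remaining binary monoid through the Z-spider layer produced by the inductive hypothesis, and finish by merging Z-spiders via the standard ZH spider rule, reassociating the monoid legs as needed through Lemmas \ref{lem:monoid-associativity} and \ref{lem:monoid-commutativity}.

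The main obstacle I expect is purely bookkeeping: ensuring that the $\frac12$ scalars coming from each unfolding of the binary monoid combine into the correct global scalar on the right-hand side, and that the sequence of bialgebra and spider-fusion steps produces exactly the claimed target diagram rather than a diagram that is merely equal up to a further rewrite. A safety net, should the diagrammatic manipulation prove too fiddly, is to fall back on the standard interpretation: both sides are explicit tensors computable from the semantics already given for the binary monoid, so agreement of the matrices together with completeness of the ZH-calculus would yield the equality; however the direct diagrammatic derivation is preferable here since it matches the style of the surrounding development and keeps the argument inside the equational theory.
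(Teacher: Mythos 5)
Your overall skeleton matches the paper's: the lemma is proved by induction on the arity, unfolding the inductive clause of the generalised monoid and invoking the induction hypothesis on the inner $(n{-}1)$-ary part; the base case and the bookkeeping of the $\frac12$ scalars are not where the difficulty lies. The gap is in the inductive step. The right-hand side of Lemma \ref{lem:n-monoid-decomp} is not merely a rearrangement of Z-spiders and binary monoids: the decomposition involves the triangle/projector gadgets (the H(0)-box constructions governed by Lemmas \ref{lem:parallel-0-box} and \ref{lem:projector-absorbed-by-monoid}), and in passing from arity $n-1$ to arity $n$ new copies of this gadget must appear, linking the fresh input to the inputs of the inner monoid. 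The tools you propose for the step --- Lemma \ref{lem:bialgebra-monoid-gn} followed by spider fusion and the associativity/commutativity Lemmas \ref{lem:monoid-associativity} and \ref{lem:monoid-commutativity} --- do not generate them: pushing the outer binary monoid's H(0)-box through the inner structure by bare bialgebra yields a single H-box attached to all the wires at once (which only annihilates the all-ones input), not the pairwise gadgets the statement requires. The paper closes the step with Lemma \ref{lem:projector-distribution-monoid}, which says precisely that the projector sitting between one wire and the output of a generalised monoid distributes into projectors between that wire and each of the monoid's inputs; that lemma is itself derived from \ref{lem:bialgebra-monoid-gn} together with \ref{lem:0-box-under-monoid} and \ref{lem:projector-absorbed-by-monoid}, and it is the ingredient missing from your sketch.

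Your fallback --- computing both interpretations and appealing to completeness of the ZH-calculus --- is logically valid, since both sides are diagrams of the same type with explicitly computable matrices. But it defeats the purpose of Section \ref{sec:reduction-strategy}, which needs the SQMDD simplifications to be realised as explicit rewrites inside the equational theory; you should therefore repair the inductive step by first establishing and then applying Lemma \ref{lem:projector-distribution-monoid}.
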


\begin{lemma}
\label{lem:diagonal-distribution-monoid}
\def\fig{diagonal-through-monoid-fin}
\begin{align*}

\eq{}
\end{align*}
\end{lemma}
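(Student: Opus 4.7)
The statement of Lemma~\ref{lem:diagonal-distribution-monoid} asserts that a Z-spider (``diagonal''/copy) applied at the output of the generalised $n$-ary monoid can be distributed, producing copies applied at each of the $n$ inputs. The proof should proceed by induction on the arity $n$ of the monoid, exploiting both the decomposition of the $n$-ary monoid into a cascade of binary monoids (Lemma~\ref{lem:n-monoid-decomp}) and the bialgebra-style interaction between the Z-spider and the binary monoid (Lemma~\ref{lem:bialgebra-monoid-gn}).

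The plan is as follows. First, I would dispose of the low-arity base cases: for $n=1$ the generalised monoid is (by the conventional simplification stated just after the definition) merely the identity, so the statement is trivial; for $n=2$ the statement is exactly (or a mild graphical rearrangement of) Lemma~\ref{lem:bialgebra-monoid-gn}. Second, for the inductive step on $n \geq 3$, I would apply Lemma~\ref{lem:n-monoid-decomp} to rewrite the $n$-ary monoid into a binary monoid whose right input is an $(n{-}1)$-ary monoid (this is precisely the inductive decomposition used to define the generalised version). The Z-spider sitting on the output is then distributed across the outer binary monoid using Lemma~\ref{lem:bialgebra-monoid-gn}, yielding copies on its two inputs: one on the first input of the decomposition and one on the $(n{-}1)$-ary sub-monoid. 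The induction hypothesis applies to this smaller sub-monoid, pushing the copy through to all of its $n-1$ inputs. Reassembling via Lemma~\ref{lem:n-monoid-decomp} in the opposite direction yields exactly the right-hand side.

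The one technical point that requires care is the bookkeeping of the auxiliary wires and scalars introduced by the binary bialgebra rule: applying Lemma~\ref{lem:bialgebra-monoid-gn} typically duplicates adjacent structure (and may produce a global factor through Lemma~\ref{lem:0-box}-style equalities), so one must check that the scalars produced at each inductive step combine correctly with the $\frac{1}{2}$ factor hidden inside the definition of $\tikzfigc{00}$. A secondary subtlety is ensuring that when Lemma~\ref{lem:n-monoid-decomp} is used both forwards and backwards, the wires on the copied side are permuted consistently with the ordering of the inputs in the claimed equality; this is essentially a diagrammatic rearrangement using ``only connectivity matters'' but is easy to get wrong on paper. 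I expect the bialgebra step and this scalar/wire-accounting to be the main (though not deep) obstacle; the induction itself is entirely routine once those are settled.
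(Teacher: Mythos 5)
There is a genuine gap, and it begins with the reading of the statement. The ``diagonal'' in Lemma~\ref{lem:diagonal-distribution-monoid} is not the Z-spider copy map: it is the arity-$(1,1)$ H-box with parameter $a$, i.e.\ the gate $\operatorname{diag}(1,a)$ that the translation $[.]^{\operatorname{ZH}}$ uses to encode an edge weight. The lemma asserts that placing this weight box on the output of the monoid equals placing one copy of it on each input --- which is exactly what is needed for the QMDD weight-normalisation and node-merging rules. The interaction of the Z-spider with the monoid is a different statement, recorded separately as Lemma~\ref{lem:bialgebra-monoid-gn}; note the deliberate naming parallel between the ``copy/gn-through'' lemmas (Z-spider, e.g.\ Lemmas~\ref{lem:bialgebra-monoid-gn} and~\ref{lem:gn-through-gadget}) and the ``diagonal-through'' lemmas (weight box, Lemmas~\ref{lem:diagonal-distribution-monoid} and~\ref{lem:diagonal-distribution-gadget}). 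Your claim that ``the $n=2$ case is exactly Lemma~\ref{lem:bialgebra-monoid-gn}'' is therefore false: that lemma concerns a different generator and cannot supply the content of the present one.

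As a consequence the heart of the proof is missing. The real work is the binary case for the \emph{parameterised} H-box, and it cannot be extracted from the Z-spider/monoid bialgebra by formal manipulation: the paper establishes it by a direct computation on the unfolded definition of the monoid (the $\tfrac12$-scaled H-box/X-spider picture), using H-box-specific machinery --- Lemmas~\ref{lem:Not-through-H-box} and~\ref{lem:CNOT-on-H-legs}, the rules $(ba_1)$, $(ba_2)$, $(hs)$ and, crucially, the inverse rule $(i)$, which is what allows the parameter $a$ to be moved past the internal structure --- before refolding the definition. Your outer scaffolding (decompose the $n$-ary monoid via Lemma~\ref{lem:n-monoid-decomp}, push the box through one binary monoid, recurse, and track the $\tfrac12$ scalars) is a reasonable way to lift a binary identity to arity $n$, but as written it reduces the lemma to a base case that you have mislabelled and never prove. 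To repair the proposal you must state and prove, inside the ZH-calculus, the binary identity $\operatorname{diag}(1,a)\circ m = m\circ\bigl(\operatorname{diag}(1,a)\otimes\operatorname{diag}(1,a)\bigr)$ for the monoid multiplication $m$.
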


\begin{lemma}
\label{lem:gn-through-gadget}
\def\fig{copy-through-gadget}
\begin{align*}

\eq[]{}
\end{align*}
\end{lemma}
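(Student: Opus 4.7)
The plan is to reduce the stated identity to successive applications of the three preceding lemmas by first unfolding the gadget into its defining generators and then refolding them on the other side of the copy. Since the gadget is built inductively out of a Z-spider and the generalized monoid $\def\fig{monoid-def}\tikzfigc{00}$, substituting its definition turns the claim into a statement purely about primitive ZH-generators together with the monoid, where only the previous lemmas and the standard Z-spider rules are needed.

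First I would expand the gadget in the left-hand side according to the constructions of Section 2.3, so that it becomes a Z-spider feeding into an $n$-ary monoid with the top wire carrying the control. Second, I would push the $g_n$ (the Z-spider being copied through) past the Z-spider portion using ordinary Z-spider fusion, and then past the monoid portion using the bialgebra-type interaction of Lemma \ref{lem:bialgebra-monoid-gn}. This will produce parallel copies of the monoid glued along new internal wires, which can then be reorganized using monoid associativity and commutativity (Lemmas \ref{lem:monoid-associativity} and \ref{lem:monoid-commutativity}) into a shape matching Lemma \ref{lem:n-monoid-decomp}.

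Third, I would apply Lemma \ref{lem:diagonal-distribution-monoid} to slide the diagonal that appears from the Z-spider duplication through the resulting monoid, and Lemma \ref{lem:n-monoid-decomp} to regroup pairs of $n$-ary monoids back into wider monoids of the required arity. At this point the diagram on each ``copied'' side will have exactly the shape of the original gadget again, so folding the gadget definition back in yields the right-hand side.

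The main obstacle I expect is the wire bookkeeping: the gadget encodes a controlled swap pattern between its two output wires, so when the copy $g_n$ duplicates the gadget, one must be careful that the two resulting gadgets have their output pairs in the correct relative position, and that the control wire is routed consistently through the copy. Getting the permutation of wires right — as opposed to merely the set of spiders — is the delicate bit, but once the bialgebra step is matched with Lemma \ref{lem:diagonal-distribution-monoid} in the correct order, the remaining work is just an application of soundness plus the basic Z-spider rules of Figure \ref{fig:ZH-rules}.
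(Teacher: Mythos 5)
There is a genuine gap, and it starts from a false premise about what the gadget is. The gadget of Section 2.3 is \emph{not} built out of the generalized monoid: it is defined directly as $\frac12$ times a diagram of primitive generators (Z-spiders, H-boxes, negations), implementing a controlled swap; the monoid is an entirely separate construction. So your first step --- ``expand the gadget \ldots so that it becomes a Z-spider feeding into an $n$-ary monoid'' --- cannot be carried out, and everything downstream that relies on it (the bialgebra of Lemma \ref{lem:bialgebra-monoid-gn}, monoid associativity/commutativity, Lemmas \ref{lem:n-monoid-decomp} and \ref{lem:diagonal-distribution-monoid}) is aimed at the wrong object. Those lemmas govern the interaction of Z-spiders and diagonals with the \emph{monoid}; here the Z-spider has to be pushed through the H-box internals of the \emph{gadget}.

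The actual proof is short but hinges on one idea your plan does not contain. After unfolding the gadget's definition and fusing Z-spiders via $(zs)$, one applies the bialgebra rule $(ba_1)$ to copy the spider through the gadget's internal structure; this duplicates the gadget but leaves residual ``projector''/triangle pieces (the $\frac12$-scaled H-box idempotents of Lemma \ref{lem:0-box}) sitting on the gadget legs. The crux is showing these are absorbed, which is exactly Lemma \ref{lem:projector-on-gadget-legs} --- the longest computation in this part of the appendix --- and your proposal has no step playing that role. Finally, note that the closing appeal to ``soundness'' is the wrong direction: soundness only tells you the two sides have equal interpretations, whereas the lemma asserts derivability in ZH, so every step must be an explicit rewrite (or an appeal to completeness, which is not what is intended here).
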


\begin{lemma}
\label{lem:diagonal-distribution-gadget}
\def\fig{diagonal-through-gadget}
\begin{align*}

\eq{}
\end{align*}
\end{lemma}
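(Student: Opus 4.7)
My plan is to treat this as the gadget-level analogue of Lemma \ref{lem:diagonal-distribution-monoid} and exploit the fact that, by the definition
\def\fig{gadget-def}$\tikzfigc{00}~:=~\frac12\tikzfigc{01}$, the gadget is literally built out of a monoid, a couple of Z-spiders and a single H-spider that encodes the controlled swap. So the first step is to unfold the gadget on the left-hand side according to this definition, exposing its internal structure.

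Once the gadget is unfolded, the diagonal Z-spider sitting on one of its legs can be pushed inward layer by layer. Concretely, I would: (i) use spider fusion (the Z-spider rule of Figure \ref{fig:ZH-rules}) to merge the incoming diagonal with the adjacent Z-spider of the gadget, possibly generating an additional Z-leg; (ii) use Lemma \ref{lem:diagonal-distribution-monoid} to distribute that diagonal through the monoid portion of the gadget, producing two copies of the monoid with diagonals already pushed down to their inputs; (iii) commute the resulting Z-branches past the H-spider via the bialgebra rule (rule-bialgebra), so that everything is placed on the correct side of the ``swap-selecting'' H-spider.

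After these rearrangements, the resulting diagram contains two copies of the ``monoid+Z+H'' block, each with its own copy of the input data on the right, plus a fan-out on the control wire. This is precisely the unfolded form of two gadgets stacked according to the RHS pattern, so the final step is to re-fold using the gadget definition in reverse, collecting the $\frac12$ scalars (which combine correctly because $\frac12\cdot\frac12$ from two unfoldings balances the $\frac12\cdot\frac12$ picked up when re-folding twice). If during step (ii) a pure copy (rather than a diagonal) appears through the gadget, I would use Lemma \ref{lem:gn-through-gadget} as a convenient shortcut instead of unfolding again.

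The main obstacle I anticipate is bookkeeping rather than a conceptual hurdle: the controlled-swap H-spider inside the gadget reroutes wires conditionally, so when the diagonal is copied across the control wire, one must verify carefully that the two resulting branches end up attached to the correct outputs in the RHS (i.e.\ that no implicit swap is dropped or introduced). Lemma \ref{lem:swapped-gadget-legs} should be the right tool to reconcile any discrepancy in leg order that arises from the bialgebra step. Once the wire combinatorics is under control, all remaining equalities are immediate consequences of the basic ZH axioms and the monoid-level lemmas already established.
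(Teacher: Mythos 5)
Your overall strategy is workable but it is not the route the paper takes, and the difference matters for the amount of bookkeeping involved. The paper never unfolds the gadget. Instead it decomposes the \emph{diagonal}: since $\operatorname{diag}(1,a)$ is a Z-spider carrying a scalar $H(a)$-box, the copy hidden inside it is first pushed through the whole (still folded) gadget using Lemma \ref{lem:gn-through-gadget}, which duplicates the gadget and leaves a monoid behind; Lemma \ref{lem:diagonal-distribution-monoid} then distributes the $H(a)$ part through that monoid, and a single $(ba_1)$ step followed by Lemma \ref{lem:projector-on-gadget-legs} reabsorbs the leftover projector into the gadget legs. Your plan --- unfold the gadget into monoid plus Z- and H-spiders, fuse, apply Lemma \ref{lem:diagonal-distribution-monoid} to the monoid portion, bialgebra past the internal H-box, refold --- can also be made to work, and you correctly identify the two load-bearing lemmas (\ref{lem:diagonal-distribution-monoid} and \ref{lem:gn-through-gadget}). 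The one point you must treat with care is your step (iii): a diagonal $\operatorname{diag}(1,a)$ does \emph{not} commute with an $H(-1)$-box, so only the Z-spider copy structure may be moved across the swap-selecting H-spider by bialgebra, while the $H(a)$ scalar box must remain on the monoid side and be handled entirely by Lemma \ref{lem:diagonal-distribution-monoid}; as long as you respect that split (and use Lemma \ref{lem:swapped-gadget-legs} for the leg-order reconciliation, as you propose), the argument closes. The paper's ordering --- copy the whole gadget first, distribute the scalar second --- is precisely a way of sidestepping that hazard, which is why its proof needs only four steps.
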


\begin{lemma}
\label{lem:bialgebra-rn-gadget}
\def\fig{rn-gadget-bialgebra}
\begin{align*}

\eq[]{}\input{./figures/\fig/\fig_06.tikz}
\end{align*}
\end{lemma}

\begin{lemma}
\label{lem:bialgebra-monoid-gadget}
\def\fig{monoid-gadget-bialgebra}
\begin{align*}

\eq[]{}
\end{align*}
\end{lemma}

\begin{lemma}
\label{lem:special-monoid-gadget}
\def\fig{monoid-gadget-special}
\begin{align*}

\eq[]{}
\eq[]{}
\end{align*}
\end{lemma}

\end{multicols}

\begin{proof}
Proofs for these lemmas start in appendix at page \pageref{prf:bialgebra-monoid-gn}.
\end{proof}

\begin{proposition}
For any simplification rule $\mathfrak D_1 \overset r\to \mathfrak D_2$, the following diagram commutes:
$$\tikzfig{simplification-rules-ZH}$$
\end{proposition}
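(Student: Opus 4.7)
The plan is to verify each of the six simplification rules in Figure \ref{fig:QMDD-simplification} separately. For each rule $\mathfrak{D}_1 \overset{r}{\to} \mathfrak{D}_2$, I would compute the images $[\mathfrak{D}_1]^{\operatorname{ZH}}$ and $[\mathfrak{D}_2]^{\operatorname{ZH}}$ via the layer-by-layer translation, and then derive equality in ZH using the equational theory and the lemmas collected just above.

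For the two scalar-distribution rules, the key observation is that a scalar weight $a$ (resp. $b$) carried by an edge translates into a scalar Z-spider gadget inside the ZH-diagram, which by the global-scalar convention (relying on Lemma \ref{lem:product-gen}) can be pulled out as a multiplicative factor into the overall scalar. This is essentially the content of Rewrites (\ref{eq:ket1-weight}) and (\ref{eq:ket0-weight}) from the proof of Proposition \ref{prop:preserved-semantics}. The $0$-to-terminal rule is similarly a bookkeeping step: a weight $0$ on an edge plants a $\ket{0}$-killing factor whose propagation through gadgets and monoids, following the same rewrite strategy of Proposition \ref{prop:preserved-semantics}, collapses the associated subdiagram to the terminal node.

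The remove-coleaves rule and the variable-skipping rule both describe the situation where the two children of a node coincide, with equal weights. In the ZH translation this yields a gadget whose two lower legs are identified; Lemma \ref{lem:special-monoid-gadget} (together with Lemma \ref{lem:ket-0-top-gadget}) collapses such a gadget into a disconnected $\ket{+}$ together with a single copy of the child subdiagram. This is exactly $[\mathfrak{D}_2]^{\operatorname{ZH}}$, whose layer contains no node and hence inserts a bare $\ket{+}$-wire on the left.

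The main obstacle will be the regrouping rule, where two distinct parent nodes (at possibly different heights $h'$ and $h''$) pointing to the same pair of children with the same weights must be merged into one. Their images in ZH sit as two copies of the gadget-and-weights block sharing a common sub-SQMDD further down. To merge them I would first use the bialgebra-like identities of Lemmas \ref{lem:bialgebra-monoid-gn}, \ref{lem:bialgebra-rn-gadget} and \ref{lem:bialgebra-monoid-gadget} to pull the shared sub-diagram out through the separate gadgets, fuse the two surrounding monoids into a single $n$-ary monoid by iterated application of Lemma \ref{lem:n-monoid-decomp}, and then redistribute weights and copies back through the unified gadget using Lemmas \ref{lem:diagonal-distribution-monoid}, \ref{lem:diagonal-distribution-gadget} and \ref{lem:gn-through-gadget}. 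The delicate part is tracking how the layer translation inserts monoids on every intermediate layer between $h'$ and $h''$, so that all such intermediate monoids can be consistently refactored to yield exactly $[\mathfrak{D}_2]^{\operatorname{ZH}}$; once this is settled, the commutation of the diagram follows rule by rule.
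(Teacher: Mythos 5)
Your overall plan---checking the six rules of Figure \ref{fig:QMDD-simplification} one at a time by translating both sides through $[.]^{\operatorname{ZH}}$ and closing the gap with the equational theory---is exactly the paper's plan, and your treatment of variable-skipping (Lemma \ref{lem:special-monoid-gadget}) and of the regrouping rule (distribute the equal weights with Lemma \ref{lem:diagonal-distribution-monoid}, then merge the two gadgets sharing the same children via the monoid/gadget bialgebra, Lemma \ref{lem:bialgebra-monoid-gadget}) matches what the paper does. However, two of your cases rest on a wrong idea. For the scalar-distribution (normalisation) rules, an edge weight $a$ does \emph{not} translate into a global scalar: it becomes a diagonal operator $\operatorname{diag}(1,a)$ sitting on an \emph{internal} wire of the diagram, so the global-scalar convention and Lemma \ref{lem:product-gen} do not apply to it, and Rewrites (\ref{eq:ket1-weight}) and (\ref{eq:ket0-weight}) only fire when a computational basis state is plugged into that wire---which happens only at the root (the paper indeed handles the root as a separate special case needing the extra equality \ref{eq:ket1-weight}). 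The actual argument commutes this diagonal operator upward through the node's gadget and through the monoid collecting its parents, i.e.\ Lemmas \ref{lem:diagonal-distribution-gadget} and \ref{lem:diagonal-distribution-monoid}, together with the rule $(m)$ to factor the weight pair $(a,b)$ as $a\cdot(1,b/a)$.

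Second, you have misidentified the remove-coleaves rule: it is not another instance of ``both children coincide with equal weights'' (that is variable-skipping), but the deletion of a non-root node that has lost all its parents. In the translation such a node's collecting monoid has no inputs and hence degenerates to its unit $\ket{0}$ on the gadget's control, after which the $\ket{0}$-propagation strategy of the proof of Proposition \ref{prop:preserved-semantics} (Rewrites \ref{eq:ket0-gadget}--\ref{eq:Z-merge}) destroys the node and cascades downward; the paper dispatches this rule by exactly that observation. As written, your proof has no argument covering this case, and the mechanism you assign to it (identified lower legs of a gadget) belongs to a different rule.
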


\begin{proof}
We start with the first rewrite of Figure \ref{fig:QMDD-simplification}, where $a\neq0$:
\begin{align*}
\def\fig{QMDD-rule-scalar-distrib}
\mapsto\def\fig{ZH-rule-scalar-distrib}
\scalebox{0.8}{}
&\def\fig{ZH-rule-scalar-distrib}
\eq[]{(m)}\scalebox{0.8}{}
\def\fig{ZH-rule-scalar-distrib}
\eq[]{\ref{lem:diagonal-distribution-gadget}\\\ref{lem:diagonal-distribution-monoid}}\scalebox{0.8}{}
\eq[]{(m)}\scalebox{0.8}{}
\mapsfrom\def\fig{QMDD-rule-scalar-distrib}
\end{align*}
The proof for the rewrite where $a=0$ and $b$ diffuses instead is similar. When applied to the root, the proofs are the same with the additional equality \tikzfig{rewrite-ket-1-through-weight}.


\begin{align*}
\def\fig{QMDD-rule-0-to-terminal}
\mapsto
\def\fig{ZH-rule-0-to-terminal}
\scalebox{0.8}{}
\eq{\ref{lem:0-box}\\\ref{lem:ket-0-monoid}\\(ba_1)}\scalebox{0.8}{}
\eq{\ref{lem:0-box}\\\ref{lem:ket-0-monoid}\\(ba_1)}\scalebox{0.8}{}
\mapsfrom
\def\fig{QMDD-rule-0-to-terminal}

\end{align*}

The rule \tikzfig{QMDD-rule-remove-coleaves} is a direct consequence of the rewrite strategy of the proof of Proposition \ref{prop:preserved-semantics}.

\begin{align*}
\def\fig{QMDD-rule-variable-skipping}\scalebox{0.8}{}
~~\mapsto~~\def\fig{ZH-rule-variable-skipping}
\scalebox{0.8}{}
\eq{}\scalebox{0.8}{}
\eq{\ref{lem:special-monoid-gadget}}\scalebox{0.8}{}
\eq{}\scalebox{0.8}{}
\def\fig{QMDD-rule-variable-skipping}~~\mapsfrom~~\scalebox{0.8}{}
\end{align*}
%
\begin{align*}
\def\fig{QMDD-rule-regrouping-general}
\mapsto&\def\fig{ZH-rule-regrouping-general}
\scalebox{0.8}{}
\eq[]{\ref{lem:diagonal-distribution-monoid}}\scalebox{0.8}{}
&\def\fig{ZH-rule-regrouping-general}
\eq[]{\ref{lem:bialgebra-monoid-gadget}}\scalebox{0.8}{}
\mapsfrom\def\fig{QMDD-rule-regrouping-general}
\end{align*}

The case where the two children are the same is very similar:
\def\fig{ZH-rule-regrouping-unique-child}
\begin{align*}
\scalebox{0.8}{}
\eq{\ref{lem:diagonal-distribution-monoid}}\scalebox{0.8}{}
\eq{\ref{lem:bialgebra-monoid-gadget}}\scalebox{0.8}{}
\end{align*}
\end{proof}
Thanks to this proof, we now have a strategy to systematically reduce any ZH-diagram in SQMDD form. We then show how to put a ZH-diagram in SQMDD form in the first place.

\subsection{SQMDD form of Generators and Compositions}
\label{sec:SQMDD-gen-comp}

We now show that all the generators of the ZH-Calculus can be put in SQMDD form, and furthermore that the compositions of diagrams in SQMDD form can be put in SQMDD form.

\begin{proposition}
\label{prop:pseudo-SQMDD}
Suppose $D'$ differs from a diagram $D$ in SQMDD form by replacing some of its \def\fig{monoid-def}$$ by $\tikzfig{rn-2-1}$. We then have: $\operatorname{ZH}\vdash D = D'$.
\end{proposition}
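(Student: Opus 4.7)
The plan is to induct on the number of replaced monoids, thereby reducing to the case in which $D'$ differs from $D$ at exactly one node: a single $\def\fig{monoid-def}\tikzfigc{00}$ has been substituted by an X-spider $\tikzfig{rn-2-1}$. The general case then follows by applying the single-substitution equality repeatedly.

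The idea behind the local equality is semantic: the monoid and the X-spider coincide on every input of the form $\ket{0}\otimes x$ and $x\otimes\ket{0}$ (both act as the identity on $x$), and only disagree on the input $\ket{1}\otimes\ket{1}$, where the monoid returns the zero vector while the X-spider returns $\ket{0}$. In SQMDD form, the two inputs of any monoid are supplied by the left-outputs of two ``gadget'' parents that share the selection wire of the layer above. By Lemmas~\ref{lem:ket-0-gadget-ctrl} and~\ref{lem:ket-1-gadget-ctrl}, this gadget structure behaves as a controlled branch: when the selection wire carries $\ket{0}$ one child is selected, when it carries $\ket{1}$ the other is. So at each classical evaluation of the selection wire at most one of the two inputs to the monoid can equal $\ket{1}$, and the monoid/X-spider discrepancy is never exercised.

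To upgrade this semantic observation into a syntactic ZH-derivation, I would zoom in on the local neighbourhood of the chosen monoid together with its gadget parents and rewrite both sides to a common form. Unfolding the X-spider into its H-box/Z-spider definition and expanding the monoid via its own definition, one uses the bialgebra-type identities of Lemmas~\ref{lem:bialgebra-monoid-gadget} and~\ref{lem:bialgebra-rn-gadget} (together with basic Z- and H-spider rules) to factor the cluster through the layer's selection wire. On each of the two classical branches of that selection wire the two cluster expressions agree, thanks to Lemmas~\ref{lem:ket-0-monoid} and~\ref{lem:ket-1-monoid} and their straightforward X-spider analogues (where an X-spider absorbs an adjacent $\ket{0}$ or $\ket{1}$ into the identity or a bit-flip respectively).

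The main obstacle will be the careful syntactic manipulation of the gadget+monoid cluster: there is no primitive ``decomposition along the control wire'' among the axioms, so one has to assemble it by combining the bialgebra rule for the gadget, the Z-spider of the layer, and the chosen monoid/X-spider. This mirrors the rewrite strategy already used in the proof of Proposition~\ref{prop:preserved-semantics}, and I expect that essentially the same toolkit (Rewrites~\ref{eq:ket1-weight}--\ref{eq:Z-merge} plus Lemma~\ref{lem:bialgebra-monoid-gadget}) suffices, with each step applying verbatim to the X-spider version via the corresponding identity. Once the single-substitution case is proved, the induction on the number of substitutions closes the proof.
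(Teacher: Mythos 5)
Your reduction to a single substitution is harmless, but the core of your argument --- that the monoid/X-spider discrepancy is ``never exercised'' because the inputs of the monoid come from gadget parents ``sharing the selection wire of the layer above'' --- does not hold, and the gap is not merely one of syntactic bookkeeping. First, the monoid in SQMDD form collects \emph{all} in-edges of a node, so it is $k$-ary with inputs coming from the left- or right-outputs of the gadgets of $k$ distinct parent nodes; these are not the two outputs of a single gadget, and (since edges may skip levels) the parents need not even lie in one layer. Second, and more seriously, the promise that at most one of these inputs carries a $\ket1$ is a \emph{global} invariant: it holds because the unique root is fed $\ket1$ and each gadget routes its activation to exactly one child, so that under any computational-basis evaluation of the output wires exactly one edge per level is live. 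On a bounded neighbourhood of the monoid --- the monoid, its parent gadgets and the layer's Z-spider, with the parents' activation wires left open --- the equality you want is simply \emph{false}: feeding $\ket1$ into two activation wires and $\ket0$ into the qubit wire sends $\ket{11}$ into the monoid, where the monoid returns the zero vector while the X-spider returns a nonzero multiple of $\ket0$. Hence no sound local rewrite of that cluster exists, and the proposed ``case split on the classical branches of the selection wire'' (which is not a primitive ZH step in any case) cannot rescue it; likewise, Rewrites~\ref{eq:ket1-weight}--\ref{eq:Z-merge} are an evaluation strategy that collapses the diagram rather than a form-preserving local substitution.

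The paper's proof turns the global invariant into a syntactic token instead of a local property: it creates an auxiliary projector at the terminal node using only $(ba_1)$ and $(zs)$, pushes it upward through every layer by means of Lemmas~\ref{lem:n-monoid-decomp}, \ref{lem:parallel-0-box}, \ref{lem:gn-through-gadget} and \ref{lem:bialgebra-monoid-gn}, converts exactly the chosen monoids into X-spiders as the projector passes them, and finally removes the projector at the root. Any correct argument needs some such device that transports information from the root or the terminal node across the entire diagram; repairing your proof would force the ``local neighbourhood'' to grow into the whole sub-diagram between the monoid and the root (or the leaf), at which point you are reconstructing the paper's layer-by-layer sweep.
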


\begin{proof}
In appendix at page \pageref{prf:pseudo-SQMDD}.
\end{proof}

This result actually still holds for any commutative monoid whose neutral element is $\tikzfig{ket-0}$. A straightforward analysis yields that such monoids are of the form $\begin{pmatrix}1&0&0&a\\0&1&1&b\end{pmatrix}$ for $a,b\in\mathbb C$.

This result will be used in the proposition that follows, but notice that it can also be used to simplify the ZH-diagram obtained from an SQMDD.

Now, we can show that all the generators can be set in SQMDD form:
\begin{proposition}
\label{prop:generators-NF}
The generators of the ZH-Calculus can be set in SQMDD form:

\begin{minipage}{0.45\columnwidth}
\def\fig{Z-spider-NF-3}
\begin{align*}

\eq{}\scalebox{0.8}{\input{./figures/\fig/\fig_06.tikz}}
\end{align*}
\end{minipage}
\begin{minipage}{0.45\columnwidth}
\def\fig{H-spider-NF}
\begin{align*}

\eq{}\scalebox{0.8}{}
\end{align*}
\end{minipage}
\end{proposition}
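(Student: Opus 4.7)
The plan is to prove both equalities by induction on the arity of the spider (that is, on $n+m$), unfolding one level of the claimed SQMDD at a time and recognising on the other side the spider-fusion rule. After applying the map/state duality all legs can be treated uniformly as outputs, so the two displayed equalities essentially concern states, and we are free to slice off the topmost wire.

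For the Z-spider, the state $\psi_{n,m}(Z_m^n)$ has standard interpretation $\ket{0\ldots 0}+\ket{1\ldots 1}$, so at the root of its SQMDD the left-child subtree must evaluate to $\ket{0\ldots 0}$ and the right-child subtree to $\ket{1\ldots 1}$. I would exhibit an SQMDD which, layer by layer, propagates these two basis vectors in parallel and then verify the equality by peeling off the topmost layer: using Lemmas~\ref{lem:ket-0-monoid} and~\ref{lem:ket-1-monoid} (combined with Proposition~\ref{prop:pseudo-SQMDD} to switch between the Z-spider and the monoid at the top) one rewrites the proposed form as $\ket{0}\otimes(\ldots)+\ket{1}\otimes(\ldots)$ whose two pieces, by Z-spider fusion, are smaller Z-spiders to which the induction hypothesis applies.

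For the H-spider $H_m^n(r)$ I would exhibit a "staircase" SQMDD in which every left child of a node is a subtree representing $\ket{+\ldots +}$ (so all basis vectors below that point receive weight $1$) and every right child descends to a shorter staircase of the same shape, with the terminal finally carrying the weight $r$; the top-level layer is glued using the monoid--gadget pattern characteristic of the encoding $[.]^{\operatorname{ZH}}$. The inductive step uses Lemma~\ref{lem:bialgebra-monoid-gadget} to push the gadget past the monoid at the top, then Lemma~\ref{lem:gn-through-gadget} to re-express the left subtree as a disconnected $\ket{+}^{\otimes}$, after which spider fusion reconstructs a smaller H-spider with the same parameter $r$ and the induction closes.

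The hardest part will be the bookkeeping of scalars and edge weights: the H-spider, once cut open, may surface a factor of $r$ that has to be re-absorbed into the child SQMDD in a way that keeps it in the image of $[.]^{\operatorname{ZH}}$, which is precisely the role of Proposition~\ref{prop:pseudo-SQMDD} together with the rewrites~(\ref{eq:ket1-weight}) and~(\ref{eq:ket0-weight}) of the previous section. Verifying the base cases (arity $1$, where $Z^0_1$ reduces to $\ket{+}$ and $H^0_1(r)$ to $\ket{+}+(r-1)\ket{1}$) and checking that the inductive staircase for the H-spider genuinely uses only the monoid and gadget constructions with well-typed weights will be the most delicate part of the argument; the inductive step itself is then a fairly mechanical application of the lemmas of Section~\ref{sec:reduction-strategy}.
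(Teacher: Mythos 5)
Your target forms and base cases are correct, and you rightly anticipate that Proposition~\ref{prop:pseudo-SQMDD} is needed at the end to trade Z-spiders for monoids. But the pivotal step of your Z-spider argument does not go through as a derivation in the calculus: ``one rewrites the proposed form as $\ket{0}\otimes(\ldots)+\ket{1}\otimes(\ldots)$'' is a statement about the \emph{interpretation}, not about the diagram --- ZH-diagrams are not closed under sums, so you cannot split a diagram into its $\ket0$- and $\ket1$-slices, apply the induction hypothesis to each slice separately, and reassemble. To conclude a provable equality from equality of the two slices you would have to invoke completeness of ZH, at which point the whole induction is superfluous (comparing interpretations directly would suffice) and you lose the explicit rewrite sequence that this section is constructing. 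The paper closes exactly this gap syntactically: it first proves ``controlled'' versions of your slicing step as honest diagram equalities (the auxiliary equations (\ref{eq:Z-spider-NF-1}), (\ref{eq:Z-spider-NF-2}) and (\ref{eq:Z-spider-NF-2.5})), and then applies (\ref{eq:Z-spider-NF-2}) once per leg --- which is your induction on arity, but carried out on whole diagrams rather than on summands.

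For the H-spider no induction is needed at all: the paper's derivation is a short direct computation whose only non-trivial ingredient is equation (\ref{eq:gadget-ket-plus-left}), describing what happens when the state $\ket{+}$ is plugged into the control wire of a gadget. After splitting the H-box with the rule $(hs)$ and Lemma~\ref{lem:classical-on-H-box}, every gadget of the staircase is created by that single equation, and Proposition~\ref{prop:pseudo-SQMDD} finishes the job; the parameter $r$ never has to be re-absorbed into a child SQMDD, since it stays on a single H-box at the bottom of the staircase throughout. The lemmas you cite for your inductive step (Lemmas~\ref{lem:bialgebra-monoid-gadget} and~\ref{lem:gn-through-gadget}) belong to the reduction and composition toolbox used later in the section, and it is not clear how they would ``reconstruct a smaller H-spider with the same parameter $r$''; identifying the behaviour of $\ket{+}$ on a gadget's control wire is the missing idea.
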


\begin{proof}
In appendix at page \pageref{prf:generators-NF}.
\end{proof}

We then need to show that the composition of diagrams in SQMDD form can be put in SQMDD form.

\begin{proposition}
\label{prop:SQMDD-tensor}
The parallel composition of ZH-diagrams in SQMDD form can be put in SQMDD form, by joining the root of the right diagram to the terminal node of the diagram on the left.
\end{proposition}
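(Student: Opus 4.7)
The natural strategy is induction on the height $H_1$ of the left SQMDD-form diagram $D_1$, using the layered decomposition introduced in Section~\ref{sec:SQMDD-to-ZH}.

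For the base case $H_1 = 0$, the diagram $D_1$ is just its overall scalar $s_1$ applied on top of the terminal gadget $\tikzfig{QMDD2ZH-terminal}$. Then $D_1 \otimes D_2$ equals $s_1 \cdot D_2$, which is already in SQMDD form by absorbing $s_1$ into $D_2$'s overall scalar (per the scalar convention of Section~\ref{sec:ZH}). This matches exactly the prescription of ``joining the root of $D_2$ to the terminal of $D_1$'': since $D_1$ has no layers, $D_2$'s root sits directly where $D_1$'s terminal was.

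For the inductive step, I would peel off the topmost layer $L$ of $D_1$, writing $D_1 = L \circ B$ where $B$ is an SQMDD form of height $H_1-1$. By the interchange law for sequential and parallel composition,
\[ D_1 \otimes D_2 \;=\; \bigl(L \otimes \mathrm{id}\bigr) \circ \bigl(B \otimes D_2\bigr), \]
with $L$ extended by identity wires on the $H_2$ output qubits of $D_2$. The induction hypothesis rewrites $B \otimes D_2$ into an SQMDD form in which $D_2$'s root is joined to $B$'s terminal; reinstating $L$ on top then gives the required SQMDD form for $D_1 \otimes D_2$, with $D_2$'s root joined to $D_1$'s terminal. Since each layer emits its effective output wire to the left, the $H_1$ outputs contributed by $D_1$'s layers end up as the top $H_1$ wires and $D_2$'s outputs as the bottom $H_2$ wires, as expected from the tensor product.

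The main obstacle is the local verification at the junction: the ZH encoding of an SQMDD places a $\tikzfig{ket-1}$ above the root, so naively identifying $D_2$'s root with $B$'s terminal leaves a spurious $\tikzfig{ket-1}$ between them, plus two layers of terminal/root structure that must be fused into a single well-formed SQMDD encoding. This amounts to a purely local ZH-calculation that absorbs the $\tikzfig{ket-1}$ into the terminal structure and reassembles the surrounding monoid/gadget wiring of $L$; the needed rewrites are provided by Lemmas~\ref{lem:ket-1-monoid} and~\ref{lem:ket-1-gadget-ctrl}, together with the rewrites already established in the proof of Proposition~\ref{prop:preserved-semantics} for propagating $\tikzfig{ket-1}$ through weights and Z-spiders. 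Once this local reshuffling is checked, the rest of the argument is the clean inductive bookkeeping described above.
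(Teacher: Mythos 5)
There is a genuine gap in your inductive set-up. You write $D_1 = L \circ B$ and claim that $B$ ``is an SQMDD form of height $H_1-1$'', so that the induction hypothesis applies to $B\otimes D_2$. But the part of an SQMDD-form diagram lying below its top layer is \emph{not} in SQMDD form: it has one open input wire for each edge descending from the nodes of the top layer (several nodes, each with two children, possibly sharing targets), and no single root capped by $\tikzfig{ket-1}$. An SQMDD form is a closed state with exactly one root, so the induction hypothesis simply does not apply to $B\otimes D_2$, and the step ``reinstate $L$ on top'' does not typecheck either, since the output of the induction hypothesis would again be a closed single-rooted state rather than something $L$ can be plugged into. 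To make an induction of this shape work you would have to strengthen the statement to a multi-rooted/open variant of SQMDD form, which is essentially where all the work lies and which you have not done.

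Relatedly, you misidentify the local computation at the junction. The difficulty is not a ``spurious $\tikzfig{ket-1}$'' to be absorbed (Lemmas \ref{lem:ket-1-monoid} and \ref{lem:ket-1-gadget-ctrl} are not the relevant tools): the terminal node of $D_1$ generally has many parents, so joining $D_2$'s root to it forces the single wire that fed $D_2$'s root to be distributed across all the layers of $D_1$. The paper's proof does exactly this: it first merges the two roots' $\tikzfig{ket-1}$'s into one copy structure via Lemma \ref{lem:bialgebra-neg}, and then pushes the resulting connection wire down through $D_1$ layer by layer using the bialgebra-type interactions of the copy node with the monoid and with the gadget (Lemmas \ref{lem:bialgebra-monoid-gn} and \ref{lem:gn-through-gadget}), until it reaches the terminal and becomes the join. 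This direct rewrite never needs the intermediate stages to be in SQMDD form, which is precisely what sidesteps the problem your induction runs into. Your overall layer-by-layer intuition is the right one, but without the copy-through-monoid and copy-through-gadget lemmas the proof does not go through.
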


\begin{proof}
In appendix at page \pageref{prf:SQMDD-tensor}.
\end{proof}

Since we work only with quantum states after using the map/state duality, what accounts for sequential composition is the linking of two ZH-diagrams in SQMDD form through $\tikzfig{cup}$. This element can be decomposed into two primitives: $\tikzfig{cup}\eq{}\tikzfig{cup-from-Z}$. We hence show that applying each of these two primitives to a ZH-diagram in SQMDD form can be put in SQMDD form. To simplify the proof, we first show that swapping two outputs of a state in SQMDD can be put in SQMDD form.

\begin{proposition}
\label{prop:SQMDD-swap}
Let $D$ be a ZH-diagram in SQMDD form. Applying $\tikzfig{swap}$ on two of its outputs can be put in SQMDD form. By repeated application of $\tikzfig{swap}$, any permutation of the outputs of $D$ can be put in SQMDD form.
\end{proposition}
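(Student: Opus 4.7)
The plan is to first reduce arbitrary output permutations to adjacent transpositions: since any permutation of $H$ elements is a product of adjacent transpositions, it suffices to handle the case where $\tikzfig{swap}$ is applied to the two outputs corresponding to consecutive layers $L_k$ and $L_{k+1}$ in the SQMDD form of $D$. Non-adjacent swaps are then obtained by composition of the adjacent case with itself, each step of which stays in SQMDD form by induction.

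For the adjacent case, I would push the swap into the diagram from the top. Each of the two output wires emerges from a gadget $\tikzfig{gadget-upward}$ sitting at the top of its layer. Using Lemma \ref{lem:swapped-gadget-legs} to commute the swap with a gadget, I can absorb $\tikzfig{swap}$ into the control structure of these gadgets, effectively interchanging which conditional branch is taken for which qubit. Concretely, each node $u \in L_k$ together with its two children $v_0,v_1 \in L_{k+1}$ and their four grand-children $w_{00},w_{01},w_{10},w_{11}$ has to be rearranged so that the ``$L_{k+1}$-gadget'' is applied before the ``$L_k$-gadget''.

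Next, I would show the result can be recast in SQMDD form layer by layer. Each old node $u \in L_k$ is replaced by a new pair of nodes: a top node in the position of $L_k$ (controlling the output that used to come from $L_{k+1}$), with two children in a new layer $L_{k+1}'$, each itself a choice between two of the $w_{ij}$. The ZH-equalities needed are the diagonal-distribution Lemmas \ref{lem:diagonal-distribution-monoid} and \ref{lem:diagonal-distribution-gadget} (to push the duplication through gadgets and monoids), the bialgebra Lemmas \ref{lem:bialgebra-monoid-gadget} and \ref{lem:bialgebra-monoid-gn} (to re-associate the monoidal/gadget combinators), and Proposition \ref{prop:pseudo-SQMDD} to convert any residual $\tikzfig{rn-2-1}$-style bifurcations introduced in the process back into proper monoid connectors. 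Finally, the rewrite strategy from Section \ref{sec:reduction-strategy} may be invoked to renormalise the result.

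The main obstacle I anticipate is combinatorial bookkeeping: when two children in $L_{k+1}$ coincide (which is allowed in the input SQMDD form), or when several parents in $L_k$ share a child in $L_{k+1}$, the swap produces degenerate cases where the new children in $L'_{k+1}$ collapse or share monoid inputs with others. Handling these uniformly, so that the new layers $L'_k,L'_{k+1}$ are described by the same ZH pattern regardless of sharing, is essentially a verification that the four-way distribution provided by the monoid and gadget bialgebras exactly reproduces the decision-diagram notion of variable reordering, and this is where I expect the bulk of the technical work to live.
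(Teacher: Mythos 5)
Your overall strategy coincides with the paper's: reduce to a transposition of two adjacent outputs, commute the corresponding two layers past each other, and restore SQMDD form using diagonal-distribution and bialgebra lemmas before invoking the reduction strategy of Section~\ref{sec:reduction-strategy}. The surrounding machinery you list (Lemmas~\ref{lem:diagonal-distribution-gadget} and~\ref{lem:bialgebra-monoid-gadget}) is exactly what the paper uses to prepare the two layers for the exchange.

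However, there is a gap at the central step. The outputs of a diagram in SQMDD form are the \emph{control} wires of the layers' gadgets, so the swap you must eliminate sits on two control wires belonging to two different layers. The lemma you invoke for this, Lemma~\ref{lem:swapped-gadget-legs}, does something else: it concerns a swap applied to the two \emph{downward} legs of a single gadget and converts that swap into a negation on that gadget's control wire; it cannot interchange the roles of two stacked gadget layers. The identity actually needed --- that two sequentially composed gadgets with their control wires crossed equal the two gadgets composed in the opposite order --- is Lemma~\ref{lem:swap-gadgets}, whose proof is a genuinely nontrivial computation (it goes through the alternative presentation of the gadget, Lemma~\ref{lem:gadget-alternative}, and several bialgebra steps). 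Since essentially all of the technical content of the proposition lives in that identity, and your proposal neither states nor proves it, the ``absorb the swap into the control structure'' step would fail with only Lemma~\ref{lem:swapped-gadget-legs} in hand. The combinatorial bookkeeping you worry about in your last paragraph is, by contrast, handled almost for free: after Lemmas~\ref{lem:bialgebra-monoid-gadget}, \ref{lem:diagonal-distribution-gadget} and~\ref{lem:swap-gadgets} the result is already in SQMDD form, and coincident or shared children are absorbed by the reduction strategy rather than needing a separate case analysis.
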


\begin{proof}
In appendix at page \pageref{prf:SQMDD-swap}.
\end{proof}

\begin{proposition}
\label{prop:SQMDD-gn-2-1}
Let $D$ be a ZH-diagram in SQMDD form. Applying $\tikzfig{gn-2-1}$ on two of its outputs can be put in SQMDD form.
\end{proposition}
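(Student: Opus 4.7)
The plan is to reduce the proposition to the case where the two outputs onto which $\tikzfig{gn-2-1}$ is applied are the topmost two wires of $D$, i.e.\ the effective wires of the layers of maximum heights $H$ and $H-1$. This reduction is immediate by Proposition~\ref{prop:SQMDD-swap}: a suitable permutation of outputs, obtained by repeated application of $\tikzfig{swap}$, can itself be put in SQMDD form, and precomposing that permutation with the original $D$ yields a new SQMDD-form diagram in which the two outputs of interest are precisely the top two effective wires.

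Next, I would absorb the $\tikzfig{gn-2-1}$ into the top two layers. A $\tikzfig{gn-2-1}$ connecting two output qubits forces them to take the same value, so semantically the resulting state should be the one represented by an SQMDD of height $H-1$ whose top-layer nodes are obtained by ``merging'' each node $u$ of the old top layer with its two children: the new left child of (the image of) $u$ should be the left child of $\pi_0 E(u)$ (the $LL$-path at heights $H,H-1$), and the new right child should be the right child of $\pi_1 E(u)$ (the $RR$-path), with the $LR$ and $RL$ paths killed by the Z-spider. The diagrammatic implementation is to push $\tikzfig{gn-2-1}$ through the gadget of layer $H$ using Lemma~\ref{lem:gn-through-gadget}, which converts it into control structure feeding into the monoids of layer $H-1$; then to propagate it through those monoids using the bialgebra-style rewrite of Lemma~\ref{lem:bialgebra-monoid-gn}; and finally to recombine the two gadget layers into one via Lemmas~\ref{lem:bialgebra-monoid-gadget} and~\ref{lem:special-monoid-gadget}, yielding a single new gadget layer whose edge weights are the products of the weights of the matched edges in the original two layers.

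The main obstacle is that the nodes of layer $H-1$ are in general shared between several parents of layer $H$, so the pushing process will duplicate subdiagrams and, at the level of strict SQMDD form, produce $\tikzfig{rn-2-1}$ where the SQMDD calculus would require the commutative monoid $\def\fig{monoid-def}\tikzfigc{00}$ at the joints. This is exactly the situation handled by Proposition~\ref{prop:pseudo-SQMDD}: every occurrence of $\tikzfig{rn-2-1}$ produced by the merging step can be rewritten into the proper monoid, giving a genuine SQMDD-form diagram. After this cleanup, a final pass of the reduction strategy of Section~\ref{sec:reduction-strategy} (removing co-leaves, regrouping equal subtrees, and redistributing scalars) puts the result in SQMDD form of height $H-1$ plus the remaining untouched layers of $D$ of heights $<H-1$, as required.
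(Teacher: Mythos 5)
Your overall architecture matches the paper's: reduce via Proposition~\ref{prop:SQMDD-swap} to the case where $\tikzfig{gn-2-1}$ acts on the two topmost outputs, merge the two top layers into one, and let the reduction machinery clean up afterwards; your semantic picture (only the $LL$ and $RR$ grandchildren survive as children of the merged node, the $LR$ and $RL$ branches die) is also correct. The genuine gap is at the central step. After spider fusion, the situation to resolve is a \emph{parent gadget and its child gadgets sharing a single control wire}. The lemmas you invoke for this --- Lemmas~\ref{lem:gn-through-gadget}, \ref{lem:bialgebra-monoid-gn}, \ref{lem:bialgebra-monoid-gadget} and \ref{lem:special-monoid-gadget} --- describe how Z-spiders and monoids sitting on the \emph{data} wires (node inputs and gadget legs) commute with a single gadget or monoid; none of them addresses a Z-spider identifying the controls of two consecutive gadget layers, and your description of the fusion ``feeding into the monoids of layer $H-1$'' misplaces where its legs actually land (on the controls, not on the monoids). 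The paper needs a dedicated identity for exactly this configuration, Lemma~\ref{lem:gn-2-1-gadgets} (preceded by Lemma~\ref{lem:diagonal-distribution-gadget} to clear the edge weights out of the way), and its proof is a separate, nontrivial computation going through the alternative presentation of the gadget (Lemma~\ref{lem:gadget-alternative}) and several H-box manipulations; it is not a formal consequence of the bialgebra-style lemmas. Without it, your ``recombine the two gadget layers into one'' step is asserted rather than derived.

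A secondary point: the cleanup you delegate to Proposition~\ref{prop:pseudo-SQMDD} is not the one required. What actually remains after Lemma~\ref{lem:gn-2-1-gadgets} is a collection of $\tikzfig{ket-0}$ states on the $LR$ and $RL$ branches, and the paper disposes of them with the $\tikzfig{ket-0}$ trickle-down strategy from the proof of Proposition~\ref{prop:preserved-semantics}: they are absorbed by the monoids of nodes that retain live parents and destroy the gadgets of nodes that do not. Your final appeal to the reduction strategy of Section~\ref{sec:reduction-strategy} only applies once the diagram is already in SQMDD form, which is precisely what the missing steps must first establish. (The paper also treats separately, as a degenerate first case via Lemma~\ref{lem:bialgebra-monoid-gadget}, the situation where both children of the root coincide; your argument should at least note that the general derivation subsumes it.)
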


\begin{proof}
In appendix at page \pageref{prf:SQMDD-gn-2-1}.
\end{proof}

\begin{proposition}
\label{prop:SQMDD-gn-1-0}
Let $D$ be a ZH-diagram in SQMDD form. Applying $\tikzfig{bra-plus}$ on any of its outputs can be put in SQMDD form.
\end{proposition}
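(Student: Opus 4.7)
The plan is to induct on the height $H$ of the SQMDD $\mathfrak D$ such that $D = [\mathfrak D]^{\operatorname{ZH}}$, with a case split on which output the $\tikzfig{bra-plus}$ is being applied to. The base case $H=0$ is vacuous (no outputs to act on). For the inductive step, I distinguish the topmost output (corresponding to the root's layer) from any lower one.

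If $\tikzfig{bra-plus}$ targets a lower output, I exploit the top-layer decomposition $\interp{\mathfrak D}=\ket 0\otimes\interp{\ell(\mathfrak D)}+\ket 1\otimes\interp{r(\mathfrak D)}$, which is reflected diagrammatically by the root gadget of $[\mathfrak D]^{\operatorname{ZH}}$. Pushing $\tikzfig{bra-plus}$ past the root gadget is routine in ZH since it commutes with the $\ket 0/\ket 1$ controls used by the gadget, so the problem reduces to applying $\tikzfig{bra-plus}$ to the corresponding output of each subdiagram $[\ell(\mathfrak D)]^{\operatorname{ZH}}$ and $[r(\mathfrak D)]^{\operatorname{ZH}}$, both of height $H-1$. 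The induction hypothesis then gives SQMDD forms for both, and reattaching the root's gadget and its $\tikzfig{ket-1}$-initialisation yields an SQMDD of height $H$ for the whole diagram.

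If $\tikzfig{bra-plus}$ targets the topmost output, the central step is a diagrammatic identity showing that $\tikzfig{bra-plus}$ applied on the effective wire of a gadget collapses that gadget into a structure where the two child wires are joined through the monoid \def\fig{monoid-def}$\tikzfigc{00}$, up to a scalar factor $\frac{1}{\sqrt 2}$. This identity can be derived from the definition of the gadget together with Lemmas \ref{lem:ket-0-gadget-ctrl} and \ref{lem:ket-1-gadget-ctrl}, in the same spirit as the strategy already used in the proof of Proposition \ref{prop:preserved-semantics}. Applying it at every node of the top layer removes that layer of the SQMDD, replacing each pair of child edges with a monoid-join. By Proposition \ref{prop:pseudo-SQMDD} these monoids are interchangeable with $\tikzfig{rn-2-1}$ without affecting the semantics, and the resulting ``pseudo-SQMDD'' of height $H-1$ can then be brought to proper SQMDD form by the rewrite strategy of Section \ref{sec:reduction-strategy}, which realises all the QMDD simplification rules of Figure \ref{fig:QMDD-simplification} inside ZH.

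The main obstacle is the topmost-output case. The diagrammatic identity between $\tikzfig{bra-plus}$ and the gadget itself is a short ZH derivation, but distinct top-layer nodes may share children, so the monoid-joins created at such shared children produce configurations that are not immediately in SQMDD form; it is the iteration of the derived simplification rules---in particular the regrouping rule, whose ZH-realisation uses Lemmas \ref{lem:diagonal-distribution-monoid} and \ref{lem:bialgebra-monoid-gadget}---that eventually forces the output into a proper SQMDD.
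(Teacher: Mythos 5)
The weak point is exactly where you locate it---the topmost-output case---but the obstacle is more serious than you acknowledge, and your proposed resolution does not close it. When $\tikzfig{bra-plus}$ hits the root's effective wire, the gadget indeed collapses into a comultiplication-like structure that broadcasts the root's selection signal, with the weights $a$ and $b$, to \emph{both} children. If those children are distinct nodes $v_0\neq v_1$, the resulting diagram represents $a\cdot\interp{\ell(\mathfrak D)}+b\cdot\interp{r(\mathfrak D)}$ with the $\ket1$-initialisation now effectively reaching two different height-$(H-1)$ sub-DAGs: the diagram has two ``roots''. This is not a pseudo-SQMDD in the sense of Proposition \ref{prop:pseudo-SQMDD} (that proposition only exchanges one monoid for another in a diagram that is already in SQMDD form), and none of the derived simplification rules of Section \ref{sec:reduction-strategy} (weight normalisation, removal of redundant nodes, regrouping of nodes with identical weighted children) can merge two distinct rooted sub-DAGs into one. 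What is needed at this point is the full QMDD \emph{addition} operation, which the paper never derives in ZH and which your sketch does not supply. Since your lower-output case recurses until the target becomes the topmost output of some sub-SQMDD, this gap cannot be avoided in your scheme; your lower-output step also quietly assumes the two recursive calls act on disjoint subdiagrams, whereas they share nodes of the DAG.

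The paper sidesteps the problem by moving in the opposite direction: Proposition \ref{prop:SQMDD-swap} is used to permute the target wire to the \emph{last} output, whose layer's gadgets have only the terminal node as children. There the collapse produced by $\tikzfig{bra-plus}$ (Lemmas \ref{lem:gn-through-gadget}, \ref{lem:gn-1-0-on-NF-disconnect} and \ref{lem:comonoid-alternative}) feeds the two weighted edges of each bottom-layer node into the terminal node's monoid, and Lemma \ref{lem:monoid-sum} turns the pair of weights $a,b$ into the single weight $a+b$. The only ``addition of diagrams'' ever required is an addition of scalars, which is precisely what the monoid supports. To repair your proof you would have to either establish a ZH version of QMDD addition, or first swap the target output to the bottom as the paper does.
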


\begin{proof}
In appendix at page \pageref{prf:SQMDD-gn-1-0}.
\end{proof}

Combining the previous propositions, we get the expected result:
\begin{theorem}
Any ZH-diagram can be put in reduced SQMDD form.
\end{theorem}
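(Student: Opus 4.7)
The approach is to argue by structural induction on the ZH-diagram, combining the preceding propositions with the reduction strategy of Section~\ref{sec:reduction-strategy}. Since the SQMDD formalism lives on the state side, I would first apply the map/state duality $\psi_{n,m}$ to turn the input ZH-diagram $D:n\to m$ into a state $\psi_{n,m}(D):0\to n+m$; because this duality is a reversible graphical rearrangement, it suffices to set this state in SQMDD form and then transport back via $\psi^{-1}_{n,m}$. Any such state is, up to deformation, a sequential-parallel composite of generators, so the inductive skeleton is standard.

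The base case — each individual generator — is handled by Proposition~\ref{prop:generators-NF} for Z- and H-spiders; the remaining generators (identity, swap, cap, cup) become simple fixed states under the duality, whose SQMDD forms can be written down directly. For the inductive step, parallel composition of two SQMDD-form diagrams is put in SQMDD form by Proposition~\ref{prop:SQMDD-tensor}. Sequential composition, after the map/state duality, becomes the contraction of two SQMDD-form states by cups; using the decomposition of the cup as a 2-to-1 Z-spider followed by a $\bra{+}$ on one leg, each cup is realised by first applying Proposition~\ref{prop:SQMDD-swap} to bring the two wires to be contracted into a convenient position, then applying Proposition~\ref{prop:SQMDD-gn-2-1} to merge them, and finally applying Proposition~\ref{prop:SQMDD-gn-1-0} to close the resulting wire. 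Iterating over all cups produces an SQMDD form for the composite.

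Once an SQMDD form (not necessarily reduced) is reached, I would apply the rewrite strategy of Section~\ref{sec:reduction-strategy}: the proposition proved there shows that every rule of Figure~\ref{fig:QMDD-simplification} is derivable in the ZH-Calculus, and the lexicographic ordering introduced in Section~\ref{sec:SQMDD} ensures termination. Applying these rules exhaustively yields an irreducible, hence reduced, SQMDD form. The main obstacle is the sequential-composition step: one must carefully track which wires belong to which subdiagram and verify that the successive swap/merge/bra-plus moves do not interfere when many cups are contracted in sequence. Beyond this bookkeeping, the theorem is a direct assembly of the propositions already in hand.
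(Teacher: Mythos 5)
Your proposal follows essentially the same route as the paper: map/state duality, Proposition~\ref{prop:generators-NF} for the generators, Proposition~\ref{prop:SQMDD-tensor} for parallel composition, and the decomposition of each cup into a $2\to1$ Z-spider followed by $\bra{+}$ handled by Propositions~\ref{prop:SQMDD-swap}, \ref{prop:SQMDD-gn-2-1} and \ref{prop:SQMDD-gn-1-0}, with the terminating reduction strategy of Section~\ref{sec:reduction-strategy} applied at the end. This is exactly how the paper assembles the theorem, so your argument is correct.
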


Notice that this theorem gives an algorithm to translate any ZH-diagram into an SQMDD. Indeed, since a diagram $D$ in reduced SQMDD form is in the image of $[.]^{\operatorname{ZH}}$ (which is injective), it suffices to take its inverse.

%
%
%
%

\bibliography{../quantum-ref}

\appendix

\section{Useful Lemmas for the ZH-Calculus}
\label{sec:useful-lemmas}

\begin{multicols}{3}

\begin{lemma}
\label{lem:Z-H-multiple-links}
$$\tikzfig{Z-H-multiple-links}$$
\end{lemma}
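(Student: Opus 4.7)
The plan is to prove this identity by induction on the number of parallel wires (and/or H-edges) linking the Z-spider to the H-spider. For the base case with one link, the statement should reduce to a direct application of the Z-spider fusion rule or the H-spider fusion rule from Figure \ref{fig:ZH-rules}, possibly combined with the H-involution rule. For the base case with two links, the bialgebra interaction between Z and H (rule-bialgebra or its H-counterpart rule-bialgebra-2) should give the result directly, perhaps after using a small rearrangement.

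For the inductive step, I would isolate one pair of parallel links between the Z-spider and the H-spider. Using spider fusion, I split the Z-spider into two Z-spiders joined by an internal wire, with two of its legs feeding into the H-spider. Then applying the bialgebra rule on this configuration rearranges the substructure so that one pair of parallel links is effectively absorbed, reducing the number of links. Re-fusing the Z-spiders via the Z-spider rule brings the diagram into the form required to apply the induction hypothesis.

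An alternative, possibly cleaner, route would be to use the map/state duality: bend the links of the lemma into a state on the $2k$ wires (where $k$ is the number of links), compute the standard interpretation of both sides as vectors, and observe that the Z and H semantics from the interpretation table make the two vectors equal. Completeness of the ZH-calculus then gives the derivation. This matrix computation is purely mechanical and avoids the bookkeeping altogether, but is less illuminating and does not exhibit an explicit rewrite sequence.

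The main obstacle in the inductive approach is the careful bookkeeping of arities and scalar factors: each application of the bialgebra rule in ZH introduces a factor (typically $1/2$ or $2$), and when chained through induction these factors must combine to exactly the overall scalar on the right-hand side. Keeping the H-involution rule at hand to cancel pairs of successive H-boxes on a wire, and using Lemma \ref{lem:product-gen} to consolidate stray scalars, should be enough to close the computation cleanly.
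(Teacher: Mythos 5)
First, note that the paper does not actually prove Lemma \ref{lem:Z-H-multiple-links}: it is one of the appendix lemmas imported wholesale from \cite{backens2021ZHcompleteness}, so there is no in-paper rewrite sequence to match yours against. Your inductive skeleton is fine as far as it goes: fusing with $(zs)$ reduces the $k$-wire case to the two-wire case, and the one-wire case is vacuous. The soft spot is the claim that the two-wire case follows ``directly'' from $(ba_1)$ or $(ba_2)$. Neither bialgebra rule deletes a parallel edge --- they relate a multiplication-followed-by-comultiplication to its crossed composite --- whereas removing a doubled wire between a Z-spider and an H-box is a Hopf-/idempotence-type identity (semantically it is $x\cdot x=x$ inside the exponent of $r^{x_1\cdots x_n}$). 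In \cite{backens2021ZHcompleteness} this step is genuinely more work: it goes through a longer derivation involving, among other rules, the ortho rule $(o)$ and the idempotence of the two-input AND-gadget, not a single bialgebra application. So, read as an explicit rewrite proof, your proposal has a gap precisely at its base case, and the scalar bookkeeping you worry about is the least of the difficulties there.

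That said, your fallback is sound and, for the role this lemma plays in the paper, arguably the right move: both sides have equal standard interpretations (the duplicated input contributes $x^2=x$ in the exponent of $r$), and completeness of the ZH-calculus then guarantees derivability. Since the paper only ever uses Lemma \ref{lem:Z-H-multiple-links} as a black box and defers its proof to \cite{backens2021ZHcompleteness}, the semantic argument is an acceptable justification; just do not present the two-wire reduction as an immediate consequence of the bialgebra rules.
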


\begin{lemma}
\label{lem:Not-through-H-box}
$$\tikzfig{Not-through-H-spider}$$
\end{lemma}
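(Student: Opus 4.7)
The plan is to unfold the NOT gate using its ZH-calculus definition. Recall that the X-spider is a derived construction: one natural presentation is $\text{NOT} = Z(\pi) $ conjugated by H-spiders, or equivalently, the NOT leg can be expressed as a Z-spider with a $\ket{1}$ leaf attached via an H-spider. I would first rewrite the NOT sitting on one leg of the H-box according to this definition, producing a small diagram with an H-spider adjacent to the H-box and a Z-spider carrying a $\ket{1}$.

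Next I would exploit the bialgebra rule between the Z-spider and the H-spider (the \texttt{rule-bialgebra} axiom in Figure \ref{fig:ZH-rules}) to push the Z-content through the H-box. Each application replaces a Z-H crossing with a new arrangement of spiders, introducing extra H-spiders on the other legs. The H-involution (\texttt{rule-H-involution}) then collapses pairs of adjacent H-spiders, so that most of the freshly generated spiders disappear. What survives is a copy of the $\ket{1}$/NOT that has been transported onto a different leg of the H-box, which is precisely what the lemma claims.

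Finally I would tidy up using the Z-spider and H-spider fusion rules (\texttt{rule-Z-spider} and \texttt{rule-H-spider}) to merge any remaining connected spiders of the same color into a single one, and absorb the accumulated scalars. Lemma \ref{lem:Z-H-multiple-links} (the preceding lemma in the appendix) is likely useful here to handle the multi-link case uniformly, avoiding a separate induction on the arity of the H-spider.

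The main obstacle, I expect, is \emph{scalar bookkeeping}: the bialgebra rule and the H-involution each carry multiplicative scalar corrections in the ZH-calculus, and to make the equality hold on the nose rather than merely up to global phase one must track these carefully as each rewrite fires. The structural side of the proof is essentially forced by the rewrites, but verifying that the scalars balance out — particularly when the H-box has high arity so many bialgebra applications are chained — is the delicate part. If scalar control becomes awkward, I would fall back on verifying the equality on the computational basis via the standard interpretation and then invoking completeness of ZH, but a purely diagrammatic derivation along the lines above is the preferred route for an appendix of \emph{useful lemmas}.
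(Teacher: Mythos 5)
The paper does not actually prove this lemma: it sits in the appendix list of ``useful lemmas, proven in \cite{backens2021ZHcompleteness}'', so the in-paper justification is a citation, not a derivation. Your sketch is therefore not competing with an explicit proof here, and its overall shape --- unfold the NOT via its definition as a Z-spider conjugated by H-boxes, push it through the H-box with the Z--H bialgebra rule $(ba_2)$, cancel auxiliary H-boxes with $(hh)$, and fuse the remainder with $(zs)$/$(hs)$ --- is indeed how this family of lemmas is established in the cited reference. In that sense your route is the ``real'' proof the paper outsources.

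As written, though, it is a plan rather than a proof, and the two points you defer are exactly where it could break. First, the scalars: in this presentation $(hh)$ cancels two default H-boxes only up to a factor of $2$, and the bialgebra rules carry their own scalar corrections, so the on-the-nose equality genuinely hinges on the bookkeeping you postpone. Second, $(hh)$ applies only to the default ($-1$-labelled) H-box; if the H-box in the lemma carries a general parameter $r$, the H-boxes generated by unfolding the NOT cannot simply be cancelled against it, so the step ``most of the freshly generated spiders disappear'' needs a more careful argument (this is where Lemma \ref{lem:Z-H-multiple-links} and the multiply/average rules typically intervene). Your fallback --- verify the standard interpretations agree and invoke completeness --- is legitimate from the standpoint of this paper, which takes completeness as a black-box theorem, and is effectively no weaker than the paper's own proof-by-citation; the only caveat is that it would be circular inside \cite{backens2021ZHcompleteness} itself, which is why that reference derives the lemma diagrammatically.
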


\begin{lemma}
\label{lem:bialgebra-Z-AND}
\def\fig{Z-And-bialgebra}
$$\frac12~
\eq{}\frac1{2^n}~$$
\end{lemma}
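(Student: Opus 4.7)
The plan is to prove this by induction on $n$, which is the parameter controlling the exponent in $\frac{1}{2^n}$ on the right-hand side. The statement has the flavour of a generalised bialgebra rule between the Z-spider (copy structure) and the H-spider (acting as the classical AND/conjunction) with $n$ connecting wires, and such generalisations in ZH always unfold one wire at a time using the primitive bialgebra axioms $(ba_1)$ and $(ba_2)$.

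For the base case, I expect $n=1$ (or $n=2$ if the figure's indexing starts there) to fall out directly from the atomic bialgebra axiom of the ZH-calculus, possibly combined with the $H$-involution and $Z$-spider fusion to absorb the $\frac{1}{2}$ scalar. The natural inductive step is to isolate one of the $n$ connecting wires between the Z-spider and the H-spider, apply the primitive bialgebra rule there to split that single wire into a disconnected pair (producing a factor of $\frac{1}{2}$), then use spider fusion on both the Z and the H sides to recombine the remaining structure into an instance with $n-1$ connecting wires, to which the induction hypothesis applies.

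The main subtlety will be the bookkeeping of the scalar and of the arities. Each application of the primitive bialgebra rule introduces a scalar; these should multiply along the induction to produce exactly the $\frac{1}{2^n}$ on the right (with the overall $\frac{1}{2}$ on the left accounting for the missing factor that the base case does not remove). I also need to be careful about parasitic spiders of arity 1 or 2 created by fusion: they should either be absorbed into the remaining structure by Z-spider fusion, or simplify via the identity/H-involution rules.

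The hardest part is likely to be identifying the correct ``peel off one wire'' reduction so that the intermediate diagrams stay in the shape required by the induction hypothesis, rather than drifting into a shape to which only further lemmas apply. Once that reduction pattern is fixed, each step of the induction is a routine application of $(ba_1)$ (or $(ba_2)$) followed by Z-spider fusion and H-spider fusion, and the scalar count then works out automatically.
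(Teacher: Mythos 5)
The first thing to note is that the paper does not actually prove Lemma~\ref{lem:bialgebra-Z-AND}: it is one of the ``useful lemmas'' of Section~\ref{sec:useful-lemmas} that are imported wholesale from the ZH completeness paper \cite{backens2021ZHcompleteness}, so there is no in-paper derivation to compare your attempt against. Your plan --- induction on the arity, peeling off one connecting wire at a time via the atomic bialgebra axiom and recombining by spider fusion while tracking powers of $2$ --- is indeed the standard way such generalised bialgebra laws are established in the ZH literature, and it is in the spirit of how the cited reference proves this statement. So the overall strategy is sound and appropriately chosen.

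That said, as written the proposal has two gaps you would hit when executing it. First, a single induction on $n$ does not suffice: the right-hand side is a complete bipartite arrangement, so the base case of your induction is not the atomic $(ba_2)$ axiom but already a generalisation of it in the \emph{other} arity (the number of Z-spider legs), which itself requires a separate induction using coassociativity of the Z-spider; the proof is really a double (or staged) induction, and you need to say which generalisation you establish first. Second, the claim that the recombination step is ``routine $\ldots$ fusion'' glosses over the fact that H-boxes, unlike Z-spiders, do not fuse freely in ZH: merging the H-box structure produced after one peeling step back into the shape required by the induction hypothesis goes through the $(hs)$ and multiply rules (cf.\ Lemma~\ref{lem:product-gen}), each of which contributes its own scalar, so the bookkeeping that you claim ``works out automatically'' is precisely where the $\frac12$ versus $\frac1{2^n}$ discrepancy has to be earned. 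Neither issue is fatal --- both are resolved in \cite{backens2021ZHcompleteness} --- but the proposal is a plan rather than a proof until they are addressed.
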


\begin{lemma}
\label{lem:X-spider}
$$\tikzfig{X-spider-1}$$
and
$$\tikzfig{X-spider-2}$$
\end{lemma}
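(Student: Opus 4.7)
The plan is to unfold each X-spider via its defining equation given just after the list of generators, namely as a Z-spider wrapped with an H-box of parameter $-1$ (a Hadamard box) on every one of its wires. Once both sides are expressed purely in the Z/H fragment of the language, the identities should reduce to combinations of the Z-spider fusion rule, the H-involution rule, and (when relevant) Lemma~\ref{lem:Z-H-multiple-links} controlling multiple H-linked Z-spiders.

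First I would rewrite every X-spider occurring in both the left- and right-hand sides of each equation as its Z-with-Hadamards counterpart. For equation \tikzfig{X-spider-1}, I expect this to turn the picture into two Z-spiders joined through several H-boxes, each also carrying H-boxes on the external legs. Pairs of H-boxes that end up meeting on internal wires can be removed by H-involution, after which the two bare Z-spiders merge by rule \tikzfig{rule-Z-spider}. Re-folding the external Hadamards back into X-spider form then yields the right-hand side, essentially reproducing the familiar spider-fusion law for the X colour. For the second equation \tikzfig{X-spider-2}, the same unfold/cancel/fuse/re-fold strategy should work; depending on which identity it encodes (e.g.\ a basis-state copy or a colour-change), the final step may instead invoke Lemma~\ref{lem:Not-through-H-box} or the $\ket{0},\ket{1}$ rewrites that express how Z-basis states pass through an H-box.

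The main obstacle is bookkeeping: keeping track of scalar factors arising whenever two Z-spiders are connected by more than one H-box, and making sure the number of external Hadamards is correct at each step so that the re-folding back into X-spider notation is legal. If scalars accumulate, I would discharge them using Lemma~\ref{lem:Z-H-multiple-links}, which states precisely how multiple Hadamard links between two Z-spiders simplify. If the second equation is of bialgebra flavour (Z versus X interaction), I would instead lean on Lemma~\ref{lem:bialgebra-Z-AND}, adjusting the exponent of $\tfrac12$ so the global scalar matches.

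Assuming none of the two equations encodes anything more exotic than a spider or colour-change rule for X, the overall proof is thus a routine unfold, cancel back-to-back Hadamards, fuse the remaining Z-spiders, and re-fold; no induction or heavy auxiliary construction should be needed, which is consistent with its placement in the appendix as a stock ZH-calculus lemma.
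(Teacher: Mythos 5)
The paper never actually proves this lemma: it sits in the stock list of Section~\ref{sec:useful-lemmas}, which the text explicitly imports as ``proven in \cite{backens2021ZHcompleteness}'', so there is no in-paper derivation to measure your argument against. That said, your unfold--cancel--fuse--refold strategy is exactly how such gray-spider identities are established in that reference: expand each X-spider into its defining Z-spider dressed with H-boxes (together with its normalising scalar), annihilate the back-to-back H-boxes on internal wires, merge the surviving Z-spiders by the spider rule, and fold the external H-boxes back into X-spider notation. In that sense the plan is pointed in the right direction.

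The genuine weakness is that you never commit to what the two equations say. The proposal branches over several possibilities (fusion, colour change, basis-state copying, a bialgebra-flavoured law) and argues that \emph{some} standard route should handle each; since the paper defines two gray generators (the plain X-spider and its negated variant), the lemma records fusion-type laws for these and your first branch is the relevant one, but a proof must fix the statement and then actually discharge the scalars. In particular, the $1\to 1$ H-box with parameter $-1$ has interpretation $\begin{pmatrix}1&1\\1&-1\end{pmatrix}$, whose square is $2$ times the identity, so every cancelled pair of H-boxes on an internal wire contributes a factor of $2$ that must be absorbed by the normalisation hidden in the X-spider's definition; you name this bookkeeping as ``the main obstacle'' but do not carry it out. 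As written this is a credible plan rather than a proof, and the conditional phrasing (``assuming none of the two equations encodes anything more exotic'') leaves the second equation essentially unaddressed.
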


\begin{lemma}
\label{lem:classical-on-H-box}
$$\tikzfig{ket-0-on-H-spider}$$
and
$$\tikzfig{ket-1-on-H-spider}$$
\end{lemma}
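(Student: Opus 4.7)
The plan is to verify each of the two identities by a short calculation in the ZH-calculus using the axioms in Figure \ref{fig:ZH-rules} together with the preceding lemmas in this section. The two cases behave qualitatively differently: plugging $\ket{0}$ must force the H-spider's parameter $r$ to become irrelevant and decouple the remaining legs into independent $\ket{+}$ wires, whereas plugging $\ket{1}$ must merely reduce the arity of the H-spider by one while preserving $r$. I would prove them in this order so the second can, if convenient, rely on the first.

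For the $\ket{0}$-on-H-spider case, my approach is to first rewrite $\ket{0}$ via its X-spider characterization (Lemma \ref{lem:X-spider}), so that its input into the H-box appears as an X-spider composed with a bent Hadamard. I would then apply the Z-AND bialgebra identity (Lemma \ref{lem:bialgebra-Z-AND}) to push the X-spider through the H-spider; this replaces the $n$-ary H-spider by an $(n-1)$-ary H-spider whose parameter collapses. Using the product rule together with Lemma \ref{lem:product-gen} (to merge and eliminate scalars) and the H-involution axiom, the result simplifies to a tensor of $\ket{+}$ states on the remaining legs, independent of $r$, which is the desired right-hand side.

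For the $\ket{1}$-on-H-spider case, the cleanest route is to use Lemma \ref{lem:Not-through-H-box}, which governs how a Not gate passes through an H-spider, together with the identity $\ket{1} = X\ket{0}$. I would push the Not through the H-spider using \ref{lem:Not-through-H-box}, apply the first part of the lemma to dispose of the resulting $\ket{0}$, and then reabsorb the Not gates that were produced on the remaining legs; the net effect is that one leg is removed while the parameter $r$ is preserved, yielding $H_m^{n-1}(r)$ on the remaining wires.

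The main obstacle is bookkeeping rather than conceptual: the bialgebra manipulations introduce scalar factors and extra X/Z-spiders on every surviving leg, and one must check that these simplify exactly to the right-hand sides without any spurious parameters or normalization errors. In particular, verifying that the $r$-entry truly disappears in the $\ket{0}$ case (and that no stray factor of $r$ survives from the bialgebra step) is the subtle point, whereas in the $\ket{1}$ case the delicate task is ensuring the Not gates pushed out by Lemma \ref{lem:Not-through-H-box} fuse back cleanly rather than remaining on the output legs.
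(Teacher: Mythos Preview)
The paper does not give its own proof of this lemma: everything in Section~\ref{sec:useful-lemmas} is quoted from \cite{backens2021ZHcompleteness} without argument, so there is no in-paper derivation to compare your sketch against. Your outline for the $\ket 0$ case is reasonable and broadly in line with how such identities are handled in that reference.

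Your plan for the $\ket 1$ case has a genuine gap. You propose to write $\ket 1 = X\ket 0$, push the NOT through the H-spider via Lemma~\ref{lem:Not-through-H-box}, and then invoke the $\ket 0$ case on what remains. But the $\ket 0$ case \emph{disconnects} the H-spider into independent $\ket +$ (resp.\ $\bra +$) wires and in doing so discards the parameter $r$ entirely; any NOTs that were pushed onto the surviving legs are then absorbed for free since $X\ket + = \ket +$. The net result of your procedure is therefore the all-ones map, independent of $r$, whereas the statement to be proved is that one obtains $H^{n-1}_m(r)$ with $r$ intact. Any argument that factors through the $\ket 0$ identity will necessarily have this defect. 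A correct route must keep the H-spider alive while removing the leg: for instance, unfold $\ket 1$ into its Z-spider/H-box constituents and use the spider fusion and bialgebra rules directly, so that the leg is deleted while the parameter is carried through untouched.
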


\begin{lemma}
\label{lem:product-gen}
$$\tikzfig{rule-product-gen}$$
\end{lemma}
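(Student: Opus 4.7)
The plan is to prove the generalized product rule by induction on the number of H-spiders appearing on the left-hand side of the equation, treating the base product rule (p) of Figure \ref{fig:ZH-rules} as the inductive engine. In the base case, either the configuration contains a single H-spider and the equation is trivial by reflexivity, or the configuration reduces directly to the axiom (p) in its stated arity; either way no work is required beyond identifying the pattern.

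For the inductive step, suppose the left-hand side contains $n+1$ H-spiders $H(r_1), \dots, H(r_{n+1})$ sharing the appropriate Z-spider connectivity. I would first select two of them, say $H(r_1)$ and $H(r_2)$, and use the Z-spider fusion rule (Z) to present the Z-spiders attached to them in a form where the base product axiom applies verbatim, fusing $H(r_1)$ and $H(r_2)$ into a single $H(r_1 r_2)$. This produces a configuration with exactly $n$ H-spiders in the same shape, so the induction hypothesis gives a single $H(r_1 r_2 \cdots r_{n+1})$ as required. The resulting parameter accumulates correctly because complex multiplication is associative and commutative, matching the right-hand side.

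The main obstacle is that after applying the base product rule, the arities of the residual Z-spiders and the connectivity of the fused H-spider need to remain compatible with the inductive pattern. To handle this, I would insert an arity-normalization step using the Z-spider fusion rule (Z) to absorb adjacent Z-spiders that become connected after the fusion, together with Lemma \ref{lem:Z-H-multiple-links} to collapse any parallel wires between a Z-spider and the newly fused H-spider into a single wire. Once this cleanup is done, the diagram sits in the exact form required by the inductive hypothesis.

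A minor secondary point is the degenerate case $n=0$, corresponding to the scalar identity $\tikzfig{rule-product-0}$ alluded to in the discussion preceding Lemma \ref{lem:product-gen}. I would verify this case by direct computation: an empty tensor of H-boxes collapses to $1$, matching the convention that multiplication over the empty set yields the multiplicative unit. This case also justifies the remark that global scalars multiply under the ``forgetting the box'' convention, closing the loop with the body of the paper.
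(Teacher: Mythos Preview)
The paper does not actually give its own proof of this lemma: it is listed in the appendix (Section~\ref{sec:useful-lemmas}) among the ``useful lemmas'' that are explicitly attributed to \cite{backens2021ZHcompleteness}, with no derivation supplied in this paper. So there is no in-paper proof to compare against.

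That said, your inductive strategy is exactly the standard one and is essentially how the result is obtained in the cited reference: peel off one H-box at a time using the binary product axiom (m), use Z-spider fusion to restore the spine to the required shape, and invoke the induction hypothesis on the remaining $n$ boxes. The cleanup via Lemma~\ref{lem:Z-H-multiple-links} is a sensible safeguard, though depending on the precise arity pattern of (m) it may not even be needed. Your treatment of the degenerate $n=0$ case is also correct and matches the paper's remark about $\tikzfig{rule-product-0}$. In short, your proposal is sound and is what the paper is implicitly relying on by citation.
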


\begin{lemma}
\label{lem:CNOT-on-H-legs}
\def\fig{CNot-through-H-spider}
\begin{align*}

\eq{}\input{./figures/\fig/\fig_05.tikz}
\end{align*}
\end{lemma}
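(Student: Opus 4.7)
The plan is to unfold the CNOT in terms of the basic ZH generators and then commute the pieces across the H-spider using the standard H-box bialgebra-type identities. Recall that a CNOT can be written as a Z-spider on the control leg connected to an X-spider on the target leg; by Lemma~\ref{lem:X-spider}, the X-spider is itself a Z-spider with H-boxes on each of its legs. So the first step is to substitute these decompositions into the left-hand side, producing a diagram in which the two legs of the H-spider each carry an explicit Z-spider, with an extra H-box sitting on the target leg together with a Z-spider linking control to target.

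Next I would push the H-box sitting on the target leg into the main H-spider. For this the appropriate tools are Lemma~\ref{lem:H-involution} (H-involution absorbing one H-box) and, more importantly, the Z/AND bialgebra Lemma~\ref{lem:bialgebra-Z-AND}, which lets a Z-spider be moved across a multi-leg H-box at the cost of a controlled scalar $\tfrac{1}{2^n}$. Applying this at the target leg converts the ``Z feeding into H through H-box'' configuration into a new arrangement where the Z-spider is now attached directly to the H-spider's legs, with H-boxes appearing on the new internal wires. I would then use Lemma~\ref{lem:Not-through-H-box} to commute the remaining NOT component through the H-spider, so that the only structure left on the two target legs is what the right-hand side displays.

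The last step is spider fusion and cleanup: the Z-spiders that were introduced by the X-spider decomposition get fused with their neighbours using the Z-spider axiom $(s_{Z})$, any extra H-boxes collapse via H-involution, and the bookkeeping scalars coming from Lemmas~\ref{lem:bialgebra-Z-AND} and \ref{lem:X-spider} should cancel exactly against each other, since a CNOT is unitary and must therefore give an overall scalar of $1$. This yields precisely the right-hand side $\tikzfigc{05}$ of the claimed equality.

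The main obstacle I expect is scalar bookkeeping: Lemma~\ref{lem:bialgebra-Z-AND} carries a factor $\tfrac{1}{2^n}$ depending on the arity of the H-box involved, and the X-spider definition in Lemma~\ref{lem:X-spider} introduces its own normalisation. Getting these to cancel and verifying that no stray $\tfrac{1}{\sqrt{2}}$ survives will require running the manipulation very carefully, and it is the part most prone to an off-by-a-factor-of-two error. A secondary concern is ensuring that the commutation of the NOT through the H-spider (Lemma~\ref{lem:Not-through-H-box}) is applied on the correct subset of legs so that the remaining wires end up connected in the exact pattern shown on the right-hand side; drawing the diagram at each intermediate step should make this bookkeeping unambiguous.
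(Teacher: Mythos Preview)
The paper does not give its own proof of this lemma: it is listed in Section~\ref{sec:useful-lemmas} among the ``useful lemmas'' whose proofs are deferred to \cite{backens2021ZHcompleteness}. So there is no in-paper derivation to compare your proposal against.

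That said, your outline is the standard route one takes for this identity and matches what is done in the cited reference: unfold the CNOT using the X-spider decomposition (Lemma~\ref{lem:X-spider}), then push the resulting Z-spiders and H-boxes across the H-spider using the Z/AND bialgebra (Lemma~\ref{lem:bialgebra-Z-AND}) together with Lemma~\ref{lem:Not-through-H-box}, and finish by spider fusion. Your instinct that the delicate part is the scalar bookkeeping is correct; the $\tfrac{1}{2^n}$ from Lemma~\ref{lem:bialgebra-Z-AND} and the normalisations coming from the X-spider definition do cancel, but this is exactly where a careless derivation picks up a stray factor of $2$.

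Two small corrections on notation: there is no Lemma~\ref{lem:H-involution} in this paper --- H-involution is the axiom labelled $(hh)$ in Figure~\ref{fig:ZH-rules} --- and Z-spider fusion is written $(zs)$ here, not $(s_Z)$. Using the paper's labels will make your derivation slot in cleanly.
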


\begin{lemma}
\label{lem:0-box}
\def\fig{H-spider-0}
\begin{align*}

\eq{}\frac12~\input{./figures/\fig/\fig_05.tikz}
\end{align*}
\end{lemma}

\begin{lemma}
\label{lem:1-box}
\def\fig{H-spider-1}
\begin{align*}

\eq{}
\end{align*}
\end{lemma}

\begin{lemma}
\label{lem:bialgebra-neg}
$$\tikzfig{bialgebra-neg}$$
\end{lemma}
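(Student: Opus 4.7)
The plan is to derive this identity entirely from the standard ZH bialgebra axiom together with the ``Not through H-box'' commutation rule (Lemma \ref{lem:Not-through-H-box}); the role of the lemma is essentially to package, as a single rewrite step, what happens when one side of the usual Z/H bialgebra rule carries an extra negation that needs to be shuffled across the rule.

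First I would start from the side of the equation where the negation is most localised and expose the standard bialgebra pattern underneath it. By ``only connectivity matters'' I can freely deform the diagram so that the negation sits on the wire connecting the Z-spider to the H-spider (or, equivalently, on one of the H-box's legs), isolating a subdiagram to which the ZH axiom (ba$_1$) or (ba$_2$) applies directly. If the negation would obstruct the application of the axiom, I would first pull it through the nearest H-box with Lemma \ref{lem:Not-through-H-box}, which replaces it by a Not on each of the other legs (together with the appropriate scalar), clearing the bialgebra region.

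Next I would apply the ZH bialgebra rule itself to the cleared region. After this step I expect to obtain a copied structure in which several Not gates are now distributed across the new H-boxes or Z-spiders. I would then push them back into their final positions using Lemma \ref{lem:Not-through-H-box} again, together with spider fusion for the Z-spider legs; if X-spiders appear during this process, Lemma \ref{lem:X-spider} provides the relevant manipulations. Finally, any duplicated or cancelled Not gates collapse via the involutivity of negation, which is an instance of H-involution on the $-1$-labelled H-box.

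The main obstacle I anticipate is bookkeeping rather than conceptual: the standard bialgebra rule of ZH carries a scalar (typically a $\tfrac12$), and each commutation of Not through an H-box can also produce global factors. I would collect these at each step using Lemma \ref{lem:product-gen} so that, when the diagram is recomposed, the overall scalar on the two sides matches. Provided the Not gates are pushed consistently in the same direction throughout, the derivation should close in a handful of rewrites, with no need for any rule outside the ones already tabulated in Figure \ref{fig:ZH-rules} and the auxiliary lemmas \ref{lem:Not-through-H-box}, \ref{lem:X-spider}, and \ref{lem:product-gen}.
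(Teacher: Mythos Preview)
The paper does not actually prove Lemma~\ref{lem:bialgebra-neg}: it is one of the ``useful lemmas'' in Section~\ref{sec:useful-lemmas}, all of which are imported without proof from \cite{backens2021ZHcompleteness}. So there is no paper-proof to compare your proposal against, and your attempt is strictly more than what the paper offers.

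As for the proposal itself, the strategy is sound in outline. The identity in question is the Z/X bialgebra law, and since the X-spider is by definition a Z-spider conjugated by H-boxes (with a Not on one leg in the ``negated'' variant), reducing it to the Z/H bialgebra axiom (ba$_1$) plus commutation of Not through H-boxes (Lemma~\ref{lem:Not-through-H-box}) and X-spider manipulations (Lemma~\ref{lem:X-spider}) is exactly how one would expect to proceed. One caution: your write-up stays at the level of a plan rather than a derivation, and the phrase ``if the negation would obstruct the application of the axiom'' is doing a lot of work---in practice the scalar bookkeeping and the precise placement of the Not after each commutation are where such derivations go wrong, so to turn this into a proof you would need to commit to a specific diagrammatic sequence rather than describe the shape of one. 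But nothing in the approach is incorrect.
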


\begin{lemma}
\label{lem:2-box}
$$\tikzfig{box-2}$$
\end{lemma}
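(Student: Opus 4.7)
My plan is to prove Lemma \ref{lem:2-box} by decomposing the parameter $r=2$ into more elementary pieces that the equational theory can handle directly. The ZH-calculus does not manipulate complex parameters in a truly "symbolic" way; instead it offers two key mechanisms to combine parameters, namely the \emph{product} rule (which concatenates two H-boxes sharing all their wires into a single H-box whose parameter is the product) and the \emph{average} rule (which combines two H-boxes on the same wires via a gadget into one whose parameter is their average). The target form on the right of the lemma looks like the canonical way of realising $2$ either as $2 = (-1)\cdot(-2)$, or as the average $2 = \frac{a+b}{2}$ for some $a,b$, or — more likely given the rest of the appendix — as $2 = 1 + 1$ expressed via the monoid/gadget syntactic sugar introduced in Section 2.3.

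The first step I would take is to apply Lemma \ref{lem:product-gen} in reverse to split the 2-box into a composition of H-boxes with parameters already known to us (namely $0$, $1$, and $-1$ via Lemmas \ref{lem:0-box}, \ref{lem:1-box} and the $(hh)$ involution of Figure \ref{fig:ZH-rules}). Then I would apply the bialgebra rule and the average axiom to collapse the gadget structure into the right-hand side. Along the way, the equalities $\tikzfig{rule-average}$, $\tikzfig{rule-inverse}$, and $\tikzfig{rule-O}$ of Figure \ref{fig:ZH-rules} provide the arithmetic: in particular, the average rule takes two parallel H-boxes with parameters $r_1, r_2$ and replaces them by an H-box of parameter $(r_1+r_2)/2$ with a scalar adjustment, which is exactly what is needed to produce $2$ from two copies of $1$ or from $3$ and $1$ combined with a scalar rescaling.

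Concretely the proof proceeds in three blocks: (i) introduce an extra H-box of known parameter via the involution $(hh)$ so that the 2-box becomes the product of two manageable pieces; (ii) rewrite using the bialgebra axiom and the classical-state computations of Lemma \ref{lem:classical-on-H-box} so that the combinatorics lines up with the gadget/monoid sugar; (iii) gather the leftover scalar using Lemma \ref{lem:product-gen} and the convention $r_1 \cdot D_1 \otimes r_2 \cdot D_2 = (r_1 r_2)\cdot(D_1\otimes D_2)$. The main obstacle is bookkeeping of scalars rather than any deep identity: each application of the average or bialgebra axiom introduces a $\tfrac12$ or a factor of $2$ that must be tracked carefully so that the final global scalar on the right-hand side matches, and the usual completeness argument is no help since the lemma must be derived syntactically for use in the later proofs.
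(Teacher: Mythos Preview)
The paper does not actually prove Lemma~\ref{lem:2-box}. It is listed in Appendix~\ref{sec:useful-lemmas} among the ``useful lemmas'' that the paper explicitly says are \emph{proven in \cite{backens2021ZHcompleteness}} and merely imported here for use in later derivations. So there is no in-paper proof to compare against; the paper treats this identity as a known fact from the literature.

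That said, your proposal as written is not yet a proof but a plan, and the plan itself has a genuine gap: you list three mutually incompatible decompositions of the parameter $2$ (as $(-1)\cdot(-2)$, as an average $(a+b)/2$, and as $1+1$ via the monoid sugar) and never commit to one. The product route via $2=(-1)\cdot(-2)$ is a dead end unless you already have a form for the $(-2)$-box, which is no easier than the $2$-box you started with. The ``$1+1$ via monoid sugar'' route is circular in this paper's dependency order, since the monoid constructions of Section~2.3 are introduced \emph{after} the basic H-box arithmetic and the later lemmas that use $\tikzfig{box-2}$ (e.g.\ Lemma~\ref{lem:ctrl-2-on-0-box}) treat it as primitive input. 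The average route is the viable one: $2=\tfrac{1+3}{2}\cdot\tfrac{4}{4}$ is not what you want, but the axiom $(i)$ of Figure~\ref{fig:ZH-rules} together with $(a)$ does let you realise small integer parameters; in \cite{backens2021ZHcompleteness} the $2$-box is obtained essentially from the inverse/average machinery combined with the X-spider identities, and your ``block (ii)'' gestures in that direction without pinning down which instance of $(a)$ or $(i)$ is being invoked or on which wires. Until you fix a single decomposition and write out the actual chain of rewrites with their scalar factors, there is no way to check that the $\tfrac12$'s and $2$'s balance, which you yourself flag as the main obstacle.
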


\begin{lemma}
\label{lem:gn-0-0}
$$\tikzfig{gn-0-0}\eq{}2$$
\end{lemma}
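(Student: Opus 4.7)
The diagram $\tikzfig{gn-0-0}$ is a closed Z-spider---a Z-node with a self-loop and no external wires---whose standard interpretation is $\operatorname{Tr}(Z_1^1) = \operatorname{Tr}(I_2) = 2$. By soundness and completeness of the ZH-calculus (recalled in Section \ref{sec:ZH}), the stated scalar identity must therefore be derivable, and the plan is to exhibit a short derivation through cap--cup normalisation.

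The first step is to reshape the self-loop as a cup composed with a cap. Using the implicit ``only connectivity matters'' paradigm together with the snake/yanking equations built into the calculus, the self-loop on a Z-spider with no external legs can be re-drawn as a Z-spider $Z_2^0$ whose two outputs are immediately bent together by a cup $\epsilon$. Since the standard interpretations give $\interp{Z_2^0} = (1,0,0,1)^T = \interp{\tikzfig{cap}}$, by completeness of ZH one has $Z_2^0 = \eta$ derivationally, so the diagram rewrites to the scalar morphism $\epsilon \circ \eta$.

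The second step is to normalise this scalar to the explicit value $2$. Directly one has $\interp{\epsilon \circ \eta} = (1,0,0,1)(1,0,0,1)^T = 2$. For a purely diagrammatic derivation, I would expand the cup via the decomposition $\tikzfig{cup} = \tikzfig{cup-from-Z}$ already used in Section \ref{sec:SQMDD-form}, apply the Z-spider fusion rule to collapse the two resulting connected Z-spiders, and then finish by recognising the resulting H-box/Z-spider configuration as the scalar $2$ using Lemma \ref{lem:0-box} together with the H-involution rule.

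The main obstacle is the first reshaping step: making rigorous, within the formal calculus, the identification of ``Z-spider with a self-loop'' with ``$Z_2^0$ bent into a cup''. Intuitively this is immediate from ``only connectivity matters'', but as an explicit chain of ZH rewrites it requires a careful interplay between Z-spider fusion and the snake equations. Once the diagram is in $\epsilon \circ \eta$ form, the remaining computation down to the literal scalar $2$ is an easy application of the basic ZH lemmas already collected in the appendix.
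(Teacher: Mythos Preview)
The paper does not prove this lemma itself: it is listed in Section~\ref{sec:useful-lemmas} among the auxiliary results imported without proof from~\cite{backens2021ZHcompleteness}, so there is no in-paper derivation to compare your argument against.

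Your sketch is sound. Two small comments. First, the figure is almost certainly just the $0$-ary Z-spider $Z_0^0$ (a bare white dot), in line with the \texttt{gn-}$n$\texttt{-}$m$ naming used elsewhere in the paper for $n$-input, $m$-output Z-spiders, rather than a node carrying a self-loop. This is harmless---by the spider rule $(zs)$ a self-loop on a Z-node can be absorbed, and both pictures denote the scalar $2$---but it dissolves the ``main obstacle'' you flag: there is nothing to reshape, and $Z_0^0 = Z_0^2 \circ Z_2^0$ is a single application of $(zs)$, after which the identification of $Z_2^0$ (resp.\ $Z_0^2$) with the cap (resp.\ cup) is part of the ``only connectivity matters'' package rather than something requiring completeness. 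Second, invoking the full completeness theorem to justify $Z_2^0=\eta$ is valid but far heavier than needed; that identification is immediate from the compact-closed structure built into the calculus. The final normalisation to the explicit scalar box is then a routine computation along the lines you indicate.
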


\begin{lemma}
\label{lem:hopf-gn-H}
$$\tikzfig{hopf-gn-H-box}$$
\end{lemma}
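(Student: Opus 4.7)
The plan is to prove Lemma \ref{lem:hopf-gn-H} by the standard bialgebra-plus-scalar-collapse pattern used throughout ZH for Hopf-style identities. Since the identity is a Hopf-like equation between a generalised $g_n$ node (i.e.~a Z-spider) and an H-box joined by two parallel wires, the natural first move is to feed the two-wire connection into the bialgebra rule $(ba)$ of Figure \ref{fig:ZH-rules}, or more directly into its generalised version stated as Lemma \ref{lem:bialgebra-Z-AND}. This transforms the two-wire link into a grid of $2\times 2$ new Z- and H-generators, at the cost of a fixed global scalar.

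Next, I would collapse the inner grid. After applying the bialgebra rule, one side of the grid collects Z-spiders that are either disconnected from the ``external'' legs or joined only by auxiliary wires to H-boxes. I would fuse all Z-spiders on each row/column by spider-fusion (axiom $(Z)$), so that the inner part consists of a small number of H-boxes whose legs all terminate on a single Z-spider. Using Lemma \ref{lem:0-box} (and possibly Lemma \ref{lem:1-box}) these ``$0$-input H-boxes'' reduce to explicit scalars, and Lemma \ref{lem:gn-0-0} takes care of any scalar-only closed Z-spider loops that emerge from contracting the remaining internal wires.

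The remaining external structure is simply a $g_n$ node on one side and an H-box on the other, disconnected up to the scalar computed in the previous step. Comparing this scalar with the one on the right-hand side of the statement should finish the argument. The main obstacle is bookkeeping: making sure that the numeric prefactors produced by $(ba)$, by Lemma \ref{lem:0-box}, and by each closed loop multiply up exactly to the scalar appearing in \ref{lem:hopf-gn-H}, and that the arities of the resulting spiders on each side match the external arity of the original $g_n$ and the H-box. If the bookkeeping proves awkward, I would fall back to a semantic check (using the standard interpretation $\interp{\cdot}$ to verify that both sides are the same matrix) and then invoke completeness of ZH \cite{ZH} to conclude the diagrammatic equality; this bypasses any scalar confusion at the price of a non-syntactic step.
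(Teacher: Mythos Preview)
The paper does not actually prove Lemma~\ref{lem:hopf-gn-H}. It is listed in Appendix~\ref{sec:useful-lemmas} among a collection of auxiliary ZH identities that are explicitly imported without proof from~\cite{backens2021ZHcompleteness}; the paper simply cites that reference and moves on. So there is no in-paper derivation to compare your proposal against.

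That said, your outlined strategy is a sensible syntactic route to such a Hopf-style identity: push the double connection through a bialgebra rule, fuse the resulting spiders, and discharge the leftover scalar H-boxes via Lemmas~\ref{lem:0-box}, \ref{lem:1-box}, \ref{lem:gn-0-0}. One caution: invoking Lemma~\ref{lem:bialgebra-Z-AND} here is circular in spirit, since that lemma is itself in the same block of results borrowed from~\cite{backens2021ZHcompleteness} and is morally of the same strength as what you are trying to show; you would want to work directly from the primitive rules $(ba_1)$/$(ba_2)$ of Figure~\ref{fig:ZH-rules} instead. Your fallback of checking the interpretation and appealing to completeness is of course valid, but it is exactly the kind of non-syntactic shortcut the paper avoids for its own derivations, and it would make the lemma unusable as an intermediate rewriting step. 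In short: your plan is reasonable, but the paper's ``proof'' is just a citation, so the honest comparison is that you are proposing more than the paper itself does here.
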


\begin{lemma}
\label{lem:hopf}
$$\tikzfig{hopf}$$
\end{lemma}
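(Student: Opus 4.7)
The plan is to derive the Hopf law from the bialgebra and spider-fusion axioms of the ZH-Calculus. First, I would apply the bialgebra rule (ba) to the two parallel wires connecting the Z-spider and the gn-spider in the Hopf configuration; this rewrites the tight two-wire cycle into a grid of four spiders with disconnected interfaces on top and bottom. Second, I would use Z-spider and gn-spider fusion to merge the adjacent same-coloured spiders produced by the bialgebra rewrite, which collapses the grid into two disconnected wires together with a degree-$0$ spider on the side. Third, I would evaluate the resulting zero-leg spider as a scalar, using Lemma \ref{lem:gn-0-0} together with the defining interpretation of the Z-spider (which gives a scalar of $2$), and check that the accumulated coefficient matches the one on the right-hand side of the claim.

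An alternative route, which might be cleaner given what is already in scope, is to obtain Lemma \ref{lem:hopf} as a specialization of Lemma \ref{lem:hopf-gn-H}: by choosing the H-box parameter so that the H-box becomes the identity (via Lemma \ref{lem:2-box}) or a NOT gate that can be absorbed into the adjacent spider using Lemma \ref{lem:Not-through-H-box}, one reduces the gn--H version of Hopf to the pure gn--Z version stated here. The main obstacle in either approach is the careful bookkeeping of scalars: the bialgebra rule in ZH comes with a factor of $\frac{1}{2}$, which must combine with the scalar produced by evaluating the resulting degree-$0$ spider (and, in the alternative route, with whatever scalar accompanies the reduction of the H-box parameter) to yield exactly the coefficient appearing in the Hopf law as stated.
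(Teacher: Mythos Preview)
The paper does not actually prove Lemma~\ref{lem:hopf}: it is listed in Section~\ref{sec:useful-lemmas} among the lemmas that are explicitly imported without proof from \cite{backens2021ZHcompleteness} (see the sentence ``A list of useful lemmas, proven in \cite{backens2021ZHcompleteness}\ldots''). So there is no in-paper proof to compare your proposal against.

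That said, your first approach---apply the bialgebra rule to the double connection, then fuse same-colour spiders and evaluate the remaining degree-$0$ node as a scalar---is the standard derivation of the Hopf law in ZH (and in ZX), and is essentially how it is done in the cited reference. Your remark that the only delicate point is the scalar bookkeeping is exactly right.

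Your alternative route via Lemma~\ref{lem:hopf-gn-H} is shakier. That lemma is a Hopf-type identity between a grey node and an \emph{H-box}, not a Z-spider; specializing the H-box parameter (via \ref{lem:2-box} or \ref{lem:Not-through-H-box}) does not immediately produce a Z-spider of the required arity without further rewriting, and in any case both \ref{lem:hopf} and \ref{lem:hopf-gn-H} are imported side by side from the same source here, so within this paper neither is set up as a consequence of the other. If you want a self-contained argument, stick with the direct bialgebra derivation.
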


\end{multicols}

\section{Proofs of Section \ref{sec:ZH}}

\begin{proof}[Proof of Lemmas \ref{lem:ket-0-monoid} and \ref{lem:ket-1-monoid}]
\phantomsection\label{prf:monoid-ket-0-1}
\def\fig{monoid-ket0}
\begin{align*}

\eq{}
\eq{(ba_1)}
\eq{\ref{lem:classical-on-H-box}}
\eq{(hs)\\(id)}
\end{align*}

\def\fig{monoid-ket1}
\begin{align*}

\eq{}\frac12~
\eq{\ref{lem:bialgebra-neg}\\\ref{lem:X-spider}}\frac12~
\eq{\ref{lem:classical-on-H-box}}
\eq{(ba_1)\\\ref{lem:X-spider}}
\end{align*}

\end{proof}

\begin{proof}[Proof of Lemmas \ref{lem:ket-0-gadget-ctrl}, \ref{lem:swapped-gadget-legs}, \ref{lem:ket-1-gadget-ctrl} and \ref{lem:ket-0-top-gadget}]
\phantomsection\label{prf:gadget-ket-0-1}
\def\fig{gadget-0-ctrld}
\begin{align*}

\eq{}\frac12~
\eq{(ba_2)}\frac12~
\eq{(zs)\\(id)}
\eq{(ba_1)\\\ref{lem:X-spider}\\(hh)}
\end{align*}

\def\fig{gadget-swap}
\begin{align*}

\eq[]{}\frac12~
\eq[]{(ba_1)}\frac12~
\eq[]{(hh)}\frac12~
\eq[]{\ref{lem:Not-through-H-box}}\frac12~
\eq[]{}\input{./figures/\fig/\fig_05.tikz}
\end{align*}

\def\fig{gadget-1-ctrld}
\begin{align*}

\eq{\ref{lem:X-spider}\\\ref{lem:swapped-gadget-legs}}
\eq{\ref{lem:ket-0-gadget-ctrl}}
\eq{}
\end{align*}

\def\fig{gadget-ket0-top}
\begin{align*}

\eq{}\frac12~
\eq{(ba_1)}\frac12~
\eq{\ref{lem:classical-on-H-box}}\frac12~
\eq{(ba_1)}
\end{align*}

\end{proof}

\section{Proofs of Section \ref{sec:reduction-strategy}}

We now aim to show Lemmas \ref{lem:bialgebra-monoid-gn}, \ref{lem:n-monoid-decomp}, \ref{lem:diagonal-distribution-monoid}, \ref{lem:gn-through-gadget}, \ref{lem:diagonal-distribution-gadget}, \ref{lem:bialgebra-rn-gadget}, \ref{lem:bialgebra-monoid-gadget} and \ref{lem:special-monoid-gadget}. We will show them one by one, after introducing and proving auxiliary lemmas in between.

\begin{multicols}{3}
\begin{lemma}
\label{lem:parallel-0-box}
\def\fig{triangle-projector}
\begin{align*}

\eq{}
\end{align*}
\end{lemma}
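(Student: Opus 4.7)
The plan is to unfold all syntactic-sugar constructions on the left-hand side into pure ZH primitives and then reduce to the right-hand side using the axioms of the ZH-calculus together with the useful lemmas of Section \ref{sec:useful-lemmas}. Given the figure name \emph{triangle-projector} and the label \emph{parallel-0-box}, I expect the equality to assert that a triangular configuration built from one or two H-spiders of parameter $0$ (put in parallel and sharing legs) collapses onto a simple projector-shaped subdiagram.

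First I would apply Lemma \ref{lem:0-box} to each $0$-labelled H-box, exchanging it for the standard $\tfrac{1}{2}$-weighted Z/X-spider form. The resulting diagram should consist of connected Z- and X-spiders, potentially with leftover H-boxes bridging them. Next I would invoke the bialgebra rules $(ba_1)$ and $(ba_2)$ to slide X-spiders past Z-spiders, contracting wires via the spider fusion rules $(zs)$ and $(hs)$ as they become colour-matched. Whenever a Hadamard ends up sitting next to a classical state in the diagram, Lemma \ref{lem:classical-on-H-box} lets me absorb it, and pairs of Hadamards on the same wire disappear via $(hh)$. The target projector shape on the right-hand side should then be visible after a final round of identity removal $(id)$.

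The main obstacle I anticipate is the scalar bookkeeping: every application of Lemma \ref{lem:0-box} contributes a factor of $\tfrac{1}{2}$, and these must combine with any $\tfrac{1}{2}$ factors hidden in the monoid and gadget definitions to exactly match the implicit scalar on the right-hand side. A secondary difficulty is choosing the correct order for the bialgebra rewrites, since applying them on the wrong pair of spiders first tends to produce a tangle that cannot be flattened without backtracking. If the triangle carries a symmetry under swapping of its lower legs, Lemma \ref{lem:swapped-gadget-legs} may be needed at an intermediate stage to realign the wires before the final collapse can take place.
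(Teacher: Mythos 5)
Your plan follows essentially the same route as the paper's proof, which is only four steps: unfold the $0$-labelled H-boxes via Lemma~\ref{lem:0-box} (picking up a factor $\frac14$), apply $(ba_2)$, then $(ba_1)$, and finally refold via Lemma~\ref{lem:0-box} in the reverse direction. The one point you flag but do not resolve --- the scalar bookkeeping --- is precisely where the paper needs its only other ingredient, Lemma~\ref{lem:gn-0-0}, which evaluates the closed subdiagram left over after the bialgebra steps to the scalar $2$ and thereby turns the $\frac14$ into the $\frac12$ needed for the final refolding. Note also that the proof closes by applying Lemma~\ref{lem:0-box} again (to reassemble a $0$-box on the right-hand side) rather than by identity removal, and none of Lemma~\ref{lem:classical-on-H-box}, $(hh)$, or Lemma~\ref{lem:swapped-gadget-legs} is needed.
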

\begin{lemma}
\label{lem:projector-absorbed-by-monoid}
\def\fig{triangle-absorbed-by-monoid}
\begin{align*}

\eq{}
\end{align*}
\end{lemma}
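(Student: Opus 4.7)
The plan is to reduce both sides to a common normal form built only out of Z-spiders and H-spiders, then match them. First I would unfold the ``triangle'' on the left-hand side using Lemma \ref{lem:parallel-0-box}, which rewrites it into an $H$-spider with parameter $0$ placed parallel to the wire in question. Simultaneously, I would unfold the monoid $\def\fig{monoid-def}\tikzfigc{00}$ using its definition $\def\fig{monoid-def}\tikzfigc{00} \eq{\ref{lem:0-box}} \frac12\,\tikzfigc{02}$, so that both sides are expressed purely in the primitive generators of the ZH-calculus.

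Once everything is primitive, the core step is to push the $0$-labelled H-spider (the unfolded triangle) through the H-spider/Z-spider structure of the monoid. I would apply bialgebra rules $(ba_1)$ and $(ba_2)$ together with H-involution $(hh)$ to rearrange the connections, then use Lemma \ref{lem:0-box} (or \ref{lem:1-box} if relevant) to collapse the resulting $0$-box gadgets on the internal wires. The design of the W-monoid guarantees that a projector-like element on its output gets propagated to its inputs in a controlled way (this is essentially the content of Lemma \ref{lem:ket-0-monoid} in disguise), so this step should terminate with the intended right-hand side, modulo a scalar.

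The last step is scalar bookkeeping. The definition of the monoid carries a $\frac12$, each $0$-box rewrite introduces its own $\frac12$ via Lemma \ref{lem:0-box}, and repackaging triangles on the inputs on the right-hand side using Lemma \ref{lem:parallel-0-box} reintroduces these factors. I would gather all these factors at the end and verify they cancel, using Lemma \ref{lem:product-gen} to merge scalars cleanly.

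The main obstacle will be the scalar accounting together with choosing the correct order of rewrites: applying bialgebra too early on the wrong pair of spiders can duplicate the triangle in unwanted ways, while applying it after collapsing a $0$-box can lose information. I would therefore first normalize the triangle and the monoid into their expanded $H$-spider form, \emph{then} perform the bialgebra-based rearrangement, and only at the very end reconstitute triangles on the inputs via Lemma \ref{lem:parallel-0-box} (in reverse) to land exactly on the stated right-hand side.
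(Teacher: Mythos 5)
Your plan goes after the wrong target and, in doing so, misses the one observation that makes this lemma a one-liner. First, the statement here is an \emph{absorption} statement: the triangle composed with the monoid equals the bare monoid — the triangle simply disappears. You repeatedly describe the right-hand side as having "triangles on the inputs" that you will "reconstitute at the very end"; that is the statement of Lemma~\ref{lem:projector-distribution-monoid} (distribution of the projector through the monoid), which is a \emph{different and later} lemma, proved in the paper \emph{using} the present one (via Lemmas~\ref{lem:bialgebra-monoid-gn} and~\ref{lem:0-box-under-monoid}). So even if your rewrite sequence went through, you would land on the wrong diagram, and taking the distribution lemma as an ingredient would be circular.

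Second, the paper's proof requires no bialgebra pushing at all: the monoid is \emph{defined} (up to a factor $\frac12$, via Lemma~\ref{lem:0-box}) with a triangle already sitting on the relevant wire, so unfolding that definition places the incoming triangle in parallel with the one already inside, Lemma~\ref{lem:parallel-0-box} (idempotency: two parallel triangles equal one) merges them, and refolding the definition gives the bare monoid. Three steps, no scalars to track. Your "core step" — pushing the $H(0)$-gadget through the spider structure of the monoid by $(ba_1)$, $(ba_2)$ and $(hh)$ until it "should terminate" — is exactly where all of the work would live, and you give no concrete rewrite sequence for it; appealing to "the design of the W-monoid" and to Lemma~\ref{lem:ket-0-monoid} "in disguise" is not a proof in an equational calculus. (You also misattribute Lemma~\ref{lem:parallel-0-box} as the unfolding of the triangle into an $H(0)$-spider; that is Lemma~\ref{lem:0-box}. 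Lemma~\ref{lem:parallel-0-box} is the idempotency statement, and it is the actual engine of the correct proof.)
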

\end{multicols}

\begin{proof}
\def\fig{triangle-projector}
\begin{align*}

\eq{\ref{lem:0-box}}\frac14~
\eq{(ba_2)}\frac14~
\eq{(ba_1)\\\ref{lem:gn-0-0}}\frac12~
\eq{\ref{lem:0-box}}
\end{align*}

\def\fig{triangle-absorbed-by-monoid}
\begin{align*}

\eq{}
\eq{\ref{lem:parallel-0-box}}
\eq{}
\end{align*}
\end{proof}

\begin{proof}[Proof of Lemma \ref{lem:bialgebra-monoid-gn}]
\phantomsection\label{prf:bialgebra-monoid-gn}
\def\fig{gn-monoid-bialgebra}
\begin{align*}

\eq{}
\eq{(ba_1)}
\eq{\ref{lem:parallel-0-box}\\(zs)}
\eq{}
\end{align*}
\end{proof}

\begin{multicols}{3}

\begin{lemma}
\label{lem:0-box-under-monoid}
\def\fig{monoid-with-triangle-2}
\begin{align*}

\eq{}\input{./figures/\fig/\fig_09.tikz}
\end{align*}
\end{lemma}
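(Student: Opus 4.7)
The plan is to unfold the monoid on the left-hand side using its definition, push the 0-box through the resulting Z-spider/H-spider structure via bialgebra rules, and then refold into the desired right-hand side form. Since the monoid is defined by $\def\fig{monoid-def}\tikzfigc{00} = \frac{1}{2}\tikzfigc{02}$, expanding it exposes the underlying spider structure with which the 0-box can interact.

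First I would rewrite the monoid in terms of its constituent Z-spider and H-spider using the definition, keeping track of the resulting $\frac{1}{2}$ scalar. The 0-box sitting below the monoid then becomes adjacent to the H-spider of the monoid's definition, at which point the bialgebra rules $(ba_1)$ or $(ba_2)$ can be used to push it through the Z-spider. In parallel with this, I would use Lemma \ref{lem:parallel-0-box}, which captures the projector-like behaviour of the 0-box, and Lemma \ref{lem:classical-on-H-box} to handle any classical states ($\ket 0$ or $\ket 1$) that get produced on H-boxes as a side effect of pushing through.

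After these rewrites, one should be able to identify a remaining subdiagram that is itself either a monoid (to be refolded via the definition backwards) or a configuration to which Lemma \ref{lem:projector-absorbed-by-monoid} directly applies. Combining with Lemma \ref{lem:0-box} to rebalance the $\frac{1}{2}$ factors coming from each occurrence of the 0-box, the diagram can be recomposed into the right-hand side.

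The main obstacle will be managing the scalar bookkeeping: between the $\frac{1}{2}$ arising from the monoid definition and the $\frac{1}{2}$ arising from Lemma \ref{lem:0-box}, plus any scalar shifts from bialgebra applications, one has to ensure the final scalar is exactly what appears in the target. The secondary subtlety will be getting the wire topology right after refolding, since the monoid has a specific input/output ordering and the 0-box's position relative to the remaining legs must be preserved (or corrected using a swap or Lemma \ref{lem:monoid-commutativity}) to obtain the precise form of the right-hand side.
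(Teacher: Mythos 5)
Your plan points in the same general direction as the paper's derivation: the 0-box is expanded via Lemma \ref{lem:0-box} (picking up the $\frac12$), pushed through with the bialgebra rules, the projector lemmas (\ref{lem:parallel-0-box}, \ref{lem:projector-absorbed-by-monoid}) are invoked, and a final application of Lemma \ref{lem:0-box} refolds the result and rebalances the scalar. To that extent the outline is sound.

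However, there is a genuine gap in the middle. The step you describe as ``push the 0-box through the resulting Z-spider/H-spider structure via $(ba_1)$ or $(ba_2)$'' is not a single bialgebra application: the H-box with parameter $0$ does not itself commute past the monoid's spiders by bialgebra alone. The paper first establishes an auxiliary identity (the ``monoid-with-triangle'' computation preceding the main chain) in which, after expanding the 0-box, one needs Lemma \ref{lem:CNOT-on-H-legs} twice and, crucially, the orthogonality axiom $(o)$ to collapse the intermediate diagram; only then do $(ba_1)$, $(ba_2)$ and spider fusion finish the job. Neither $(o)$ nor \ref{lem:CNOT-on-H-legs} appears in your toolkit (and \ref{lem:classical-on-H-box}, which you do invoke, plays no role here), so the rewriting you sketch would stall at the point where the expanded 0-box meets the monoid's H-box. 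The scalar bookkeeping you flag as the main obstacle is in fact the easy part --- the two invocations of Lemma \ref{lem:0-box} at the start and end cancel cleanly; the real work is the $(o)$-based cancellation in the auxiliary identity.
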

\begin{lemma}
\label{lem:projector-distribution-monoid}
$$\tikzfig{triangle-distribution-through-monoid-gen}$$
\end{lemma}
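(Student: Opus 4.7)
The plan is to prove Lemma \ref{lem:projector-distribution-monoid} by induction on the number of inputs $n$ of the generalised monoid. The base case $n=2$ reduces directly to Lemma \ref{lem:0-box-under-monoid}, which has already been established for the binary monoid.

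For the inductive step, first unfold the $n$-input generalised monoid using its inductive definition
$$\tikzfig{monoid-gen-def}$$
so that the outermost node becomes a binary monoid whose left input is fed by an $(n-1)$-input generalised monoid and whose right input is a single wire. Now apply Lemma \ref{lem:0-box-under-monoid} to this outermost binary node to push the triangle past it, producing a triangle on the single right input wire together with a triangle on the output of the inner $(n-1)$-input monoid. By the induction hypothesis applied to that inner monoid, the latter triangle distributes further into triangles on each of its $n-1$ inputs, giving triangles on every one of the $n$ inputs of the original generalised monoid, as required.

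The main obstacle is purely bookkeeping: ensuring the inductive unfolding agrees with the definition fixed earlier in the paper, and that any commutativity or associativity rearrangement of inputs is handled by Lemmas \ref{lem:monoid-associativity} and \ref{lem:monoid-commutativity} if the chosen decomposition does not match the wire order one wants. One should also verify that the binary distribution step of Lemma \ref{lem:0-box-under-monoid} does not drop or introduce stray scalars that would obstruct the recombination at the next level of the induction. Since the base case is a clean diagrammatic equality in ZH and the induction mirrors the recursive definition of the generalised monoid, no genuinely new idea beyond Lemma \ref{lem:0-box-under-monoid} is needed.
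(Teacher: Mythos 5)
Your inductive skeleton is fine as far as it goes: unfolding the $n$-ary monoid via its recursive definition, pushing the triangle past the outermost binary node, and invoking the induction hypothesis on the inner $(n-1)$-ary monoid is a valid way to reduce the general statement to the binary one (and the associativity/commutativity caveats you raise are indeed only bookkeeping). The problem is the base case. You assert that the binary instance of the distribution law ``reduces directly to Lemma~\ref{lem:0-box-under-monoid}'', but that lemma is not the binary distribution law; it is an auxiliary identity about one particular configuration of the triangle and the monoid. In the paper, even obtaining the distribution requires chaining three ingredients: Lemma~\ref{lem:bialgebra-monoid-gn} (the bialgebra between the gray node and the monoid, which pushes the copy-map component of the triangle through), then Lemma~\ref{lem:0-box-under-monoid} (to handle the remaining $H(0)$-box component), and finally Lemma~\ref{lem:projector-absorbed-by-monoid} (to reassemble a triangle on each input). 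That three-step derivation is precisely the mathematical content of Lemma~\ref{lem:projector-distribution-monoid}; by folding it into a base case that you treat as already established, your argument assumes what it is supposed to prove.

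A secondary remark: the paper does not argue by induction at all. Because Lemma~\ref{lem:bialgebra-monoid-gn} is already stated for the $n$-ary monoid, the three-step chain delivers the general statement directly, so the inductive scaffolding you build is not needed once the binary interaction is genuinely derived. If you repair the base case by reproducing the paper's chain for the binary monoid, your induction becomes a correct (if slightly longer) alternative route; as written, the proof has a real gap exactly where the work lies.
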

\end{multicols}

\begin{proof}
First, we have:
\def\fig{monoid-with-triangle-1}
\begin{align*}

\eq{\ref{lem:0-box}}\frac12~
\eq{}\frac14~
\eq{(ba_1)}\frac14~
\eq{\ref{lem:CNOT-on-H-legs}}\frac14~\\
\eq{(o)}\frac12~\input{./figures/\fig/\fig_05.tikz}
\eq{\ref{lem:CNOT-on-H-legs}}\frac12~\input{./figures/\fig/\fig_06.tikz}
\eq{(ba_1)}\frac12~\input{./figures/\fig/\fig_07.tikz}
\eq{(ba_2)}\frac12~\input{./figures/\fig/\fig_08.tikz}
\end{align*}
Then:
\def\fig{monoid-with-triangle-2}
\begin{align*}

\eq{}
\eq{\ref{lem:projector-absorbed-by-monoid}}
\eq{\ref{lem:0-box}}\frac12~
\eq{}\frac12~
\eq{(ba_2)}\frac12~\input{./figures/\fig/\fig_05.tikz}\\
\eq{(zs)\\(id)}\frac14~\input{./figures/\fig/\fig_06.tikz}
\eq{(ba_2)}\frac14~\input{./figures/\fig/\fig_07.tikz}
\eq{(ba_1)}\frac14~\input{./figures/\fig/\fig_08.tikz}
\eq{\ref{lem:0-box}}\input{./figures/\fig/\fig_09.tikz}
\end{align*}

\def\fig{triangle-distribution-through-monoid}
\begin{align*}

\eq{\ref{lem:bialgebra-monoid-gn}}
\eq{\ref{lem:0-box-under-monoid}}
\eq{\ref{lem:projector-absorbed-by-monoid}}
\end{align*}
\end{proof}

\begin{proof}[Proof of Lemmas \ref{lem:n-monoid-decomp} and \ref{lem:diagonal-distribution-monoid}]

\def\fig{monoid-gen}
\begin{align*}

\eq{}
\eq{}
\eq{\ref{lem:projector-distribution-monoid}}
\eq{ind.}
\end{align*}

First, we have:
\def\fig{diagonal-through-monoid}
\begin{align*}

\eq{(hh)}\frac14
\eq{(ba_1)}\frac14
\eq{(ba_2)}\frac14
\eq{\ref{lem:Not-through-H-box}}\frac14\\
\eq{(hs)}\input{./figures/\fig/\fig_05.tikz}
\eq{\ref{lem:CNOT-on-H-legs}}\input{./figures/\fig/\fig_06.tikz}
\eq{\ref{lem:X-spider}\\(ba_1)\\\ref{lem:bialgebra-neg}}\input{./figures/\fig/\fig_07.tikz}
\eq{(i)}\input{./figures/\fig/\fig_08.tikz}
\eq{(zs)}\input{./figures/\fig/\fig_09.tikz}
\end{align*}
Then:
\def\fig{diagonal-through-monoid-fin}
\begin{align*}

\eq{}\frac12
\eq{(ba_1)}\frac12
\eq{}\frac12
\eq{}
\end{align*}
\end{proof}

\begin{multicols}{3}

\begin{lemma}
\label{lem:projector-on-gadget-legs}
\def\fig{triangle-absorbed-by-gadget}
\begin{align*}

\eq{}\input{./figures/\fig/\fig_09.tikz}
\end{align*}
\end{lemma}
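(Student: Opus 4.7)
The plan is to mirror the strategy used for Lemma \ref{lem:projector-absorbed-by-monoid} (\texttt{triangle-absorbed-by-monoid}), but now against the gadget rather than the monoid. The gadget was defined via $\tikzfig{gadget-def}$, so internally it already contains a copy of the monoid $\def\fig{monoid-def}\tikzfigc{00}$ together with an H-box / Z-spider ``control'' part that acts as a conditional swap of the two output wires (cf.\ Lemmas \ref{lem:ket-0-gadget-ctrl} and \ref{lem:ket-1-gadget-ctrl}). The idea is therefore to unfold the gadget, absorb the triangle projectors through the inner monoid, and then refold.

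First I would unfold the gadget using its definition, exposing the underlying monoid on the two output legs and the H-box/Z-spider configuration feeding the ``control'' wire. At this point the two triangle projectors sit on the inputs of the inner monoid. Applying Lemma \ref{lem:projector-absorbed-by-monoid} then absorbs both triangles into the monoid, turning the pair of triangles on the legs into a single triangle on the monoid's output, exactly as in the \texttt{triangle-absorbed-by-monoid} case.

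Next, I would push the resulting triangle further down through the remaining gadget scaffolding. The relevant move is Lemma \ref{lem:0-box-under-monoid}, which lets a $0$-H-box (the projector) commute past the monoid-side structure, together with the usual bialgebra rules $(ba_1)$, $(ba_2)$ and the H-box simplifications (Lemmas \ref{lem:classical-on-H-box}, \ref{lem:CNOT-on-H-legs}) to rearrange the Z-spider/H-box junction that connects to the control wire. The $\tfrac12$ scalar coming from the gadget's definition recombines with the $\tfrac12$ scalars produced by $0$-H-box expansions (Lemma \ref{lem:0-box}), yielding the correct global coefficient without any leftover scalar.

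Finally, I would refold the remaining structure back into the gadget notation using $\tikzfig{gadget-def}$ in reverse, giving the right-hand side of the statement. The step I expect to be trickiest is the middle one: the gadget's control wire is not purely monoidal, so one has to verify that after the triangles are absorbed into the monoid, the residual projector indeed passes cleanly through the H-box that implements the controlled swap, rather than forcing an extra classical branching. Should a direct push fail, the fallback is to split the argument along the two computational basis states on the control wire using Lemmas \ref{lem:ket-0-gadget-ctrl} and \ref{lem:ket-1-gadget-ctrl}, handle each case with the monoid version Lemma \ref{lem:projector-absorbed-by-monoid}, and recombine; this is more verbose but guaranteed to go through since both ``branches'' of the gadget are built from the monoid and its swap, on which triangle absorption is already known.
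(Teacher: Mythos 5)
There is a genuine gap at the pivotal second step of your plan. The paper's own proof is a direct computation: it expands the projector via Lemma \ref{lem:0-box} and the gadget via its definition, then works through $(ba_1)$, Lemma \ref{lem:CNOT-on-H-legs}, $(hh)$, $(ba_2)$, Lemma \ref{lem:CNOT-on-H-legs} again, and cleans up with Lemmas \ref{lem:classical-on-H-box}, $(zs)$ and \ref{lem:gn-0-0}; it never routes through Lemma \ref{lem:projector-absorbed-by-monoid}. Your plan hinges on unfolding the gadget so as to ``expose the underlying monoid on the two output legs'' and then absorbing the two triangles with Lemma \ref{lem:projector-absorbed-by-monoid}, but the gadget does not factor as (control logic) $\circ$ (monoid on the legs). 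One can see this semantically from $\interp{\cdot}$: read from its two legs, the gadget sends $\ket{01}$ and $\ket{10}$ to \emph{different} states (the which-leg information is precisely what gets routed onto the control wire), whereas the monoid identifies them, so no post-composition can recover the gadget from the monoid. Syntactically, in the unfolded gadget the leg wires are shared between the monoid-like part and the H-boxes implementing the control constraint, so the triangles do not sit on wires feeding only into a monoid, and Lemma \ref{lem:projector-absorbed-by-monoid} has no subdiagram to act on. This is exactly the difficulty you flag as ``trickiest'', but it is not a residual detail: it blocks the main route.

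The fallback you propose does not repair this within the calculus. Splitting along the computational basis states of the control wire via Lemmas \ref{lem:ket-0-gadget-ctrl} and \ref{lem:ket-1-gadget-ctrl} and ``recombining'' is a semantic verification of the equality, not a ZH derivation: the equational theory has no primitive for decomposing an internal wire as a sum of $\ket0$ and $\ket1$ branches and reassembling the result, so you would have to invoke completeness, which defeats the purpose of exhibiting an explicit proof (and is not what the surrounding development does). The scalar bookkeeping claim is likewise asserted rather than checked. To close the gap you should follow the direct route: unfold both the triangles and the gadget completely and push the resulting $0$-labelled H-boxes through the structure with the bialgebra rules and Lemma \ref{lem:CNOT-on-H-legs}, as the paper does.
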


\end{multicols}

\begin{proof}
\def\fig{triangle-absorbed-by-gadget}
\begin{align*}

&\eq{\ref{lem:0-box}}\frac12
\eq{}\frac14
\eq{(ba_1)}\frac14\\
&\eq{\ref{lem:CNOT-on-H-legs}}\frac14
\eq{(hh)}\frac14\input{./figures/\fig/\fig_05.tikz}
\eq{(ba_2)}\frac14\input{./figures/\fig/\fig_06.tikz}\\
&\eq{\ref{lem:CNOT-on-H-legs}}\frac14\input{./figures/\fig/\fig_07.tikz}
\eq{\ref{lem:classical-on-H-box}\\(zs)\\\ref{lem:gn-0-0}}\frac12\input{./figures/\fig/\fig_08.tikz}
\eq{}\input{./figures/\fig/\fig_09.tikz}
\end{align*}
\end{proof}

\begin{proof}[Proof of Lemmas \ref{lem:gn-through-gadget} and \ref{lem:diagonal-distribution-gadget}]

\def\fig{copy-through-gadget}
\begin{align*}

\eq{(zs)}\frac12
\eq{(ba_1)}\frac12
\eq{\ref{lem:projector-on-gadget-legs}}\frac14
\eq{}
\end{align*}

\def\fig{diagonal-through-gadget}
\begin{align*}

\eq{\ref{lem:gn-through-gadget}}
\eq{\ref{lem:diagonal-distribution-monoid}}
\eq{(ba_1)}\frac12
\eq{\ref{lem:projector-on-gadget-legs}}
\end{align*}

\end{proof}

\begin{multicols}{3}
\begin{lemma}
\label{lem:projector-through-one-gadget}
\def\fig{triangle-through-gadget-3}
\begin{align*}

\eq[]{}
\end{align*}
\end{lemma}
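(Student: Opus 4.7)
The plan is to prove Lemma \ref{lem:projector-through-one-gadget} by mimicking the structure of the proof of Lemma \ref{lem:projector-on-gadget-legs}, but adapted to the case where the triangle (i.e.\ the $0$-box combined with a Z-spider, as per Lemma \ref{lem:0-box}) is attached to only one of the two output legs of the gadget rather than to both simultaneously. First I would expand the triangle using Lemma \ref{lem:0-box} to expose an internal H-box decorated with a $0$, which puts the diagram in a form where the bialgebra rules $(ba_1)$ and $(ba_2)$ and Lemma \ref{lem:CNOT-on-H-legs} are available.

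Next I would follow the familiar sequence of moves: apply $(ba_1)$ to split the Z-spider associated with the gadget's internal wiring, then use Lemma \ref{lem:CNOT-on-H-legs} to shuttle the CNOT-like pattern across the H-spider legs, and $(hh)$ to collapse the double H-boxes. At this stage, since the triangle is attached to only one leg, the bialgebra should not fully collapse the gadget as in Lemma \ref{lem:projector-on-gadget-legs}, but should instead transfer the projection onto the control wire of the gadget, leaving a simpler structure on the two output legs. The intermediate steps labelled $01$ and $02$ (absent from the statement) should correspond to these two or three bialgebra-then-merge phases, mirroring the eight-step calculation done for Lemma \ref{lem:projector-on-gadget-legs}.

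Finally I would fold the resulting Z-spiders and H-boxes back into gadget/monoid syntactic sugar using the definitions of $\def\fig{gadget-def}\tikzfigc{00}$ and $\def\fig{monoid-def}\tikzfigc{00}$, invoking Lemma \ref{lem:classical-on-H-box} where a classical state lands on an H-box, and Lemma \ref{lem:gn-0-0} to clear any spurious scalar factors of $2$. The net effect should be that the triangle on one leg reappears on a different wire of the gadget (typically the control wire or merged with the opposite leg), which is exactly what \texttt{triangle-through-gadget-3} asserts.

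The main obstacle is bookkeeping: tracking the fractional scalars $\tfrac12$ and $\tfrac14$ introduced by each application of Lemma \ref{lem:0-box} and resorbed by each application of Lemma \ref{lem:gn-0-0} or $(ba_1)$, and making sure the final scalar balances to $1$. A secondary difficulty is choosing the right order of bialgebra applications so that the asymmetry between the two legs (one carrying the triangle, the other not) is exploited cleanly rather than producing a dead-end branch where both legs get symmetrically projected, which would collapse the diagram too far and lose the information the lemma wants to preserve.
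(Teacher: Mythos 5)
Your proposal heads in a different direction from the paper and, as written, contains a genuine gap. The paper does not redo a low-level bialgebra computation at all: it derives Lemma \ref{lem:projector-through-one-gadget} in exactly three steps, each an application of an already-established lemma. First the Z-spider hidden inside the triangle is pushed through the gadget using Lemma \ref{lem:gn-through-gadget}, which duplicates the relevant structure onto both legs via the monoid; then Lemma \ref{lem:0-box-under-monoid} moves the $0$-box past the monoid so that a triangle now sits on \emph{each} of the two gadget legs; finally Lemma \ref{lem:projector-on-gadget-legs} absorbs both of them at once. The asymmetry you worry about (triangle on one leg only) is thus resolved not by a cleverly ordered bialgebra cascade but by first symmetrising the situation and then invoking the two-leg absorption lemma verbatim.

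Your plan, by contrast, proposes to re-run the eight-step calculation of Lemma \ref{lem:projector-on-gadget-legs} with one leg missing its triangle, and you yourself flag the two failure modes: the scalar bookkeeping and the risk that the bialgebra steps symmetrically project both legs and ``collapse the diagram too far.'' Neither is resolved in the proposal; no intermediate diagram is actually exhibited, and no argument is given that the asymmetric cascade terminates in the claimed right-hand side rather than in the fully collapsed form. Completeness of the ZH-calculus guarantees only that \emph{some} derivation exists, not that this particular strategy succeeds, so the proposal as it stands is a heuristic sketch rather than a proof. The missing idea is precisely the reduction via Lemmas \ref{lem:gn-through-gadget} and \ref{lem:0-box-under-monoid}, which lets you reuse Lemma \ref{lem:projector-on-gadget-legs} as a black box instead of adapting its internals.
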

\begin{lemma}
\label{lem:projector-through-gadget}
\def\fig{triangle-through-gadget-2}
\begin{align*}
\fit{$\input{./figures/\fig/\fig_05.tikz}
=$}
\end{align*}
\end{lemma}
\end{multicols}

\begin{proof}

\def\fig{triangle-through-gadget-3}
\begin{align*}

\eq{\ref{lem:gn-through-gadget}}
\eq{\ref{lem:0-box-under-monoid}}
\eq{\ref{lem:projector-on-gadget-legs}}
\end{align*}

For Lemma \ref{lem:projector-through-gadget}, first:
\def\fig{triangle-through-gadget-1}
\begin{align*}

\eq{}\frac1{16}
\eq{(ba_1)}\frac14
\eq{(ba_2)\\(hs)}\\
\eq{\ref{lem:Z-H-multiple-links}\\\ref{lem:Not-through-H-box}}
\eq{\ref{lem:bialgebra-Z-AND}\\(hs)}\input{./figures/\fig/\fig_05.tikz}
\eq{\ref{lem:bialgebra-neg}\\\ref{lem:CNOT-on-H-legs}}\input{./figures/\fig/\fig_06.tikz}\\
\eq{\ref{lem:bialgebra-neg}\\\ref{lem:Z-H-multiple-links}}\input{./figures/\fig/\fig_07.tikz}
\eq{(ba_2)}\frac12\input{./figures/\fig/\fig_08.tikz}
\eq{\ref{lem:Not-through-H-box}}\frac12\input{./figures/\fig/\fig_09.tikz}\\
\eq{(ba_2)}\frac12\input{./figures/\fig/\fig_10.tikz}
\eq{(ba_1)}\input{./figures/\fig/\fig_11.tikz}
\eq{(ba_1)\\\ref{lem:classical-on-H-box}}2\input{./figures/\fig/\fig_12.tikz}
\end{align*}
Then:
\def\fig{triangle-through-gadget-2}
\begin{align*}

\eq{\ref{lem:0-box}}\frac1{16}
\eq{(ba_1)\\(zs)}\frac1{16}\\
\eq{}\frac18
\eq{(zs)}\frac18
\eq{\ref{lem:0-box}}\input{./figures/\fig/\fig_05.tikz}
\end{align*}

\end{proof}

\begin{proof}[Proof of Lemmas \ref{lem:bialgebra-rn-gadget}, \ref{lem:bialgebra-monoid-gadget}, \ref{lem:special-monoid-gadget}]

\def\fig{rn-gadget-bialgebra}
\begin{align*}

\eq{}\frac14
\eq{}\frac14
\eq{\ref{lem:X-spider}\\(ba_1)}\frac12\\
\eq{(ba_2)}\frac12
\eq{(ba_1)}\frac12\input{./figures/\fig/\fig_05.tikz}
\eq{}\input{./figures/\fig/\fig_06.tikz}
\end{align*}

\def\fig{monoid-gadget-bialgebra}
\begin{align*}

\eq{}
\eq{\ref{lem:bialgebra-monoid-gadget}}
\eq{\ref{lem:projector-distribution-monoid}}
\eq{}
\end{align*}

\def\fig{monoid-gadget-special}
\begin{align*}

\eq{\ref{lem:projector-on-gadget-legs}}
\eq{}\frac12
\eq{\ref{lem:hopf}\\(id)}\frac12
\eq{\ref{lem:bialgebra-Z-AND}\\(zs)\\(id)}
\end{align*}

\end{proof}

\section{Proofs of Section \ref{sec:SQMDD-gen-comp}}

\begin{proof}[Proof of Proposition \ref{prop:pseudo-SQMDD}]
\phantomsection\label{prf:pseudo-SQMDD}
We will start at the bottom of the diagram and work our way up, layer by layer. We first create the gadget \tikzfig{projector} at the leaf:
\def\fig{pseudo-SQMDD-leaf}
\begin{align*}

\eq{(ba_1)\\(zs)}
\end{align*}
and then we move it up. It is easy, using Lemmas \ref{lem:n-monoid-decomp} and \ref{lem:parallel-0-box}, to see that \tikzfig{pseudo-SQMDD-lemma}. Hence, if we consider a particular layer, we get:
\def\fig{pseudo-SQMDD-layer}
\begin{align*}

\eq{(zs)}
\eq{\ref{lem:gn-through-gadget}}\\
\eq{\ref{lem:bialgebra-monoid-gn}}
\eq{}
\end{align*}
Notice how we pushed the gadget through the whole layer while replacing some \def\fig{monoid-def}$$ (the ones we want) by $\tikzfig{rn-2-1}$. The gadget will hence move up until it gets to the root, where it can simply be removed.
\end{proof}

\begin{proof}[Proof of Proposition \ref{prop:generators-NF}]
\phantomsection\label{prf:generators-NF}
First, we notice that:
\def\fig{gadget-ketplus-left}
\begin{align}
\label{eq:gadget-ket-plus-left}

\eq{}\frac12~
\eq{(ba_1)\\(zs)}\frac12~
\eq{(id)}\frac12~
\end{align}
from which can derive the SQMDD form of the H-spider:
\def\fig{H-spider-NF}
\begin{align*}

\eq[]{(hs)\\\ref{lem:classical-on-H-box}}\frac1{2^n}
\eq[]{(\ref{eq:gadget-ket-plus-left})}
\eq[]{(id)\\(zs)\\(ba_1)}
\eq[]{\ref{prop:pseudo-SQMDD}}
\end{align*}

For the normal form of the Z-spider, we need a few intermediary derivations:
\def\fig{Z-spider-NF-1}
\begin{align}
\label{eq:Z-spider-NF-1}

\eq{\ref{lem:0-box}}\frac14
\eq{\ref{lem:Not-through-H-box}\\(id)}\frac14
\eq{(ba_2)}\frac14
\eq{(zs)}\frac12
\eq[]{\ref{lem:Z-H-multiple-links}}\frac12\input{./figures/\fig/\fig_05.tikz}
\eq{(hh)\\(zs)\\(id)}\input{./figures/\fig/\fig_06.tikz}
\end{align}

{
\def\fig{Z-spider-NF-2}
\begin{align}
\label{eq:Z-spider-NF-2}

\eq[]{\ref{lem:swapped-gadget-legs}}\frac14
\eq[]{(ba_1)\\\ref{lem:classical-on-H-box}}\frac14
\eq[]{\ref{lem:bialgebra-neg}\\(zs)}\frac14
\eq[]{(\ref{eq:Z-spider-NF-1})}
\end{align}}

\def\fig{gadget-ket1-top}
\begin{align}
\label{eq:Z-spider-NF-2.5}

\eq{}\frac12
\eq{\ref{lem:bialgebra-neg}}\frac12
\eq{\ref{lem:classical-on-H-box}\\\ref{lem:X-spider}}\frac12
\eq{(hh)}
\end{align}
Now, we can show how to turn a Z-spider in SQMDD form.
\def\fig{Z-spider-NF-3}
\begin{align*}

&\eq{(zs)\\\ref{lem:bialgebra-neg}}
\eq{(\ref{eq:Z-spider-NF-2})}
\eq{(\ref{eq:Z-spider-NF-2})}...
\eq{}\\
&\eq{(\ref{eq:Z-spider-NF-2.5})}
\eq{\ref{lem:0-box}\\(ba_1)}\input{./figures/\fig/\fig_05.tikz}
\eq{\ref{prop:pseudo-SQMDD}}\input{./figures/\fig/\fig_06.tikz}
\end{align*}
\end{proof}

\begin{proof}[Proof of Proposition \ref{prop:SQMDD-tensor}]
\phantomsection\label{prf:SQMDD-tensor}
Let $D_1$ and $D_2$ be two ZH-diagrams in SQMDD form. If we define $D_i'$ by making the top and bottom nodes stand out, we can show that:
\def\fig{QMDD-tensor}
\begin{align*}

\eq{}
\end{align*}
To do so, we start by factoring the top nodes:
\def\fig{QMDD-tensor}
\begin{align*}

\eq{\ref{lem:bialgebra-neg}}
\end{align*}
and then we push $D_2'$ in $D_1'$, which we can do layer by layer:
\def\fig{QMDD-tensor-layer}
\begin{align*}

\eq{\ref{lem:bialgebra-monoid-gn}}
\eq{\ref{lem:gn-through-gadget}}
\end{align*}
We eventually end up with:
\def\fig{QMDD-tensor}
\begin{align*}

\eq{(zs)\\(id)}
\end{align*}
\end{proof}

\begin{multicols}{3}

\begin{lemma}
\label{lem:monoid-sum}
\def\fig{monoid-sums}
\begin{align*}

\eq{}
\end{align*}
\end{lemma}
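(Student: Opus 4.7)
The plan is to prove Lemma \ref{lem:monoid-sum} by induction on the arity of the generalized monoid. The base case (binary monoid) is the crucial one; once that is established, the inductive step is a routine unfolding using the generalized monoid's definition together with Lemma \ref{lem:n-monoid-decomp}, which lets us peel off a single binary monoid at the top and apply the induction hypothesis to the remaining $(n-1)$-ary structure. The commutativity (Lemma \ref{lem:monoid-commutativity}) and associativity (Lemma \ref{lem:monoid-associativity}) of the monoid guarantee that the order in which we peel off inputs does not matter, which is what makes the induction go through cleanly.

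For the base case, I would first expand the monoid $\def\fig{monoid-def}\tikzfigc{00}$ via its definition as $\frac12 \tikzfigc{01}$, i.e., as a Z-spider feeding into an H-box gadget. Then the strategy is to push the ``sum'' structure on the right-hand side through the gadget using Lemmas \ref{lem:gn-through-gadget} and \ref{lem:diagonal-distribution-gadget}, distribute over the monoid using Lemma \ref{lem:diagonal-distribution-monoid}, and finally reconverge the branches using the bialgebra rule between the monoid and $g_n$-spiders (Lemma \ref{lem:bialgebra-monoid-gn}). The interpretation computation $\interp{\def\fig{monoid-def}\tikzfigc{00}} = \begin{pmatrix}1&0&0&0\\0&1&1&0\end{pmatrix}$ confirms that the monoid acts as an addition on the $\ket{1}$-components while preserving $\ket{0}$ as the unit, so both sides of the lemma must compute the same linear combination in the standard interpretation; this gives a good sanity check before carrying out the syntactic derivation.

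The main obstacle I expect is handling the cross-terms that arise when multiple inputs of the monoid are ``non-unit''. Because the monoid kills $\ket{11}$ (its interpretation has a $0$ in the bottom-right), the naive pushing of sums through the monoid produces extra terms that must be shown to vanish. These terms are exactly what Lemma \ref{lem:special-monoid-gadget} and the hopf-type identities (Lemma \ref{lem:hopf}) handle, so the technical heart of the argument is invoking these at the right points. A secondary bookkeeping obstacle is the factor of $\frac12$ arising each time the monoid is unfolded; this must be tracked carefully through the inductive step so that the overall scalar matches on both sides.

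Finally, to close the induction, once the binary case is proven and the $(n-1)$-ary case is assumed, I would apply the binary case at the topmost monoid node, use the induction hypothesis on the sub-monoid of arity $n-1$, and then reassemble the resulting diagram using the generalized monoid definition in reverse. The use of Proposition \ref{prop:pseudo-SQMDD} is not needed here, since we are reasoning at the level of the pure monoid combinator, not its SQMDD-form variants; the lemma is a structural identity that will later feed into the proofs of Propositions \ref{prop:SQMDD-swap}, \ref{prop:SQMDD-gn-2-1} and \ref{prop:SQMDD-gn-1-0}, where sums of subtrees naturally arise.
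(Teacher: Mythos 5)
The paper does not derive this lemma inside the ZH-calculus at all: its entire ``proof'' is the citation \cite[Prop.~5.10]{backens2021ZHcompleteness}. Your attempt is therefore, by construction, a different route. The outer shell of your plan — reducing the $n$-ary statement to the binary case by induction, using Lemmas \ref{lem:monoid-associativity}, \ref{lem:monoid-commutativity} and \ref{lem:n-monoid-decomp} to peel off one binary monoid at a time — is unobjectionable, and the scalar bookkeeping you flag is a real but minor concern.

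The genuine gap is that the binary case, which is the entire mathematical content of the lemma, is never actually derived. You propose to chain Lemmas \ref{lem:gn-through-gadget}, \ref{lem:diagonal-distribution-gadget}, \ref{lem:diagonal-distribution-monoid} and \ref{lem:bialgebra-monoid-gn}, and to dispose of the resulting ``cross-terms'' with Lemma \ref{lem:special-monoid-gadget} and a Hopf identity. But every one of those lemmas expresses copying/distribution behaviour of Z-spiders and diagonal maps through the monoid and the gadget; none of them carries any additive information about H-box labels. A syntactic identity asserting that the monoid implements a \emph{sum} cannot follow from multiplicative/copying facts alone: somewhere the derivation must invoke the average rule $(a)$ (and typically the ortho rule $(o)$ and unit rule $(u)$), which are the only axioms of Figure \ref{fig:ZH-rules} in which labels combine additively — and indeed the proof of \cite[Prop.~5.10]{backens2021ZHcompleteness} goes through exactly these rules. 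Your plan never mentions them. The appeal to $\interp{\cdot}$ to see that the cross-terms ``must vanish'' is a soundness check, not a proof: the lemma is a claim of the form $\operatorname{ZH}\vdash D_1=D_2$, so the offending terms have to be eliminated by explicit rewrites. As written, the proposal is a plausible-sounding plan whose crucial step is missing; to repair it you would either have to carry out the binary-case derivation in full (essentially re-proving the cited proposition) or simply import it by citation, as the paper does.
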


\begin{lemma}
\label{lem:ctrl-2-on-0-box}
\def\fig{ctrl-2-on-triangle-H}
\begin{align*}

\eq{}\input{./figures/\fig/\fig_07.tikz}
\end{align*}
\end{lemma}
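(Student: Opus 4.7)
The plan is to follow the same template that governs the neighbouring lemmas (in particular Lemma \ref{lem:projector-on-gadget-legs} and Lemma \ref{lem:0-box-under-monoid}), in which a gadget/H-box interaction is resolved by expanding the triangle projector into its H-box form, pushing everything through a bialgebra, and then collecting the resulting scalars. Concretely, I would first apply Lemma \ref{lem:0-box} on the triangle, replacing it with its $H$-box of parameter $0$ presentation at the cost of a global factor $\tfrac12$. This exposes an H-box against the controlled-$2$ structure, which is the configuration the algebraic rules of the ZH-calculus are tuned to handle.

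Next, I would apply bialgebra $(ba_1)$ and/or $(ba_2)$ between the Z-spider of the controlled gate and the newly exposed H-spider. The price is an extra scalar (usually a factor of $2$ coming from a closed $g_n^0$ loop), which can be tracked via Lemma \ref{lem:gn-0-0}. After the bialgebra step, one typically gets a configuration in which an H-box sits between two copies of a classical state; here I would use Lemma \ref{lem:classical-on-H-box} to collapse those legs, and Lemma \ref{lem:Not-through-H-box} (or Lemma \ref{lem:CNOT-on-H-legs}) to slide any residual NOT through the remaining H-boxes. Finally, fusing the Z-spiders via $(zs)$ and using $(hh)$ to cancel adjacent Hadamards should deliver the right-hand side, with the $\tfrac12$ introduced at the start being absorbed into the factor of $2$ produced by the bialgebra step.

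The main obstacle, as usual in these gadget/bialgebra proofs, will be bookkeeping the scalars: the initial $\tfrac12$ from Lemma \ref{lem:0-box}, the $\tfrac14$ or $\tfrac12$ contributions from each use of $(ba_1)$/$(ba_2)$, and the compensating factor $2$ coming from Lemma \ref{lem:gn-0-0} need to balance so that the overall coefficient is $1$ in the target diagram. A secondary difficulty is choosing which bialgebra direction to apply first: the paper's analogous proofs (see for instance the derivations in Lemma \ref{lem:projector-on-gadget-legs} and Lemma \ref{lem:projector-through-gadget}) suggest one should push the $H$-box inward before attempting any $(zs)$ fusion, otherwise Z-spider legs get merged prematurely and block the bialgebra step. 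I would therefore structure the seven-step chain as \emph{expand triangle} $\to$ $(ba_1)$ $\to$ $(ba_2)$ $\to$ \emph{Not-through-H / CNOT-through-H} $\to$ \emph{classical-on-H} $\to$ $(zs)$ $\to$ \emph{scalar cleanup via Lemma \ref{lem:gn-0-0}}, mirroring the pattern used for Lemma \ref{lem:0-box-under-monoid}.
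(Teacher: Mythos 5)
Your proposal is a plausible-sounding template, but it misses the actual key step and as structured it would stall. The defining feature of the left-hand side of Lemma \ref{lem:ctrl-2-on-0-box} is the \emph{controlled-2} structure, i.e.\ an H-box carrying the parameter $2$. The bialgebra axioms $(ba_1)$ and $(ba_2)$ only fire against H-boxes with the default parameter $-1$, so before any bialgebra step can be applied across that box you must first decompose it; the paper does this as its very first move via Lemma \ref{lem:2-box}, which is precisely the decomposition of the parameter-$2$ H-box and is what produces the initial $\frac14$. Your plan never mentions this box or Lemma \ref{lem:2-box} at all, and instead opens by expanding the triangle via Lemma \ref{lem:0-box} --- the paper does the opposite: it keeps the triangle intact throughout and only invokes Lemma \ref{lem:0-box} in the \emph{last} step, to reassemble the $H(0)$ box into the triangle on the right-hand side while absorbing the residual $\frac12$.

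Beyond that, the paper's chain relies on two ingredients you do not invoke: Lemma \ref{lem:bialgebra-neg} and, crucially, the axiom $(o)$, which is where the scalar $\frac14$ becomes $\frac12$; your scalar bookkeeping instead leans on Lemma \ref{lem:gn-0-0} and Lemma \ref{lem:classical-on-H-box}, neither of which appears in the paper's derivation of this lemma. More generally, your proposal is written in the conditional throughout (``typically'', ``should deliver''), i.e.\ it is a heuristic plan rather than a checked derivation, and the heuristic is calibrated on the wrong neighbouring lemmas: this one is not a triangle-expansion argument but a parameter-$2$-H-box elimination argument. To repair it, start from Lemma \ref{lem:2-box}, then follow with Lemma \ref{lem:bialgebra-neg}, the rule $(o)$, Lemma \ref{lem:Not-through-H-box}, the two bialgebra rules together with $(zs)$ and $(id)$, and close with Lemma \ref{lem:0-box}.
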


\begin{lemma}
\label{lem:gadget-alternative}
\def\fig{gadget-alternative}
\begin{align*}

\eq{}\frac14\input{./figures/\fig/\fig_05.tikz}
\end{align*}
\end{lemma}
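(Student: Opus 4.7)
The plan is to start from the definition of the gadget $\tikzfig{gadget-upward}$ on the left-hand side and rewrite it into the claimed alternative form, carefully tracking scalars so that the $\frac12$ in the gadget's definition is augmented by exactly one more factor of $\frac12$ to give the announced $\frac14$. Since the gadget is defined as $\frac12\tikzfig{gadget-def_01}$, this extra $\frac12$ should come from a single application of Lemma~\ref{lem:0-box} (which turns a $0$-valued H-box into $\frac12$ times a simpler fragment), or from Lemma~\ref{lem:bialgebra-Z-AND}, both of which are the natural sources of a lone $\frac12$ inside this calculus.

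My step-by-step approach is as follows. First, I would unfold the definition of the gadget, so the left-hand side becomes $\frac12$ times a diagram built from an H-spider attached to a monoid/copy structure. Second, I would apply the bialgebra rules $(ba_1)$ and $(ba_2)$ (together with Lemma~\ref{lem:bialgebra-neg} if the target form involves X-spiders) to separate the central H-box from the branching, producing intermediate Z- and X-spiders that can then be recombined on the output side by the spider fusion rule $(zs)$. Third, I would use Lemma~\ref{lem:classical-on-H-box} and Lemma~\ref{lem:Not-through-H-box} to slide classical states and NOT gates across any H-boxes that appear, which is usually what lets one turn a gadget (whose symmetry is controlled by the top wire) into a more explicit superposition of a swap and an identity term. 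Fourth, I would reintroduce the missing $\frac12$ via Lemma~\ref{lem:0-box} by identifying a $0$-valued H-spider in the intermediate diagram, and finally simplify by $(hh)$ and $(id)$ to land on the stated right-hand side.

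The main obstacle I anticipate is scalar bookkeeping: each bialgebra application can create or absorb factors of $2$ via Lemma~\ref{lem:gn-0-0}, and several of the other lemmas already carry nontrivial scalars, so keeping the global coefficient at exactly $\frac14$ requires that every rewrite used be either scalar-free or paired with a compensating step. A secondary difficulty is structural: since the gadget already looks like a bialgebra-style combination of a copy node and an H-box, the rewriting tends to loop unless one commits early to which wire will play the role of the ``control'' and which will play the role of the ``data'' in the target diagram. Once that orientation is fixed, the rewrites are forced and the chain of equalities should close up, very much in the style of the derivations of Lemmas~\ref{lem:ket-0-gadget-ctrl} and~\ref{lem:swapped-gadget-legs}, which are the closest precedents for this kind of rearrangement of the gadget.
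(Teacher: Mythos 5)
Your overall instinct is right, but the concrete plan you commit to diverges from what actually closes the proof, and the divergence matters for the scalar. The paper's derivation is short and rigid: unfold the definition of the gadget (exposing the $\frac12$), then apply Lemma~\ref{lem:bialgebra-Z-AND} \emph{once}, directly to the Z-spider sitting on the H-box --- this single step is both the structural rearrangement and the source of the final $\frac14$ (it is the $\frac1{2^n}$ on the right-hand side of that lemma, with $n=2$ here, not ``the gadget's $\frac12$ times one more $\frac12$ found elsewhere''). After a deformation, one application of Lemma~\ref{lem:Not-through-H-box} and $(id)$ finish the job. You do name Lemma~\ref{lem:bialgebra-Z-AND} as a candidate, which is the correct identification, but your actual step-by-step plan routes around it: you propose to do the restructuring with $(ba_1)$/$(ba_2)$, Lemma~\ref{lem:bialgebra-neg} and Lemma~\ref{lem:classical-on-H-box}, and then to recover the missing $\frac12$ by ``identifying a $0$-valued H-spider'' and invoking Lemma~\ref{lem:0-box}.

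That last step is the genuine gap: there is no $0$-valued H-box anywhere in this derivation, so Lemma~\ref{lem:0-box} has nothing to act on, and the plain bialgebra rules $(ba_1)$/$(ba_2)$ will not by themselves produce the AND-gate-with-NOT structure of the target diagram --- that is precisely what Lemma~\ref{lem:bialgebra-Z-AND} exists for. Likewise Lemma~\ref{lem:classical-on-H-box} and $(hh)$ play no role here. If you simply promote your ``or from Lemma~\ref{lem:bialgebra-Z-AND}'' aside into the main line of the argument and drop the $0$-box detour, your sketch collapses onto the paper's proof; as written, the committed version would stall at the point where you go looking for a $0$-valued H-spider that is not there.
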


\begin{lemma}
\label{lem:projector-destroys-H-box}
\def\fig{gn-1-0-on-NF-destroy-H-spider}
\begin{align*}

\eq{}
\end{align*}
\end{lemma}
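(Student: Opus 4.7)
The plan is to prove Lemma \ref{lem:projector-destroys-H-box} by unfolding the gadget and normal form structure on the left-hand side so that the H-spider whose destruction we claim is visible, then pushing the $\tikzfig{gn-1-0}$ through the diagram to reveal classical-state interactions that trivialise the H-spider.

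First, I would expand the definition of the gadget using $\tikzfigc{00}~:=~\frac12\tikzfigc{01}$ (fig.~\texttt{gadget-def}) so that the H-spider implicit in the normal form gadget appears explicitly, together with its controlling Z-spider, the monoid and the relevant connecting wires. I would also rewrite the monoid using its definition so that all underlying H-spiders are visible. After this preliminary step, the left-hand side becomes a plain ZH-diagram containing the $\tikzfig{gn-1-0}$ projector attached to one leg of an H-spider, possibly composed with further Z-spiders.

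Next, I would commute $\tikzfig{gn-1-0}$ through the Z-spider layer using the spider fusion rule $(zs)$ and bialgebra rules $(ba_1)$, $(ba_2)$, so that the projector either arrives at a classical input of the H-spider or splits into a sum of classical projectors $\bra 0$ and $\bra 1$. At this point the key lemma to invoke is Lemma \ref{lem:classical-on-H-box}, which shows how $\ket 0$ and $\ket 1$ act on an H-spider: the $\bra 0$-branch reduces the H-spider parameter via Lemma \ref{lem:0-box} (turning the H-spider into a plain Z-spider up to a $\frac12$ scalar), while the $\bra 1$-branch simply disconnects that leg via Lemma \ref{lem:1-box}. Combining these contributions and simplifying with $(hh)$, $(hs)$ and possibly Lemmas \ref{lem:X-spider} and \ref{lem:bialgebra-neg}, the H-spider disappears and leaves the simpler diagram on the right-hand side.

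Finally I would verify that all introduced scalars ($\frac12$ from the gadget expansion, $\frac12$ from Lemma \ref{lem:0-box}, possibly a $\frac1{2^n}$ from Lemma \ref{lem:bialgebra-Z-AND} if it is needed) cancel against the ones already present in the normal form, so that the resulting diagram matches the claimed right-hand side exactly. The main obstacle I expect is the scalar bookkeeping: the combination of the gadget factor, the H-spider evaluation lemmas, and the bialgebra rules produce several powers of $\frac12$, and making sure they collapse to the right overall coefficient, while simultaneously verifying that the surrounding structure (Z-spiders, monoids, remaining wires) lines up with the target diagram, is where the computation becomes delicate. Once the scalars agree, the rest is a routine application of the ZH rewrite system.
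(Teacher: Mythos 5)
Your endgame is the right one---the paper also finishes by landing a classical state on the H-box and invoking Lemmas \ref{lem:classical-on-H-box} and \ref{lem:0-box}---but the route you describe for getting there has a genuine hole. The pivotal move in your sketch is to let the projector ``split into a sum of classical projectors $\bra{0}$ and $\bra{1}$'' and then ``combine these contributions''. That is not a rewrite available in the ZH equational theory as it is used in this paper: there is no formal sum of diagrams in the syntax, and arguing branch-by-branch and then recombining is semantic reasoning, which (short of appealing to completeness, which would defeat the purpose of a syntactic derivation) does not establish $\operatorname{ZH}\vdash$. The paper's proof never branches: it unfolds the $0$-labelled H-box via Lemma \ref{lem:0-box} (picking up a factor $\frac12$), then uses $(hs)$ together with $(ba_2)$, and then $(ba_1)$, to transport the resulting classical structure onto the H-spider, and closes with Lemmas \ref{lem:classical-on-H-box} and \ref{lem:0-box}. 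Four steps, no case split, and no need for Lemmas \ref{lem:1-box}, \ref{lem:X-spider} or \ref{lem:bialgebra-neg}.

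Beyond that, the proposal is too underdetermined to check as a proof: you never commit to a concrete sequence of rewrites, you explicitly leave the scalar bookkeeping unresolved, and your opening step---unfolding the gadget and the monoid so that ``the H-spider implicit in the normal form gadget appears''---is aimed at the wrong target. In this lemma the H-box to be destroyed is already explicit on the left-hand side (it is a weight H-box of the normal form), and the object doing the destroying is the $0$-box ``projector'', which is exactly what Lemma \ref{lem:0-box} unfolds; the gadget and the monoid do not enter this derivation at all (they are handled in the neighbouring lemmas such as Lemma \ref{lem:projector-on-gadget-legs}). As written, this is a plausible plan sharing the paper's final ingredients, but the middle of the argument is either missing or rests on an illegitimate ``sum of branches'' step.
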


\begin{lemma}
\label{lem:projector-remove-not-edge}
\def\fig{gn-1-0-on-NF-remove-not-edge}
\begin{align*}

\eq{}
\end{align*}
\end{lemma}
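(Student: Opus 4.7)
The plan is to interpret the "not edge" in its underlying ZH form and then use the absorption properties of the $\bra{+}$-type projector (the gn-1-0 generator) to cause the NOT-weight to collapse. In SQMDD-form, a weight of $-1$ on an edge connecting the gadget/monoid structure to a child corresponds, after unfolding the syntactic sugar, to an H-spider (or equivalent X-spider configuration) carrying the sign. Once the weight is expressed through standard generators, the manipulation reduces to one the ZH equational theory can handle directly.

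The first step would be to push the NOT portion of the weighted edge through any intervening H-boxes inside the gadget, using Lemma \ref{lem:Not-through-H-box} together with Lemma \ref{lem:classical-on-H-box}. This brings the NOT into immediate contact with the projector. Because $\bra{+}$ is invariant under $X$, the NOT vanishes once the projector meets it; this is supported by Lemma \ref{lem:X-spider}. At this point the diagram is, up to scalar bookkeeping, the same as if the not edge had never been there.

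The second step is to simplify the residue exactly as in the proof of the preceding Lemma \ref{lem:projector-destroys-H-box}. Namely, one applies Lemmas \ref{lem:projector-on-gadget-legs}, \ref{lem:0-box-under-monoid} and \ref{lem:projector-distribution-monoid} to absorb any $0$-box or monoid fragments left behind by the elimination of the NOT, and Lemma \ref{lem:bialgebra-monoid-gn} to recombine the layer cleanly. The result should match the RHS, namely the SQMDD layer with the $-1$ weight deleted.

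The main obstacle I anticipate is the bookkeeping of global scalars: many of the available rules (\ref{lem:0-box}, \ref{lem:bialgebra-Z-AND}, \ref{lem:parallel-0-box}) introduce factors of $\tfrac12$, and the surrounding SQMDD-form already carries a $\tfrac12$ baked into the definition of $\def\fig{monoid-def}\tikzfigc{00}$. Ensuring the final factor is exactly $1$ — so that the RHS is genuinely the diagram with the not edge removed, not a scalar multiple of it — will require carefully matching every rewrite that is scalar-introducing with a compensating use of Lemma \ref{lem:hopf-gn-H} or \ref{lem:gn-0-0}. Beyond that scalar accounting, the argument is a routine transport of the NOT to the projector followed by the same cleanup used in Lemma \ref{lem:projector-destroys-H-box}.
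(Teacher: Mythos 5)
Your opening move is the right one --- the paper's proof also begins by pushing the NOT through the gadget's H-box via Lemma \ref{lem:Not-through-H-box} (together with the rule $(hs)$, which is where a transient factor $\frac12$ enters) --- but the mechanism you propose for actually eliminating the NOT does not match the configuration of the diagram. The NOT does not sit on the output leg that $\bra{+}$ is plugged into, so it cannot simply be slid onto the projector and cancelled by ``$\bra{+}$ is invariant under $X$''; Lemma \ref{lem:Not-through-H-box} does not relocate the NOT onto the $\bra{+}$ wire, it merges it into the H-box. The point of the paper's argument is that after this merge one no longer needs to track the NOT at all: the immediately preceding Lemma \ref{lem:projector-destroys-H-box} lets $\bra{+}$ annihilate the \emph{entire} H-box, NOT included, and the derivation closes with $(zs)$, $(id)$ and $(hs)$, the last of which reabsorbs the $\frac12$. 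In particular your worry about scalar bookkeeping is resolved without any appeal to Lemma \ref{lem:hopf-gn-H} or Lemma \ref{lem:gn-0-0}.

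The second genuine problem is your cleanup step. You announce that you will ``simplify the residue exactly as in the proof of Lemma \ref{lem:projector-destroys-H-box}'' but then cite Lemmas \ref{lem:projector-on-gadget-legs}, \ref{lem:0-box-under-monoid}, \ref{lem:projector-distribution-monoid} and \ref{lem:bialgebra-monoid-gn}. None of these occurs in that proof, and the first three are statements about the triangle gadget of Section \ref{sec:reduction-strategy} (the $H(0)$-based construction used in Proposition \ref{prop:pseudo-SQMDD}), not about the effect $\bra{+}$; the overloading of the word ``projector'' in the lemma labels appears to have sent you to the wrong toolbox. Replacing your two steps by: $(hs)$ plus Lemma \ref{lem:Not-through-H-box}, then Lemma \ref{lem:projector-destroys-H-box} plus $(zs)$, then $(id)$ plus $(hs)$, yields exactly the paper's three-line derivation.
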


\begin{lemma}
\label{lem:average-a-1}
\def\fig{average-a-1}
\begin{align*}

\eq{}2
\end{align*}
\end{lemma}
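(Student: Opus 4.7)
The plan is to read this lemma as an instance of the average rule $(a)$ of the ZH-calculus applied in a particular configuration: combining an H-spider carrying parameter $a$ with one carrying the trivial parameter $1$. Since the naive application of the average rule merges H-spider parameters $r$ and $s$ into $\frac{r+s}{2}$ modulo a scalar factor, setting $s = 1$ should account both for a nontrivial interaction with the parameter $a$ and for the isolated global factor of $2$ on the right-hand side.

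My concrete strategy would be to start from the left-hand side and first use the bialgebra rules $(ba_1)$ and $(ba_2)$, together with H-involution $(hh)$, to reshape the diagram so that the H-spider with parameter $a$ sits alongside a disguised H-spider with parameter $1$. The trivial H-spider can be introduced or exposed by appealing to Lemma \ref{lem:1-box}, which characterizes parameter-$1$ H-boxes, and to Lemma \ref{lem:classical-on-H-box} for any interaction with classical states that may appear after bialgebra expansions. Once the configuration matches the left-hand side of rule $(a)$, I would invoke the average rule directly, which yields the combined H-spider carrying the intended parameter together with its associated scalar.

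The final step is a cleanup: fuse any residual Z-spiders via $(zs)$, remove trivial identity wires via $(id)$, and, if needed, convert between gadget/monoid notation using the earlier Lemmas \ref{lem:diagonal-distribution-monoid} and \ref{lem:gn-through-gadget} so that the diagram matches the claimed right-hand side syntactically, not just semantically.

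The main obstacle will be tracking the overall scalar precisely. Throughout this paper's proofs, applications of $(ba_1)$, $(ba_2)$, $(hs)$, Lemma \ref{lem:0-box}, and Lemma \ref{lem:gn-0-0} introduce intermediate prefactors such as $\frac12$, $\frac14$, and $2$ that must reassemble into exactly the $2$ that appears on the right. A careful layerwise accounting, analogous to the bookkeeping carried out in the proofs of Lemmas \ref{lem:projector-through-gadget} and \ref{lem:bialgebra-rn-gadget}, should make the scalars cancel cleanly; this is where I expect most of the effort to be spent, rather than in the structural manipulation of the diagram itself.
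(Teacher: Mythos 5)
Your proposal matches the paper's proof: the paper's derivation is exactly a reshaping step (via Lemma~\ref{lem:bialgebra-neg} and $(zs)$), an application of Lemma~\ref{lem:1-box} to expose the parameter-$1$ H-box, and then the average rule $(a)$ with $s=1$, which is precisely where the factor $2$ arises, as you identified. The only divergence is that you anticipate delicate scalar bookkeeping, whereas the paper's three-step derivation carries no intermediate prefactors at all --- the $2$ appears only at the final application of $(a)$.
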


\begin{lemma}
\label{lem:gn-1-0-on-NF-disconnect}
\def\fig{gn-1-0-on-NF-disconnect}
\begin{align*}

\eq{}2\input{./figures/\fig/\fig_07.tikz}
\end{align*}
\end{lemma}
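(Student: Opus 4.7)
The plan is to prove Lemma \ref{lem:gn-1-0-on-NF-disconnect} by chaining together the preceding technical lemmas in this subsection, which were specifically tailored for this purpose. The left-hand side is a $\tikzfig{gn-1-0}$ (Z-spider projector) applied on top of a characteristic fragment of an SQMDD-form diagram: the root gadget together with its monoid, an H-box on the weight wires, and possibly a NOT-edge; the right-hand side is the same fragment with the projected portion disconnected and a scalar factor of $2$ peeled off. That factor is the signature of Lemma \ref{lem:average-a-1}, so the proof essentially consists of reducing the diagram until that averaging lemma can be applied.

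First I would push the $\tikzfig{gn-1-0}$ down through the root layer using bialgebra $(ba_1)$ together with Lemma \ref{lem:bialgebra-rn-gadget}, so that the projector splits and flows onto the two branches produced by the gadget. Next, whichever branch carries an H-box on its weight wire would be simplified via Lemma \ref{lem:projector-destroys-H-box}, and any NOT-edge that appears between the projector and the children would be removed with Lemma \ref{lem:projector-remove-not-edge}. At this stage, the diagram should consist of two copies of a common subdiagram joined through the monoid, which is exactly the hypothesis of Lemma \ref{lem:average-a-1}. Invoking it yields the required factor of $2$ and collapses the parallel branches into a single strand, producing the disconnected target.

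To finish, I would clean up using $(zs)$, $(id)$ and, where the isolated scalar components appear, Lemmas \ref{lem:parallel-0-box} or \ref{lem:projector-absorbed-by-monoid} to match the exact form of the right-hand side. The step-by-step flavour is very much the same as the proof of Lemma \ref{lem:projector-through-gadget} already in the appendix, so I would model the presentation on that derivation.

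The main obstacle will be scalar bookkeeping: each use of $(ba_1)$, of Lemma \ref{lem:0-box} (via gadget unfolding), and of the H-spider definitions injects a fractional prefactor, and only after the averaging step and the final cleanup do these combine into a single factor of $2$. A secondary subtlety is ensuring that when the projector splits, one genuinely obtains the symmetric configuration required by Lemma \ref{lem:average-a-1}; if the branches turn out not to match syntactically, an intermediate application of Lemma \ref{lem:diagonal-distribution-monoid} or Lemma \ref{lem:gn-through-gadget} may be needed to equate them before the averaging step can fire.
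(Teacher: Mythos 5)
Your high-level plan shares two ingredients with the paper's proof --- you start by clearing NOT-edges with Lemma \ref{lem:projector-remove-not-edge}, and you correctly identify that the factor of $2$ ultimately comes from an averaging step (Lemma \ref{lem:average-a-1} indeed appears in the paper's derivation). But as written the proposal has a genuine gap in the two places where the actual work happens.

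First, the disconnection itself. In the paper, after removing the NOT-edges the projector is \emph{not} simply ``split onto the two branches'' via $(ba_1)$ and Lemma \ref{lem:bialgebra-rn-gadget}; instead the configuration is reshaped with Lemma \ref{lem:bialgebra-neg}, $(hs)$ and Lemma \ref{lem:bialgebra-Z-AND} into a pattern of H-boxes on which the ortho rule $(o)$ can fire, and then Lemmas \ref{lem:0-box} and \ref{lem:ctrl-2-on-0-box} collapse the resulting controlled structure. It is this $(o)$-step, not Lemma \ref{lem:average-a-1}, that severs the connection between the projected output and the rest of the layer. Your sketch never produces a configuration where anything disconnects; reducing to ``two copies of a common subdiagram joined through the monoid'' and invoking Lemma \ref{lem:average-a-1} only accounts for the scalar, not for the topological change the lemma asserts.

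Second, the arity of the monoid. The children in the bottom layer may have arbitrarily many parents, so the $\tikzfig{ket-0}$-like residue produced by the projector must be propagated through an $n$-ary monoid. The paper handles this with a separate auxiliary claim proved by induction on $n$, whose base case is Lemma \ref{lem:average-a-1} and whose inductive step uses Lemmas \ref{lem:0-box-under-monoid}, \ref{lem:bialgebra-monoid-gn} and \ref{lem:diagonal-distribution-monoid}. You mention Lemma \ref{lem:diagonal-distribution-monoid} only as an optional ``patch if the branches don't match'', but without the induction your argument only covers the unary case. To repair the proposal you would need to (i) make the $(o)$/\ref{lem:ctrl-2-on-0-box} disconnection step explicit, and (ii) promote the monoid-propagation remark to a genuine induction on the number of input wires.
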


\begin{lemma}
\label{lem:comonoid-alternative}
\def\fig{comonoid-alternative-2}
\begin{align*}
\frac12
\eq{}\input{./figures/\fig/\fig_05.tikz}
\end{align*}
\end{lemma}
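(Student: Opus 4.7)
The plan is to expand the left-hand side using the definitions of the comonoid (the vertical reflection of the monoid $\tikzfigc{00}$ from \texttt{monoid-def}), which by construction already carries a factor of $\frac{1}{2}$. Combined with the $\frac{1}{2}$ in front of the statement, this produces an initial scalar of $\frac{1}{4}$ together with a diagram built from an H-spider, a Z-spider, and the associated caps or bent wires.

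From there I would follow the same overall template used for the gadget reformulation in Lemma \ref{lem:gadget-alternative}, which targets a structurally analogous situation on the monoid side. Concretely, I would first push the H-spider through the Z-structure using the bialgebra rule $(ba_1)$ and Lemma \ref{lem:Not-through-H-box}, then apply $(hh)$ to eliminate a doubled H-spider, and finally use a spider-fusion step $(zs)$ together with $(id)$ to reshape the result. Along the way, Lemma \ref{lem:0-box} can be invoked to absorb or introduce factors of $\frac{1}{2}$ as needed, so that the remaining scalar matches the shape of the right-hand side.

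The main obstacle will be bookkeeping of the scalar: the combined $\frac{1}{4}$ has to be cancelled exactly, with no residue, by the scalars generated from $(ba_1)$, Lemma \ref{lem:0-box}, and the various spider fusions. A secondary subtlety is that the comonoid is oriented oppositely from the monoid, so when reusing the manipulations from Lemma \ref{lem:gadget-alternative} one must reverse the direction of each rewrite that is not symmetric (for instance, caps become cups), and ensure that bialgebra applications which depend on input/output direction are applied consistently with the comonoid's orientation. Once the factors and orientations are reconciled, the remainder should amount to a direct rewrite chain of the form one finds in the proofs of Lemmas \ref{lem:gadget-alternative} and \ref{lem:projector-on-gadget-legs}.
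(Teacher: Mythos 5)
There is a genuine gap here: what you have written is a strategy sketch, not a proof. You explicitly defer the two points on which the lemma actually turns --- whether the scalar $\tfrac14$ you generate at the outset is cancelled ``exactly, with no residue,'' and whether the orientation-reversed rewrites remain valid --- and you never exhibit the rewrite chain that would settle either question. In the ZH-calculus these are precisely the places where such arguments fail, so asserting that ``the remainder should amount to a direct rewrite chain'' does not discharge the obligation. Moreover, the analogy you lean on is shakier than you suggest: the proof of Lemma \ref{lem:gadget-alternative} pivots on Lemma \ref{lem:bialgebra-Z-AND} to convert between a Z/AND configuration and the gadget, and that manoeuvre is not symmetric under vertically flipping the diagram, because the H-box generator and several of the cited lemmas (e.g.\ the bialgebra rules and Lemma \ref{lem:CNOT-on-H-legs}) are stated for a fixed input/output orientation. ``Reverse each rewrite that is not symmetric'' is exactly the step that needs to be carried out and checked, and it is absent.

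For comparison, the paper's proof does not go through Lemma \ref{lem:gadget-alternative} at all. It first reshapes the left-hand side using spider fusion together with the $(u)$ and $(m)$ rules, then applies Lemma \ref{lem:ctrl-2-on-0-box} (the interaction of a $2$-labelled control with the $0$-labelled H-box) and $(hs)$, which is where the factor of $2$ appears, and finishes with $(ba_1)$, Lemma \ref{lem:X-spider}, and Lemma \ref{lem:CNOT-on-H-legs}. None of these key lemmas appear in your plan, and conversely the lemmas you cite (\ref{lem:Not-through-H-box}, \ref{lem:0-box}, \ref{lem:projector-on-gadget-legs}) play no role in the paper's derivation. If you want to salvage your route, you would need to actually write out the flipped version of each step of Lemma \ref{lem:gadget-alternative}'s proof and verify it rule by rule; as it stands, the argument is an unexecuted plan resting on an unproved symmetry heuristic.
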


\end{multicols}

\begin{multicols}{2}

\begin{lemma}
\label{lem:swap-gadgets}
\def\fig{swapped-gadgets}
$$\eq{}\input{./figures/\fig/\fig_05.tikz}$$
\end{lemma}

\begin{lemma}
\label{lem:gn-2-1-gadgets}
$$\def\fig{gn-2-1-on-gadgets-control}\eq{}\def\fig{gn-2-1-on-gadgets-control-right}\eq{}$$
\end{lemma}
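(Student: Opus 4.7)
The statement combines two equalities. The first, between the left and middle diagrams, is a symmetry claim: the $g_n^{2\to 1}$ feeding the controls of two gadgets may be arranged as if the two gadgets were swapped. The second, equating the middle diagram to the right-hand side, is the substantive content and packages the merge of the two controls into a single compact gadget-form.

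For the first equality, the plan is to invoke Lemma \ref{lem:swap-gadgets}, which already establishes that two adjacent gadgets can exchange their roles. Because the $g_n^{2\to 1}$ sitting atop the two controls is a commutative Z-spider (symmetric under exchange of its two inputs via $(zs)$), composing this symmetry with the swap-gadgets lemma rewrites the left-hand diagram into the middle one essentially directly.

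For the second equality, I would start from the middle diagram and use Lemma \ref{lem:gn-through-gadget} to push the $g_n^{2\to 1}$ through the nearer gadget, turning the two control-Z-spiders into a single merged Z-spider (via $(zs)$) feeding a combined gadget structure. Then Lemma \ref{lem:diagonal-distribution-gadget} collapses the diagonal that remains on the data legs, and Lemmas \ref{lem:bialgebra-monoid-gadget} and \ref{lem:projector-on-gadget-legs} clean up the leftover monoid/gadget interactions. A final application of bialgebra rules $(ba_1)$/$(ba_2)$ should regroup the wires into the canonical right-hand shape with one gadget and the appropriate control.

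The main obstacle will be bookkeeping rather than invention: at each rewrite one must track which wires are the control legs of each gadget, which are the data outputs going to downstream SQMDD layers, and which are the inputs to $g_n^{2\to 1}$. In particular, when Lemma \ref{lem:gn-through-gadget} is applied, the direction (which child of the gadget the spider passes through) must match the current configuration, and the subsequent distribution must not inadvertently identify two wires that should remain distinct. Once the wire matchings are fixed, the rewrites themselves are routine applications of results already established earlier in the paper.
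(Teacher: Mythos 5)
There is a genuine gap: the lemmas you lean on for the substantive (second) equality act on the wrong part of the gadget. Lemmas \ref{lem:gn-through-gadget}, \ref{lem:diagonal-distribution-gadget}, \ref{lem:bialgebra-monoid-gadget} and \ref{lem:projector-on-gadget-legs} describe how a copy map, a diagonal, the W-like monoid, or the projector interact with the gadget through its data/output legs (or through the monoid collecting the parents); none of them tells you how the X-spider $2\to1$ passes through the gadget's \emph{control} wire, which is precisely what this lemma asserts. This lemma exists exactly because no earlier result covers that interaction, so ``push the $g_n^{2\to1}$ through the nearer gadget using Lemma \ref{lem:gn-through-gadget}'' in effect assumes the statement being proved. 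The missing idea, and the route the paper actually takes, is to first rewrite both gadgets into their H-box ``alternative'' presentation (Lemma \ref{lem:gadget-alternative}); in that form the X-spider on the controls can be eliminated by the generalised Z/AND bialgebra (Lemma \ref{lem:bialgebra-Z-AND}) together with the Hopf law between X-spiders and H-boxes (Lemma \ref{lem:hopf-gn-H}), plus Lemmas \ref{lem:Z-H-multiple-links} and \ref{lem:classical-on-H-box}, after which Lemma \ref{lem:gadget-alternative} is applied in reverse to refold the result into a single gadget.

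Your handling of the first equality is also off target: the two left-hand diagrams do not differ by commuting the two gadgets past one another, so Lemma \ref{lem:swap-gadgets} plus commutativity of the X-spider does not bridge them. They differ in which legs of the gadgets are wired where; the paper reduces the second configuration to the first using Lemma \ref{lem:swapped-gadget-legs} (exchanging a gadget's legs at the cost of negating its control) followed by Lemma \ref{lem:bialgebra-neg} to absorb the resulting NOT into the X-spider, and then reuses the first computation.
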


\end{multicols}

\begin{proof}[Proof of Lemmas \ref{lem:monoid-sum}, \ref{lem:ctrl-2-on-0-box}, \ref{lem:gadget-alternative}, \ref{lem:projector-destroys-H-box}, \ref{lem:projector-remove-not-edge}, \ref{lem:average-a-1}, \ref{lem:gn-1-0-on-NF-disconnect}, \ref{lem:comonoid-alternative}, \ref{lem:swap-gadgets}, \ref{lem:gn-2-1-gadgets}]
~

$\bullet$ \ref{lem:monoid-sum}: \cite[Prop. 5.10]{backens2021ZHcompleteness}

$\bullet$ \ref{lem:ctrl-2-on-0-box}:
\def\fig{ctrl-2-on-triangle-H}
\begin{align*}

\eq{\ref{lem:2-box}}\frac14
\eq{\ref{lem:bialgebra-neg}}\frac14
\eq{(o)}\frac12
\eq{\ref{lem:Not-through-H-box}}\frac12\\
\eq{(ba_2)}\frac12\input{./figures/\fig/\fig_05.tikz}
\eq{(ba_1)\\(zs)\\(id)}\frac12\input{./figures/\fig/\fig_06.tikz}
\eq{\ref{lem:0-box}}\input{./figures/\fig/\fig_07.tikz}
\end{align*}

$\bullet$ \ref{lem:gadget-alternative}:
\def\fig{gadget-alternative}
\begin{align*}

\eq{}\frac12
\eq{\ref{lem:bialgebra-Z-AND}}\frac14
\eq{}\frac14\\
\eq{\ref{lem:Not-through-H-box}}\frac14
\eq{(id)}\frac14\input{./figures/\fig/\fig_05.tikz}
\end{align*}

$\bullet$ \ref{lem:projector-destroys-H-box}:
\def\fig{gn-1-0-on-NF-destroy-H-spider}
\begin{align*}

\eq{\ref{lem:0-box}}\frac12
\eq{(hs)\\(ba_2)}\frac14
\eq{(ba_1)}\frac12
\eq{\ref{lem:classical-on-H-box}\\\ref{lem:0-box}}
\end{align*}

$\bullet$ \ref{lem:projector-remove-not-edge}:
\def\fig{gn-1-0-on-NF-remove-not-edge}
\begin{align*}

\eq{(hs)\\\ref{lem:Not-through-H-box}}\frac12
\eq{\ref{lem:projector-destroys-H-box}\\(zs)}\frac12
\eq{(id)\\(hs)}
\end{align*}

$\bullet$ \ref{lem:average-a-1}:
\def\fig{average-a-1}
\begin{align*}

\eq{\ref{lem:bialgebra-neg}\\(zs)}
\eq{\ref{lem:1-box}}
\eq{(a)}2
\end{align*}

$\bullet$ \ref{lem:gn-1-0-on-NF-disconnect}: First we show by induction on the number $n$ of input wires that:
\def\fig{gn-1-0-on-NF-0-case-n}$$\eq{}2$$
The case $n=1$ is given by:
\def\fig{gn-1-0-on-NF-0-case-1}
\begin{align*}

\eq{\ref{lem:average-a-1}\\(zs)}2
\eq{(zs)}2
\end{align*}
and the general case by:
\def\fig{gn-1-0-on-NF-0-case-n}
\begin{align*}

\eq{\ref{lem:0-box-under-monoid}}
\eq{\ref{lem:bialgebra-monoid-gn}}\\
\eq{ind.}2
\eq{\ref{lem:diagonal-distribution-monoid}}2
\end{align*}
Finally, we show the announced equality:
\def\fig{gn-1-0-on-NF-disconnect}
\begin{align*}
&
\eq{\ref{lem:projector-remove-not-edge}}
\eq{\ref{lem:bialgebra-neg}}
\eq{(hs)\\\ref{lem:bialgebra-Z-AND}}\frac12\\
&\eq{(o)}\frac14
\eq{\ref{lem:0-box}\\\ref{lem:ctrl-2-on-0-box}}\input{./figures/\fig/\fig_05.tikz}
\eq{sim.}\input{./figures/\fig/\fig_06.tikz}
\eq{prev.\\(a)\\(u)\\(zs)}2\input{./figures/\fig/\fig_07.tikz}
\end{align*}

$\bullet$ \ref{lem:comonoid-alternative}:
\def\fig{comonoid-alternative-2}
\begin{align*}

\eq{(zs)\\(u)\\(m)}
\eq{\ref{lem:ctrl-2-on-0-box}\\(hs)}2
\eq{(zs)\\(ba_1)\\\ref{lem:X-spider}}2
\eq{\ref{lem:CNOT-on-H-legs}}2
\eq{}2\input{./figures/\fig/\fig_05.tikz}
\end{align*}

$\bullet$ \ref{lem:swap-gadgets}:
\def\fig{swapped-gadgets}
\begin{align*}
&
\eq{\ref{lem:gadget-alternative}}\frac1{2^6}
\eq{\ref{lem:bialgebra-Z-AND}\\\ref{lem:bialgebra-neg}\\(zs)\\(hs)}\frac1{2^4}\\
&\eq{\ref{lem:bialgebra-neg}\\(zs)}\frac1{2^4}
\eq{\ref{lem:bialgebra-Z-AND}}\frac1{2^6}
\eq{\ref{lem:gadget-alternative}}\input{./figures/\fig/\fig_05.tikz}
\end{align*}

$\bullet$ \ref{lem:gn-2-1-gadgets}: First:
\def\fig{gn-2-1-on-gadgets-control-bis}
\begin{align*}

\eq{\ref{lem:gadget-alternative}}\frac1{2^4}
\eq{(zs)\\\ref{lem:bialgebra-Z-AND}\\(hs)\\\ref{lem:bialgebra-neg}}\frac1{2^3}
\eq{(hs)\\\ref{lem:hopf-gn-H}\\\ref{lem:Z-H-multiple-links}}\frac1{2^3}\\
\eq{\ref{lem:classical-on-H-box}\\(zs)}\frac1{2^2}
\eq{\ref{lem:gadget-alternative}}\input{./figures/\fig/\fig_05.tikz}
\end{align*}
Then:
\def\fig{gn-2-1-on-gadgets-control-right}
\begin{align*}

\eq{\ref{lem:swapped-gadget-legs}\\\ref{lem:bialgebra-neg}}
\eq{}
\eq{}
\eq{\ref{lem:swapped-gadget-legs}}
\end{align*}

\end{proof}

\begin{proof}[Proof of Proposition \ref{prop:SQMDD-swap}]
\phantomsection\label{prf:SQMDD-swap}
We use Lemma \ref{lem:swap-gadgets} to show how two layers interact when a swap is applied to their corresponding outputs:
\def\fig{swapped-SQMDD-legs}
\begin{align*}
\scalebox{0.65}{}
\eq[]{\ref{lem:bialgebra-monoid-gadget}}\scalebox{0.65}{}
\eq[]{\ref{lem:diagonal-distribution-gadget}}\scalebox{0.65}{}
\eq[]{\ref{lem:swap-gadgets}}\scalebox{0.65}{}
\end{align*}
The resulting diagram is in SQMDD form, and can be reduced further using the above strategy.
\end{proof}

\begin{proof}[Proof of Proposition \ref{prop:SQMDD-gn-2-1}]
\phantomsection\label{prf:SQMDD-gn-2-1}
If $\tikzfig{gn-2-1}$ is applied to two arbitrary outputs of $D$, using Proposition \ref{prop:SQMDD-swap}, we can swap the outputs so that $\tikzfig{gn-2-1}$ is now applied on the first two.

We are left with two cases, the first one being when the ``root gadget'' is linked to the same ``child gadget'', in which case we apply Lemma \ref{lem:bialgebra-monoid-gadget}:
$$\tikzfig{gn-2-1-on-gadgets-control-pre-fin}$$
The second case now encapsulates the first, and is dealt with as follows:
\def\fig{gn-2-1-on-gadgets-control-fin}
\begin{align*}

\eq{\ref{lem:diagonal-distribution-gadget}}
\eq{\ref{lem:gn-2-1-gadgets}}
\eq{(zs)\\(id)\\(ba_1)\\\ref{lem:classical-on-H-box}}
\end{align*}
Now using the rewrite strategy of Proposition \ref{prop:preserved-semantics}, the nodes $\tikzfig{ket-0}$ will trickle down the diagram, eventually getting absorbed by \def\fig{monoid-def}$$, and potentially destroying some gadgets as they do.
\end{proof}

\begin{proof}[Proof of Proposition \ref{prop:SQMDD-gn-1-0}]
\phantomsection\label{prf:SQMDD-gn-1-0}
Using Proposition \ref{prop:SQMDD-swap}, we may consider we only apply $\tikzfig{bra-plus}$ on the very last output. We notice that:
\def\fig{gn-1-0-on-NF-to-comonoid}
\begin{align*}

&\eq[]{\ref{lem:gn-through-gadget}\\(zs)}
\eq[]{}\frac1{2^n}\\
&\eq[]{\ref{lem:gn-1-0-on-NF-disconnect}}\frac2{2^n}
\eq[]{\ref{lem:comonoid-alternative}}2
\end{align*}
With this result, we can show how we turn the diagram into SQMDD form:
\def\fig{gn-1-0-on-NF-final-bis}
\begin{align*}

\eq{}2
\eq{\ref{lem:monoid-sum}}2
\end{align*}

\end{proof}

\end{document}